\documentclass[12pt]{article}

\newcommand{\blind}{0}

\usepackage[margin=1in]{geometry}

\usepackage{amsfonts}
\usepackage{amsgen,amsmath,amstext,amsbsy,amsopn,amssymb,subfigure, amsthm}
\usepackage{graphicx}
\usepackage{comment}
\usepackage{enumitem}
\usepackage{natbib}
\usepackage{booktabs}
\usepackage{blkarray}
\usepackage{algorithm}
\usepackage{algpseudocode}
\usepackage{url}
\usepackage{longtable}
\usepackage{mleftright}
\usepackage{multirow}
\usepackage[usenames]{color}

\usepackage{hyperref}
\hypersetup{
    colorlinks=true,
    linkcolor=blue,
    filecolor=blue,
    urlcolor=red,
    citecolor=blue,
}

\allowdisplaybreaks

\newtheorem{Example}{Example}  

\newtheorem{Theorem}{Theorem}
\newtheorem*{Theorem*}{Theorem}
\newtheorem{Lemma}{Lemma}
\newtheorem{Remark}{Remark}
\newtheorem{Corollary}{Corollary}
\newtheorem{Assumption}{Assumption}
\newtheorem*{Assumption*}{Assumption}
\newtheorem{Proposition}{Proposition}

\newcommand{\be}{\begin{equation}}
	\newcommand{\ee}{\end{equation}}
 \newcommand{\bs}{\begin{split}}
	\newcommand{\es}{\end{split}}
\newcommand{\bea}{\begin{eqnarray}}
	\newcommand{\eea}{\end{eqnarray}}
\newcommand{\beas}{\begin{eqnarray*}}
	\newcommand{\eeas}{\end{eqnarray*}}

\newcommand{\EE}{\mathbb{E}}

\DeclareMathAlphabet\mathbfcal{OMS}{cmsy}{b}{n}

\allowdisplaybreaks

\def\spacingset#1{\renewcommand{\baselinestretch}{#1}\small\normalsize}

\begin{document}
\spacingset{1}

\ifnum\blind=0\relax
{
  \title{\bfseries {RECAP}: Representation-Based Calibration with Amortized Pooling for Frequentist Simulation-Based Inference}
  \author{Runshi Tang\\
    Department of Statistics, Columbia University, USA\\
    and\\
    Genevera I. Allen\\
    Department of Statistics, Columbia University, USA}
  \date{}
  \maketitle
}
\else
{
  \bigskip
  \bigskip
  \bigskip
  \begin{center}
    {\LARGE\bfseries {RECAP}: Representation-Based Calibration with Amortized Pooling for Frequentist Simulation-Based Inference}
  \end{center}
  \medskip
}
\fi

\bigskip
\begin{abstract}
Simulation-based inference enables uncertainty quantification in broad contexts when a simulation model is available but when a test statistic's sampling distribution is not tractable.
But, frequentist calibration in this setting can be costly over high-dimensional parameter spaces.
In this paper, we introduce \texttt{RECAP} (REpresentation-based Calibration with Amortized Pooling), a general framework for frequentist simulation-based inference.
\texttt{RECAP} exploits low-dimensional structure in how the null distribution varies with parameters and is particularly useful when the parameter dimension is high or when different parameter configurations have redundant or compensating effects.
\texttt{RECAP} pools simulated scores according to a representation that preserves this distribution exactly or approximately, allowing distant parameter values to share calibration information.
Representations may be derived from model structure or learned independently.
Empirical quantiles of the pooled scores then yield tests and confidence sets, for which we establish theoretical coverage guarantees.
In the simulation studies, we show that \texttt{RECAP} has lower coverage error than existing approaches and power close to that of the point-wise oracle, while maintaining the computational benefits of amortized calibration.  
An application to population genomics illustrates learned low-dimensional calibration and interpretable uncertainty quantification.
\end{abstract}

\noindent
{\it Keywords:} 
simulation-based inference, sufficient dimension reduction, amortized calibration.

\vfill

\newpage
\spacingset{1.55}

\section{Introduction}
\label{sec:introduction}

Many scientific questions are studied using large computational simulators that describe how complex physical systems behave. For example, weather forecasting relies on numerical models of atmospheric dynamics \citep{bauer2015quiet}, while climate scientists use Earth system models that couple processes in the atmosphere, oceans, and land \citep{schneider2017earth}. 
In particle physics, simulations are used to model high-energy interactions and the resulting detector measurements \citep{cranmer2020frontier}. 
Although these simulators can produce predictions, they often do not provide an explicit, tractable likelihood for the observed data. This raises a fundamental statistical question: how can we use observations to learn about unknown model parameters and quantify the associated uncertainty?

When a simulator includes stochastic mechanisms, it implicitly defines a sampling distribution, even if the corresponding likelihood cannot be evaluated. This motivates \emph{simulation-based inference (SBI)}, which uses data simulated under different parameter values to perform statistical inference and quantify uncertainty \citep{digglegratton1984montecarlo}. Simulation-based methods are also useful when the likelihood is available but the sampling distribution of a test statistic is difficult to characterize. Although asymptotic theory can provide useful approximations, their accuracy may be limited in finite samples or for certain parameter configurations \citep{dalmasso2024lf2i,alkadhim2024alffi}.

We consider an observed dataset $X\sim P_{\theta^*}$, where $\theta^*\in\Theta\subseteq\mathbb R^d$ is fixed and unknown, and assume that independent datasets can be simulated from $P_\theta$ at any candidate $\theta\in\Theta$.
For each candidate $\theta\in\Theta$, we test $H_0:\theta^*=\theta$ using a statistic $T(X,\theta)$ with distribution $F_\theta(t)$, whose large values indicate disagreement with the candidate model.
Let $q_{\alpha,\theta}$ denote its $(1-\alpha)$ quantile under $P_\theta$, where $\alpha\in(0,1)$.
If $q_{\alpha,\theta}$ were known for all $\theta$, holding the observed data fixed and retaining all nonrejected candidates would yield $\mathcal C_\alpha(X)=\{\theta\in\Theta:T(X,\theta)\leq q_{\alpha,\theta}\}$.
This confidence set has coverage at least $1-\alpha$ at every true parameter value.

SBI concerns settings in which $q_{\alpha,\theta}$ is not analytically available.
Bayesian SBI, including approximate Bayesian computation and neural posterior estimation, approximates the posterior distribution of $\theta$ given $X$ under a specified prior \citep{Beaumont2002,cranmer2020frontier}.
However, posterior credibility at level $1-\alpha$ does not generally imply repeated-sampling coverage at that level for every $\theta$ or the prespecified $\theta^*$.
Frequentist SBI instead seeks estimates $\widehat q_{\alpha,\theta}$ such that $\widehat{\mathcal C}_\alpha(X)=\{\theta\in\Theta:T(X,\theta)\leq\widehat q_{\alpha,\theta}\}$ achieves fixed-parameter coverage $\mathbb P_{\theta^*}\{\theta^*\in\widehat{\mathcal C}_\alpha(X)\}\geq 1-\alpha$, or, more strongly, uniform coverage $\inf_{\theta\in\Theta}\mathbb P_\theta\{\theta\in\widehat{\mathcal C}_\alpha(X)\}\geq 1-\alpha$.

Frequentist SBI has classical roots in Monte Carlo testing \citep{dwass1957modified} and the bootstrap \citep{efron1979bootstrap}.
Bootstrap approximates $F_{\theta^*}$ through empirical or fitted resampling, whose validity depends on reproducing the dependence relevant to $T(X,\theta^*)$.
Standard i.i.d.\ resampling need not reproduce $F_{\theta^*}$ for structured observations, for example a matrix whose noise entries share a random scale, while specialized bootstrap schemes require additional structural assumptions \citep{kunsch1989jackknife}.

The parametric bootstrap instead simulates from a fitted model: given $\widehat\theta=\widehat\theta(X)$, it generates $X_b^*\sim P_{\widehat\theta}$, uses $T(X_b^*,\widehat\theta)$ to estimate its quantile $q_{\alpha,\widehat\theta}$, and approximates the target $q_{\alpha,\theta^*}$ by $q_{\alpha,\widehat\theta}$ \citep{davisonhinkley1997bootstrap}.
This approximation can be inaccurate when $F_\theta$ is sensitive to parameter estimation error or the problem is nonregular \citep{andrews2000inconsistency}, and increasing the simulation budget does not eliminate the discrepancy between $q_{\alpha,\widehat\theta}$ and $q_{\alpha,\theta^*}$.

Direct pointwise simulation avoids the plug-in approximation by generating $X_b^*\sim P_\theta$ separately at each candidate $\theta$ and obtaining $\widehat q_{\alpha,\theta}$ from the scores $T(X_b^*,\theta)$.
Inverting these valid point-null tests yields fixed-parameter coverage \citep{bolvikenskovlund1996confidence}, but repeating calibration across many candidate parameter values can be computationally expensive.

Modern frequentist SBI amortizes calibration using a common simulation bank $(\theta_i,T(X_i,\theta_i))$, with $X_i\sim P_{\theta_i}$.
\texttt{ACORE} and \texttt{WALDO}, within the \texttt{LF2I} framework \citep{dalmasso2020acore,dalmasso2024lf2i,masserano2023waldo}, and \texttt{ALFFI} \citep{alkadhim2024alffi} take a \emph{global} regression approach, estimating $\theta\mapsto q_{\alpha,\theta}$ or $(t,\theta)\mapsto F_\theta(t)$ over $\Theta$.
Their coverage guarantees depend on accurate estimation of these conditional quantities, which can be demanding when $d$ is large.
Alternatively, \citet{cabezas2026tmlr-conformal} proposed \texttt{TRUST} and \texttt{TRUST++} for \emph{local} calibration using tree- and forest-based partitions.
Specifically, they learn partitions by regressing $T(X_i,\theta_i)$ on $\theta_i$ and calibrate using scores pooled within tree leaves or according to forest proximity.
However, a mean-regression criterion targeting $\EE_\theta[T(X,\theta)]$ can miss directions affecting only the variance or tails of $F_\theta$, even when these determine $q_{\alpha,\theta}$.
This motivates learning a geometry that preserves $F_\theta$ rather than only its mean.

The main contribution of this work is a novel framework for frequentist SBI that exploits low-dimensional structure.
Specifically, we introduce \texttt{RECAP}---REpresentation-based Calibration with Amortized Pooling---based on the relation $F_\theta(t)=H_0(t,g(\theta))$ for some function $H_0$, or an approximation thereof, where $g:\Theta\to\mathbb R^s$ and $s\leq d$.
The representation $g(\theta)$ captures how the score distribution varies with $\theta$, allowing parameters far apart in the original space to share calibration information.
In \texttt{RECAP}, simulated scores are pooled by neighborhoods according to $g(\theta)$ and then $q_{\alpha,\theta}$ is estimated by an adjusted empirical quantile.

In many statistical models, an exact $g$ can be derived from the model structure. Examples are provided in Section~\ref{subsec_examples}. When $g$ is unknown, we propose to learn $g$ from independent training simulations using sufficient dimension reduction (SDR). SDR seeks a low-dimensional representation of the parameter that preserves the conditional distribution of the scores. We develop general coverage guarantees for \texttt{RECAP} and further specialize them to a concrete implementation that learns a linear representation using the \texttt{RMAVE} ensemble \citep{yin2011ensemble} from the SDR literature and calibrates with a moving-ball pooling rule.

Consider a motivating example, matrix denoising, which illustrates the benefits of this representation.
Let $Y=S+\tau W$, where $S\in\mathbb R^{20\times20}$ has rank two, $W$ has independent $N(0,1/20)$ entries, and $\tau$ is an independent random scale shared by all entries.
For the rank-two truncated-SVD estimator $\widehat S(Y)$, consider the statistic $T(Y,S)=\|\widehat S(Y)-S\|_{\mathrm F}$.
By orthogonal invariance, its null distribution depends on $S$ only through its two nonzero singular values, $g(S)=(\lambda_1(S),\lambda_2(S))$.
Thus, matrices far apart in Frobenius distance can have identical score distributions, enabling calibration in two dimensions. Figure~\ref{fig:intro-matrix-denoising} reports the mean absolute coverage error, mean power, and simulation cost for twelve fixed signal matrices, comparing our proposed framework with other comparators.
\texttt{RECAP} nearly matches the point-wise oracle in coverage and power and has substantially lower coverage error than the non-oracle comparators.
See Section~\ref{sec:simulation-matrix-denoising} for explanation of the comparators and other details.

\begin{figure}[htbp]
    \centering
    \subfigure[Coverage error.]{%
        \includegraphics[width=0.36\textwidth]{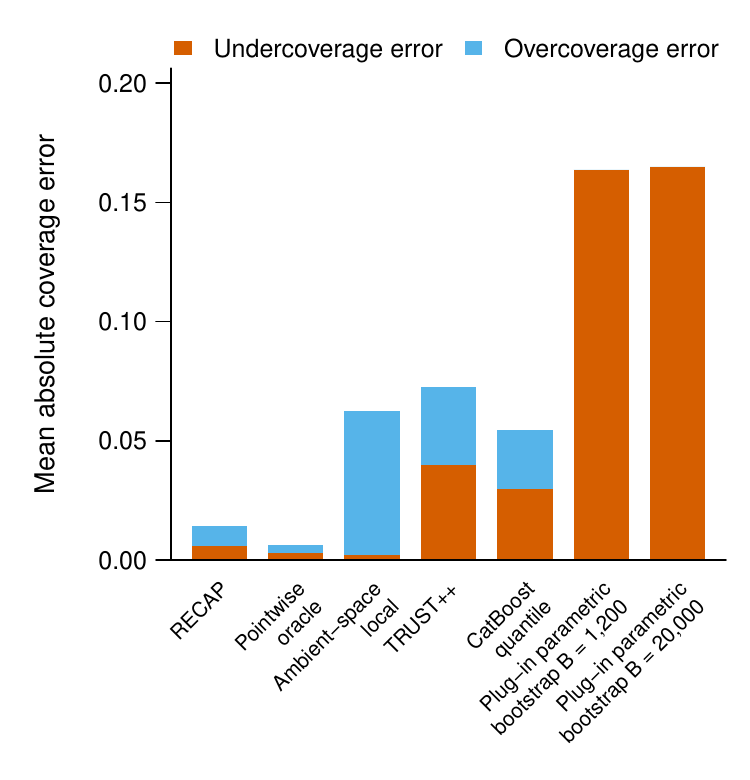}}%
    \hfill
    \subfigure[Mean power.]{%
        \includegraphics[width=0.24\textwidth]{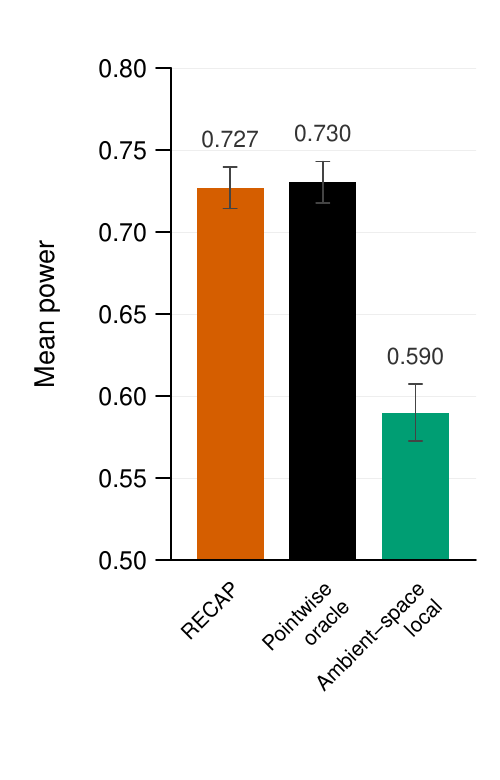}}%
    \hfill
    \subfigure[Simulation cost.]{%
        \includegraphics[width=0.36\textwidth]{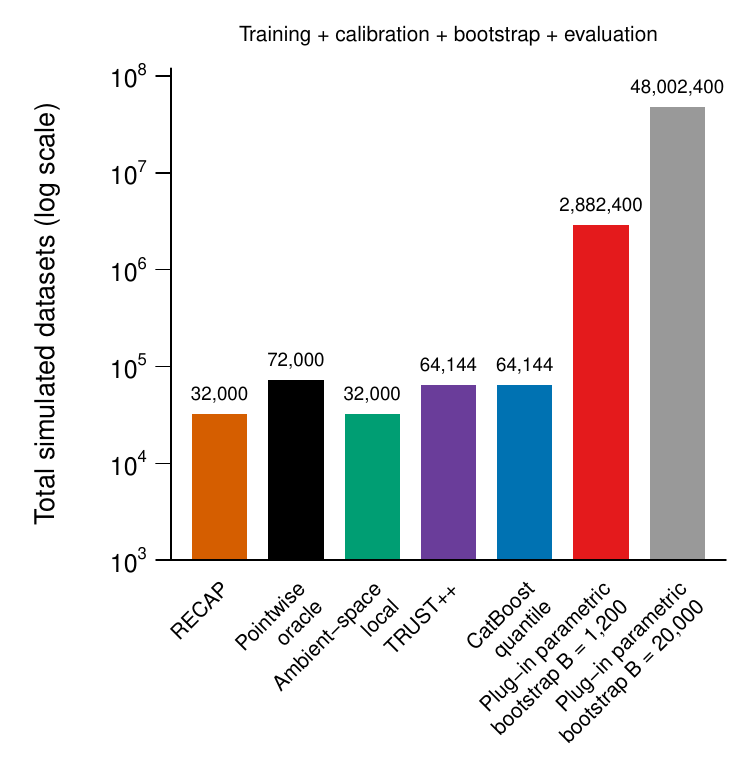}}
    \caption{Matrix denoising at nominal coverage $0.90$.
    (a) Mean absolute coverage error across twelve targets.
    (b) Mean power with $95\%$ intervals. Methods with high undercoverage error are excluded. 
    (c) Total simulated datasets. }
    \label{fig:intro-matrix-denoising}
\end{figure}

\subsection*{Organization}

The remainder of the paper is organized as follows. 
Section~\ref{sec_method} introduces the general \texttt{RECAP} framework and illustrates its use through examples.
Section~\ref{sec:statistical-theory} establishes coverage guarantees.
Section~\ref{sec:simulation-study} evaluates \texttt{RECAP} in simulation studies involving known, model-informed, and learned calibration representations.
Section~\ref{sec:real} applies the method to high-dimensional demographic inference in population genomics.
The final section discusses limitations and future directions.
The supplement provides additional examples, proofs, and implementation details.

\section{Method}
\label{sec_method}

\subsection{The general \texttt{RECAP} framework}
\label{subsec_recap_framework}

We begin by introducing the general \texttt{RECAP} framework. 
Let $X\sim P_\theta$, where $\theta\in\Theta\subseteq\mathbb R^d$, and assume that a simulator is available to generate independent realizations from $P_\theta$ for every $\theta\in\Theta$.
Let $\pi$ be a user-specified prior distribution on $\Theta$ used to sample simulation parameters.
Throughout, we use $\theta$ to denote a deterministic parameter value and $\vartheta$ to denote a random simulation parameter drawn from $\pi$.
Independently of $X$, generate $M$ independent simulation pairs $(\vartheta_i,X_i)$ according to
\[
\vartheta_i \overset{\mathrm{i.i.d.}}{\sim} \pi,
\qquad
X_i\mid\vartheta_i\sim P_{\vartheta_i},
\qquad i=1,\ldots,M.
\]
For a given real-valued test statistic $T(X,\theta)$, whose larger values indicate greater disagreement between $X$ and $\theta$, compute the scores $Z_i=T(X_i,\vartheta_i)$ and form the simulation bank $\{(\vartheta_i,Z_i)\}_{i=1}^M$.
Denote the null cumulative distribution function by $F_\theta(t):=\mathbb P_\theta\{T(X,\theta)\leq t\}$, which is also the conditional cumulative distribution function of $Z_i$ given $\vartheta_i=\theta$.
For $\alpha\in(0,1)$, define its $(1-\alpha)$ quantile as $q_{\alpha,\theta}:=\inf\{t\in\mathbb R:F_\theta(t)\geq1-\alpha\}$.

The aim of amortized frequentist SBI is to estimate the candidate-specific quantiles $q_{\alpha,\theta}$ across $\theta\in\Theta$ using the simulation bank $\{(\vartheta_i,Z_i)\}_{i=1}^M$, for a prespecified level $\alpha$ or multiple levels of interest.
When $\vartheta_i=\vartheta_j$, the scores $Z_i$ and $Z_j$ share the same null distribution and can be pooled to estimate its quantiles.
On the other hand, when $\vartheta_i$ and $\vartheta_j$ are far apart in $\Theta$, $F_{\vartheta_i}$ and $F_{\vartheta_j}$ may be similar or even identical, and hence pooling $Z_i$ and $Z_j$ can still improve quantile estimation. 
Thus, for calibration at a candidate $\theta$, this suggests pooling scores whose distributions are close to $F_\theta$, rather than restricting attention to parameters that are close to $\theta$ in $\Theta$.
This is the central idea of \texttt{RECAP}---REpresentation-based Calibration with Amortized Pooling---which aims to organize pooling according to similarity of the null score distributions.

We refer to any measurable map $g:\Theta\to\mathcal G\subseteq\mathbb R^s$, with $s\leq d$, as a \emph{calibration representation}.
We call $g$ an \emph{exact} calibration representation if $F_\theta(t)=H_0(t,g(\theta))$ for every $\theta\in\Theta$ and $t\in\mathbb R$, where $H_0(\cdot,u)$ is a cumulative distribution function for each $u\in\mathcal G$.
In this case, the null score distribution depends on $\theta$ only through $g(\theta)$.
If $H_0(t,u)$ varies continuously with $u$, uniformly over $t$, then nearby representations induce similar null score distributions.
This motivates pooling simulated scores using neighborhoods in $\mathcal G$. 
More generally, we also allow $g$ to be inexact, where the null score distribution is only approximately determined by the representation.

In several classical statistical models, an exact representation is available from model structure, as illustrated in Section~\ref{subsec_examples}.
When $g$ is unknown, representation learning can be approached through sufficient dimension reduction (SDR).
Specifically, the exact representation condition implies $Z_i\perp\!\!\!\perp\vartheta_i\mid g(\vartheta_i)$, so $g(\vartheta_i)$ is a sufficient reduction of $\vartheta_i$ for the conditional distribution of $Z_i$.
This connects the estimation of $g$ to the SDR framework, reviewed in Supplement Section~\ref{subsec_sdr}, with the simulated score as the response and the simulation parameter as the predictor.
From an SDR perspective, the key advantage of \texttt{RECAP} compared with the existing local frequentist SBI method \texttt{TRUST} is that we target the \emph{central subspace}, which characterizes the full conditional distribution of the score, whereas \texttt{TRUST} targets the \emph{central mean subspace}, which characterizes only the conditional mean and need not suffice for calibration. We further illustrate this observation in Example~\ref{ex:calibration-poisson} and Section~\ref{subsec:simulation-poisson}.
Thus, the first step of \texttt{RECAP} is to specify a known representation or obtain an estimate $\widehat g$ from a separate representation-learning simulation bank generated independently of the calibration bank.
We introduce a specific representation-learning procedure in Section~\ref{subsec_learning_representation}.

Given a known or estimated calibration representation, denoted by $g$ for simplicity, the next step in \texttt{RECAP} is to calibrate through a pooling rule in $\mathcal G$.
Specifically, let $\mathcal N(u)\subseteq\mathcal G$ be a measurable pooling region containing $u \in \mathcal G$.
This region may be a neighborhood of $u$.
We discuss a specific moving-ball pooling rule in Section~\ref{subsec_moving_ball}.
The pooling rule and its tuning parameters are specified in advance or determined using training or tuning simulations independent of the calibration bank.
For a given representation $g$ and each candidate $\theta\in\Theta$, define the selected indices and pool size by
\[
\mathcal I_\theta(g)
=
\left\{i\in\{1,\ldots,M\}:g(\vartheta_i)\in\mathcal N(g(\theta))\right\},
\qquad
M_\theta=|\mathcal I_\theta(g)|.
\]
We suppress the dependence on $g$ in the pool size and subsequent cutoff notation for simplicity.

We estimate $q_{\alpha,\theta}$ by an adjusted empirical quantile of the pooled scores $\{Z_i:i\in\mathcal I_\theta(g)\}$.
For $M_\theta>0$, write these scores in increasing order as $Z_{(1)}^\theta\leq\cdots\leq Z_{(M_\theta)}^\theta$.
Set $k_\theta=\lceil(M_\theta+1)(1-\alpha)\rceil$ and define
\begin{equation}
\label{eq:recap-cutoff}
\widehat q_{\alpha,\theta}
=
\begin{cases}
Z_{(k_\theta)}^\theta, & k_\theta\leq M_\theta,\\
+\infty, & k_\theta=M_\theta+1.
\end{cases}
\end{equation}

Once the cutoff rule is constructed, it can be applied to an observed dataset $X_0$ under the model assumption $X_0\sim P_{\theta^*}$, where $\theta^*\in\Theta$ is fixed and unknown.
We assume that $X_0$ is independent of the simulations used for representation learning, tuning, and calibration.
For each candidate $\theta$, we reject the point null $\theta^*=\theta$ when $T(X_0,\theta)>\widehat q_{\alpha,\theta}$.
Inverting these tests yields the $(1-\alpha)$ confidence set
\begin{equation}
\label{eq:recap-confidence-set}
\widehat{\mathcal C}_\alpha(X_0)
=
\left\{\theta\in\Theta:T(X_0,\theta)\leq\widehat q_{\alpha,\theta}\right\}.
\end{equation}
With the representation and pooling rule fixed, the same calibration bank supports inference on current and future independent datasets using the same model and test statistic. 
Thus, computational cost in \texttt{RECAP} is amortized across both candidate parameters and observed datasets.
We summarize the general procedure in Algorithm~\ref{alg:recap-framework} and provide the detailed implementation in Supplement Algorithm~\ref{alg:recap-detailed}.

\begin{algorithm}[ht]
\caption{\texttt{RECAP}: REpresentation-based Calibration with Amortized Pooling}
\label{alg:recap-framework}
\textbf{Input:} Simulator $P_\theta$, statistic $T$, prior $\pi$, calibration budget $M$, and level $\alpha\in(0,1)$.
\begin{enumerate}[leftmargin=*,nosep]
    \item \textbf{Representation and pooling.} Specify a known representation $g$ or estimate it from separate simulations, and fix a pooling rule $\mathcal N$.
    \item \textbf{Simulation bank.} Generate $M$ independent parameter--score pairs $\{(\vartheta_i,Z_i)\}_{i=1}^M$ using $\pi$, $P_\theta$, and $T$.
    \item \textbf{Representation-based calibration.} For each candidate $\theta$, select the indices $\mathcal I_\theta(g)$ for which $g(\vartheta_i)\in\mathcal N(g(\theta))$.
    Calculate the adjusted empirical quantile $\widehat q_{\alpha,\theta}$ from the pooled scores $\{Z_i:i\in\mathcal I_\theta(g)\}$ using \eqref{eq:recap-cutoff}.
    \item \textbf{Inference.} For an independent observed dataset $X_0$, reject $\theta^*=\theta$ when $T(X_0,\theta)>\widehat q_{\alpha,\theta}$, or construct the confidence set in \eqref{eq:recap-confidence-set}.
\end{enumerate}
\end{algorithm}

\subsection{Models with low-dimensional calibration structure}
\label{subsec_examples}

In this section, we first illustrate low-dimensional calibration structure in statistical models and then discuss its potential role in scientific simulators.

\subsubsection*{Statistical models}

\paragraph{Exact representations generated by invariance.}

A common source of exact calibration representations is an invariance shared by the model and the score.
The following proposition formalizes this observation.

\begin{Proposition}[Exact representation from invariance]
\label{prop:invariant-score-representation}
Let a group $\mathcal H$ act measurably on the sample and parameter spaces, and suppose that for every $h\in\mathcal H$ and $\theta\in\Theta$,
\[
X\sim P_\theta
\quad\Longrightarrow\quad
h\cdot X\sim P_{h\cdot\theta},
\qquad
T(h\cdot X,h\cdot\theta)=T(X,\theta).
\]
Then $F_{h\cdot\theta}=F_\theta$.
Thus, any measurable map $g$ satisfying $g(\theta)=g(\theta')$ if and only if $\theta'=h\cdot\theta$ for some $h\in\mathcal H$ gives an exact calibration representation $F_\theta(t)=H_0(t,g(\theta))$.
\end{Proposition}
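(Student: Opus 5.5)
The argument has two parts. First I show that the null distribution of the score is constant along orbits. Then I factor $F_\theta$ through any orbit-separating map $g$. The first part is a direct computation. Fix $h\in\mathcal H$ and $\theta\in\Theta$, and let $X\sim P_\theta$. By hypothesis $h\cdot X\sim P_{h\cdot\theta}$, so for any $t\in\mathbb R$,
\[
F_{h\cdot\theta}(t)=\mathbb P\{T(h\cdot X,h\cdot\theta)\le t\}=\mathbb P\{T(X,\theta)\le t\}=F_\theta(t).
\]
The first equality is the definition of $F_{h\cdot\theta}$, evaluated at a random element with law $P_{h\cdot\theta}$. The second uses the invariance $T(h\cdot X,h\cdot\theta)=T(X,\theta)$ pointwise. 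Measurability of the action ensures $h\cdot X$ is a random element, so these probabilities are well defined.

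For the second part, let $g$ satisfy $g(\theta)=g(\theta')$ if and only if $\theta'=h\cdot\theta$ for some $h$. Only the ``only if'' direction is needed: the fibers of $g$ lie inside orbits. On $\mathcal G_0:=g(\Theta)$ I define
\[
H_0(t,u):=F_\theta(t)\quad\text{for any }\theta\text{ with }g(\theta)=u .
\]
This is well defined. If $g(\theta)=g(\theta')=u$, then $\theta'=h\cdot\theta$, and the first part gives $F_{\theta'}=F_\theta$. Each $H_0(\cdot,u)$ is a genuine CDF because it equals some $F_\theta$. On $\mathcal G\setminus\mathcal G_0$ I set $H_0(\cdot,u)$ to a fixed CDF, for instance that of a point mass at $0$. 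By construction $F_\theta(t)=H_0(t,g(\theta))$ for all $\theta,t$, which is the exact representation condition.

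The only delicate point is measurability of $H_0$ in $u$, if the definition of an exact representation is read as requiring it. The definition as stated only asks that $H_0(\cdot,u)$ be a CDF for each $u$, so the pointwise construction above suffices, and I would state this explicitly. If joint measurability were wanted, I would try two routes. One is to assume $\theta\mapsto F_\theta(t)$ is measurable and that $g$ admits a measurable section $\sigma:\mathcal G_0\to\Theta$ with $g\circ\sigma=\mathrm{id}$, then set $H_0(t,u)=F_{\sigma(u)}(t)$. The other is to note that $F_\theta(t)$ is $\sigma(g)$-measurable in $\theta$ and invoke the Doob--Dynkin factorization lemma. Either works under standard Borel assumptions, but I would flag this as a remark rather than part of the core argument.
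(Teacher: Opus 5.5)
Your argument is correct and is essentially the one the paper intends (the paper states this proposition without a separate written proof). The intended argument is the identity $F_{h\cdot\theta}(t)=\mathbb P\{T(h\cdot X,h\cdot\theta)\leq t\}=\mathbb P\{T(X,\theta)\leq t\}=F_\theta(t)$, followed by setting $H_0(\cdot,u)$ equal to the common value of $F_\theta$ on the fiber $g^{-1}(u)$, which is well defined because, as you note, only the ``only if'' direction is needed: the fibers of $g$ lie inside orbits. The paper's definition of an exact representation imposes no measurability on $H_0$, so your measurability discussion can stay a remark; note, however, that the Doob--Dynkin route also needs $\theta\mapsto F_\theta(t)$ to be measurable, since constancy on fibers alone does not give $\sigma(g)$-measurability.
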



We illustrate this invariance property in the following examples. 

\begin{Example}[Matrix denoising]
\label{ex:calibration-matrix-denoising}
Consider $Y=S+E$, where $S\in\mathbb R^{p_1\times p_2}$ has rank $r$ and the noise satisfies $QER^\top\overset{d}{=}E$ for every pair of orthogonal matrices $Q$ and $R$ of compatible dimensions.
Let $\widehat S$ be an estimator satisfying $\widehat S(QYR^\top)=Q\widehat S(Y)R^\top$, and consider the score $T(Y,S)=\|\widehat S(Y)-S\|_{\mathrm F}$.
Simultaneously transforming $Y$ and $S$ by left and right orthogonal matrices preserves both the model and the score.
Because two matrices have the same singular values precisely when one can be obtained from the other by left and right orthogonal transformations, an exact representation is
\[
g(S)=\bigl(\lambda_1(S),\ldots,\lambda_r(S)\bigr)^\top,
\]
where the nonzero singular values are ordered.
The motivating example in Section~\ref{sec:introduction} is a special case.
\end{Example}

The calibration representation in Example~\ref{ex:calibration-matrix-denoising} follows from orthogonal invariance.
The same principle applies to principal component analysis \citep{johnstone2001distribution}, canonical correlation analysis \citep{hotelling1936relations}, and matrix regression with a random design \cite{rechtfazelparrilo2010guaranteed,candesplan2011tight}; see Supplement Section~\ref{sec:supp-orthogonal-invariance} for further details.

Another common source of exact calibration representations is permutation invariance.
We illustrate this principle below using spectral clustering in the stochastic block model (SBM).
Related representations for spectral clustering in isotropic Gaussian mixtures \citep{loffler2021optimality} and sparse random-design regression \citep{wainwright2009sharp} are given in Supplement Section~\ref{sec:supp-permutation-invariance}.

\begin{Example}[Spectral clustering in SBM]
\label{ex:calibration-sbm}
Consider an undirected SBM \citep{lei2015consistency} with conditionally independent edges $A_{ij}\mid z,B\sim\operatorname{Bernoulli}(B_{z_i z_j})$, $1\leq i<j\leq n$, where $z\in\{1,\ldots,K\}^n$ is the membership vector and symmetric $B\in[0,1]^{K\times K}$ is the connection probability matrix.
For a vertex-permutation-equivariant clustering procedure $\widehat z(A)$, use the misclustering score
\[
T(A,(z,B))=\frac{1}{n}\min_{\eta\in\mathfrak S_K}\sum_{i=1}^n\mathbf 1\{\widehat z_i(A)\neq\eta(z_i)\},
\]
where $\mathfrak S_K$ denotes permutations of the community labels.
By permutation invariance, an exact representation is $g(z,B)=\operatorname{canon}(n_1,\ldots,n_K,B)$, where $n_k$ is the size of community $k$ and $\operatorname{canon}$ selects a canonical representative under simultaneous relabeling of the sizes and the rows and columns of $B$.
\end{Example}

Overall, there are many types of statistical models that exhibit some form of invariance and hence, exact representations in RECAP can improve calibration.  

\paragraph{Classical models with unknown representation.}

When an exact calibration representation cannot be identified, the model structure may still provide evidence for an underlying low-dimensional calibration representation, and in some cases suggest informative summaries for constructing one.

\begin{Example}[Matrix completion]
\label{ex:calibration-matrix-completion}
Consider recovery of a rank-$r$ matrix $M$ from uniformly sampled entries with sampling probability $q$, and use the score $T(Y,M)=\|\widehat M(Y)-M\|_{\mathrm F}$, where $Y$ includes the observed locations and values.
Although we do not identify an exact calibration representation here, recovery guarantees for standard completion estimators involve quantities such as rank, singular values, coherence, sampling rate, and noise level \citep{candesrecht2009exact,keshavan2010matrix}.
These results suggest possible low-dimensional calibration representations with these summaries. 
In Section~\ref{subsec:simulation-matrix-completion}, we empirically find that incorporating these summaries substantially improves calibration accuracy.
\end{Example}

Besides matrix completion, low-rank recovery provides several other examples, including tensor decomposition \citep{zhang2018tensor}, and tensor completion \citep{xia2021statistically}, whose recovery guarantees similarly involve ranks, signal strengths, and, for completion, incoherence and sampling properties.

The last example comes from the SDR literature, where it is used to illustrate that the central subspace can contain directions absent from the central mean subspace \citep{yin2011ensemble}. 
This distinction is particularly relevant to calibration, which requires distributional information beyond the conditional mean.
\begin{Example}[Heteroscedastic Poisson regression]
\label{ex:calibration-poisson}
Consider independent responses $Y_j\sim\operatorname{Poisson}(\mu_j(\theta))$, where $\mu_j(\theta)=ne_j\exp(x_j^\top\theta)$, $j=1,\ldots,m$, with fixed design vectors $x_j\in\mathbb R^d$ and exposures $ne_j>0$.
For a candidate $\theta$, use the score
\[
T(Y,\theta)=\sum_{j=1}^m\frac{\{Y_j-\mu_j(\theta)\}^2}{\mu_j(\theta)}.
\]
It follows that $\mathbb E(Z\mid\vartheta)=m$, although the conditional distribution $Z\mid \vartheta$ generally varies with $\vartheta$.
This motivates learning a representation that preserves distributional variation beyond the conditional mean.
We investigate this distinction empirically in Section~\ref{subsec:simulation-poisson}.
On the other hand, although the score's finite-sample distribution is available, calculating its quantiles requires numerical calibration, which can be costly across many $\theta$.
And for fixed $m$, the score converges in distribution to $\chi_m^2$ as $n\rightarrow \infty$, but this approximation can be inaccurate for small expected counts.
The exact score distribution, numerical calibration, and supporting moment calculations are given in Supplement Section~\ref{sec:supp-poisson}.
\end{Example}

\subsubsection*{Scientific simulation models with low-dimensional calibration structure}

Large scientific simulators provide further motivation for seeking low-dimensional calibration representations, as different parameter combinations can produce similar outputs.
In climate modeling, \citet{mauritsen2012tuning} obtain alternative configurations of the Max Planck Institute Earth System Model (MPI-ESM) by changing a cloud or convection parameter and adjusting another to maintain similar radiation balances and cloud-water amounts.
In particle physics, \citet{buckley2010systematic} find strong correlations among fitted fragmentation parameters in the \texttt{PYTHIA} event generator.
A more direct example of low-dimensional structure comes from computational fluid dynamics, where \citet{constantine2015exploiting} find that the integrated exit pressure in simulations of the HyShot II scramjet is well approximated by a function of one linear combination of seven uncertain inputs, including inflow conditions and boundary-layer transition locations.

These findings suggest that approximate calibration representations may exist in scientific simulators in real-world applications, and one could investigate this possibility by learning representations from simulated parameter--score pairs and checking calibration on independent simulations.
If the score distribution varies mainly through the learned representation, \texttt{RECAP} can share calibration information across parameter configurations that are far apart in the original space.

\subsection{Learning the calibration representation}
\label{subsec_learning_representation}

We now give a concrete procedure for learning an unknown calibration representation, $g$, within the general \texttt{RECAP} framework of Section~\ref{subsec_recap_framework}. 
In principle, any SDR method targeting the conditional score distribution can be used within \texttt{RECAP}.
Available approaches include both inverse-regression methods \citep[e.g.,][]{li1991sir,cook2000save,liwang2007directional} and forward-regression methods \citep[e.g.,][]{xia2007constructive,wangxia2008sliced,yin2011ensemble}.
We use the characteristic-function \texttt{RMAVE} ensemble \citep{yin2011ensemble} because its weighted least-squares updates are straightforward to implement, it has established convergence guarantees under regularity conditions, and relies on less restrictive assumptions than many classical SDR methods.

Suppose that the null score distribution admits an exact linear representation of the form
\begin{equation}
\label{eq:linear-distributional-representation}
F_\theta(t)=H_0(t,B_0^\top\theta),
\qquad
B_0\in\mathbb R^{d\times s},
\qquad
B_0^\top B_0=I_s.
\end{equation}
We generate $N$ representation-training pairs according to
\begin{equation}
\label{eq:recap-training-bank}
\vartheta_i^{\mathrm{tr}}\overset{\mathrm{i.i.d.}}{\sim}\pi,
\qquad
X_i^{\mathrm{tr}}\mid\vartheta_i^{\mathrm{tr}}\sim P_{\vartheta_i^{\mathrm{tr}}},
\qquad
Z_i^{\mathrm{tr}}=T(X_i^{\mathrm{tr}},\vartheta_i^{\mathrm{tr}}),
\qquad i=1,\ldots,N.
\end{equation}
These simulations are independent of those used for calibration.
We then apply the \texttt{RMAVE} ensemble with dimension $s$ to the training pairs to obtain an estimate $\widehat B$ and define $\widehat g(\theta)=\widehat B^\top\theta$.
The fitted representation is then held fixed for subsequent pooling and calibration.
Algorithm~\ref{alg:recap-representation} summarizes this fixed-dimension fit; selection among candidate dimensions is described below.

\begin{algorithm}[ht]
\caption{Learning a calibration representation by the \texttt{RMAVE} ensemble}
\label{alg:recap-representation}
\textbf{Input:} Simulator $P_\theta$, statistic $T$, prior $\pi$, training budget $N$, dimension $s$, and \texttt{RMAVE} ensemble settings.
\begin{enumerate}[leftmargin=*,nosep]
    \item \textbf{Training simulations.} Generate $\{(\vartheta_i^{\mathrm{tr}},Z_i^{\mathrm{tr}})\}_{i=1}^N$ by \eqref{eq:recap-training-bank}.
    \item \textbf{Subspace estimation.} Apply the \texttt{RMAVE} ensemble criterion \eqref{eq:rmave-ensemble-objective} with predictors $\vartheta_i^{\mathrm{tr}}$ and transformed responses $f_\ell(Z_i^{\mathrm{tr}})$ to obtain $\widehat B$.
    \item \textbf{Calibration representation.} Set $\widehat g(\theta)=\widehat B^\top\theta$.
\end{enumerate}
\end{algorithm}

The linear assumption in \eqref{eq:linear-distributional-representation} holds for models whose null score distributions depend on a few linear combinations of the parameters, including those discussed in  Section~\ref{subsec_examples}.
Approximate linear structure, such as the scientific simulator in  \citep{constantine2015exploiting}, provides further motivation for this choice.
We leave the development of more general nonlinear representation-learning methods to future work.

\paragraph{Selecting the representation dimension.}
To tune the representation dimension $s$, we propose to fit $g_r$ for each $r\in\mathcal S\subseteq\{1,\ldots,d\}$ and compare the candidates using a criterion appropriate to the inferential objective.
Options include transformed-response prediction error \citep{yin2011ensemble} and coverage-oriented criteria evaluated on independent validation simulations.
We favor the latter because they directly assess calibration, particularly when a low-dimensional representation is only approximate.
In the Poisson experiment of Section~\ref{subsec:simulation-poisson}, we select the dimension using mean absolute coverage error (MAE), while in the case study of Section~\ref{sec:real}, we use root mean squared coverage error (RMSE).
Supplement Section~\ref{subsec:supp-mae-selection-statement} provides further details on this selection procedure, and Section~\ref{subsec:theory-learned-representation} summarizes its theoretical guarantees.

\subsection{Moving-ball pooling for calibration}
\label{subsec_moving_ball}

Recall that the goal is to estimate the null score distribution across candidate parameters $\theta$.
With a known or learned calibration representation $g$ from previous steps, we adopt a local approach that pools simulated scores in the low-dimensional representation space.
For each $\theta$, the aim is to include enough scores for accurate estimation while keeping their null distributions close to $F_\theta$.
We use moving-ball pooling to balance these considerations, with neighborhood radii adapting to the local concentration of simulated parameters in the representation space.

Specifically, take $\mathcal N(u)=\{v\in\mathcal G:\|v-u\|_2\leq h(u)\}$, where the radius function $h:\mathcal G\to[0,\infty)$ is fixed independently of the final calibration bank and the observed dataset.
For each candidate $\theta$, denote the selected indices and pool size
\begin{equation}
\label{eq:moving-ball-pool}
\mathcal I_\theta(g)
=
\left\{i\in\{1,\ldots,M\}:\|g(\vartheta_i)-g(\theta)\|_2\leq h(g(\theta))\right\},
\qquad
M_\theta=|\mathcal I_\theta(g)|.
\end{equation}
The cutoff $\widehat q_{\alpha,\theta}$ is then calculated from the selected scores using \eqref{eq:recap-cutoff}.

We determine the neighborhood radius using an independent parameter-only pilot $\vartheta_j^{\mathrm{pil}}\overset{\mathrm{i.i.d.}}{\sim}\pi$, $j=1,\ldots,M_{\mathrm{pil}}$.
For a target mass $\rho\in(0,1)$, define
\begin{equation}
\label{eq:moving-ball-radius}
h(u)
=
\inf\left\{
r\geq0:
\frac{1}{M_{\mathrm{pil}}}
\sum_{j=1}^{M_{\mathrm{pil}}}
\mathbf 1\{\|g(\vartheta_j^{\mathrm{pil}})-u\|_2\leq r\}
\geq\rho
\right\}.
\end{equation}
Here, the target mass $\rho$ controls the trade-off between localization and variation in quantile estimation.
This step requires only parameter draws and representation evaluations, with no additional simulated datasets, and is therefore computationally inexpensive.
Algorithm~\ref{alg:recap-moving-ball} summarizes this construction.

\begin{algorithm}[t]
\caption{Constructing a moving-ball pooling rule}
\label{alg:recap-moving-ball}
\textbf{Input:} A fixed known or estimated representation $g:\Theta\to\mathcal G$, prior $\pi$, pilot size $M_{\mathrm{pil}}\geq1$, and target mass $\rho\in(0,1)$.
\begin{enumerate}[leftmargin=*,nosep]
    \item \textbf{Parameter-only pilot.} Draw $\vartheta_j^{\mathrm{pil}}\overset{\mathrm{i.i.d.}}{\sim}\pi$, $j=1,\ldots,M_{\mathrm{pil}}$.
    \item \textbf{Radius function.} For each $u\in\mathcal G$, define $h(u)$ as in \eqref{eq:moving-ball-radius}.
    \item \textbf{Pooling rule.} Set $\mathcal N(u)=\{v\in\mathcal G:\|v-u\|_2\leq h(u)\}$ for every $u\in\mathcal G$. 
\end{enumerate}
\textbf{Output:} The radius function $h$ and moving-ball pooling rule $\mathcal N$.
\end{algorithm}

Finally, we emphasize that the central contribution of \texttt{RECAP} is a framework that exploits low-dimensional structure in the null score distribution for calibration, while the specific calibration rule appears less important. 
In Section~\ref{sec:simulation-study}, we show that \texttt{RECAP} outperforms analogous calibration in the ambient (original) parameter space, while in Supplement Section~\ref{sec:supp-shared-g}, we show that different calibration methods perform similarly when given the same learned representation $g$.

\subsection{Interpreting uncertainty in representation space}
\label{subsec:representation-inference}

High-dimensional confidence sets can be difficult to interpret in scientific applications, as their geometry may be elongated, curved, or disconnected.
Consequently, coordinatewise projections can take the entire allowable parameter ranges while concealing informative constraints on parameter combinations.
On the other hand, \texttt{RECAP} provides a model- and score-informed representation for summarizing uncertainty about such combinations.

For an exact linear representation $g(\theta)=B_0^\top\theta$, perturbations perpendicular to $\operatorname{span}(B_0)$ leave the null score distribution unchanged, as $F_{\theta+v}=F_\theta$ whenever $B_0^\top v=0$ and $\theta+v\in\Theta$.
For nonlinear representations, the same invariance holds along the level sets of $g$.
Thus, $g$ separates parameter variation that governs the null score distribution from variation irrelevant to calibration.

To obtain a lower-dimensional uncertainty summary, consider
\begin{equation}
\label{eq:representation-confidence-set}
\widehat{\mathcal C}_{\alpha,g}(X_0)
=
\left\{g(\theta):\theta\in\widehat{\mathcal C}_\alpha(X_0)\right\}.
\end{equation}
This set inherits the original lower coverage guarantee, since inclusion of $\theta^*$ implies inclusion of $g(\theta^*)$ in its image.

Projecting the confidence set through $g$ removes variation along directions that do not affect the representation.
In the linear case, for example, variation perpendicular to $\operatorname{span}(B_0)$ disappears after projection.
Thus, even when the original confidence set is wide in many individual coordinates, its image in the representation space can be much more concentrated.
This provides a lower-dimensional summary of uncertainty about the parameter combinations that are most relevant to calibration.
Such combinations can also be scientifically meaningful, as they are shaped by the model structure and the chosen test statistic.
The real-data application in Section~\ref{sec:real} illustrates how such summaries can aid the interpretation of uncertainty in a high-dimensional parameter space.

\section{Statistical theory}
\label{sec:statistical-theory}

In this section, we first establish a fixed-parameter coverage guarantee for Algorithm~\ref{alg:recap-framework}, then specialize to moving-ball pooling and learned linear representations. Supplement Section \ref{sec:supp-coverage-theory} provides further details and the proofs.

\subsection{Finite-simulation fixed-parameter coverage}
\label{subsec:theory-general-coverage}

Let $\mathcal D$ collect all preliminary information used to construct the statistic $T$, representation $g$, and pooling rule $\mathcal N$, including any fitting, tuning, dimension selection, and parameter-only pilot draws.

\begin{Assumption}[Independent stages]
\label{ass:recap-independent-stages}
In Algorithm~\ref{alg:recap-framework}, the statistic $T$, representation $g$, and pooling rule $\mathcal N$ are determined by the preliminary information $\mathcal D$.
The final calibration pairs $(\vartheta_i,X_i)$, $i=1,\ldots,M$, are i.i.d.\ with $\vartheta_i\sim\pi$ and $X_i\mid\vartheta_i\sim P_{\vartheta_i}$.
The preliminary information $\mathcal D$, the final calibration bank $\{(\vartheta_i,X_i)\}_{i=1}^M$, and the observed dataset $X_0$ are mutually independent, with $X_0\sim P_\theta$ when evaluating coverage at $\theta$.
\end{Assumption}

Throughout the coverage analysis, $F_\theta(t)=\mathbb P_\theta\{T(X_0,\theta)\leq t\mid\mathcal D\}$ denotes the null score distribution.
For a fresh pair $(\vartheta,X)$ independent of $\mathcal D$ and $X_0$, set $Z=T(X,\vartheta)$ and define
\begin{equation}
\label{eq:recap-pooling-mixture}
\begin{aligned}
p_\theta&=\mathbb P\{g(\vartheta)\in\mathcal N(g(\theta))\mid\mathcal D\},\\
F_\theta^{\mathrm{mix}}(t)&=\mathbb P\{Z\leq t\mid g(\vartheta)\in\mathcal N(g(\theta)),\mathcal D\},\\
\Delta_\theta&=\sup_{t\in\mathbb R}|F_\theta^{\mathrm{mix}}(t)-F_\theta(t)|,
\end{aligned}
\end{equation}
where the last two quantities are defined when $p_\theta>0$.
Here, $F_\theta^{\mathrm{mix}}$ is the mixture of null score distributions $F_\vartheta$ over parameters satisfying $g(\vartheta)\in\mathcal N(g(\theta))$.
The following theorem requires neither an exact representation nor a particular pooling geometry.

\begin{Theorem}[Finite-simulation coverage]
\label{thm:recap-finite-coverage}
Under Assumption~\ref{ass:recap-independent-stages}, fix $\theta\in\Theta$ with $p_\theta>0$.
Conditional on $\mathcal D$, the confidence set $\widehat{\mathcal C}_\alpha(X_0)$ in \eqref{eq:recap-confidence-set} satisfies
\begin{equation}
\label{eq:recap-coverage-lower}
\mathbb P_\theta\{\theta\in\widehat{\mathcal C}_\alpha(X_0)\mid\mathcal D\}\geq1-\alpha-\Delta_\theta.
\end{equation}
If $F_\theta^{\mathrm{mix}}$ is continuous, then
\begin{equation}
\label{eq:recap-coverage-two-sided}
\left|\mathbb P_\theta\{\theta\in\widehat{\mathcal C}_\alpha(X_0)\mid\mathcal D\}-(1-\alpha)\right|
\leq\Delta_\theta+\frac{1}{(M+1)p_\theta}.
\end{equation}
\end{Theorem}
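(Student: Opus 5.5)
Conditional on $\mathcal D$, the plan is to reduce everything to a split-conformal exchangeability argument under the mixture $F_\theta^{\mathrm{mix}}$, and then pay $\Delta_\theta$ to move to the true null $F_\theta$.

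\textbf{Step 1: the pooled scores.} Condition on $\mathcal D$ and on the pool size $M_\theta=m$. By Assumption~\ref{ass:recap-independent-stages}, the pairs $(\vartheta_i,X_i)$ are i.i.d.\ and independent of $\mathcal D$, and the pooling event $\{g(\vartheta_i)\in\mathcal N(g(\theta))\}$ depends on pair $i$ alone. Hence, given $M_\theta=m$ and the index set $\mathcal I_\theta(g)$, the selected scores are i.i.d.\ with CDF $F_\theta^{\mathrm{mix}}$. Moreover $M_\theta\sim\mathrm{Binomial}(M,p_\theta)$.

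\textbf{Step 2: a coupled mixture score.} Let $Z^\star\sim F_\theta^{\mathrm{mix}}$ be an auxiliary score, independent of the bank and coupled with $T(X_0,\theta)$ through a common uniform variable via quantile transforms. Write $S$ for the coverage event $\{T(X_0,\theta)\le\widehat q_{\alpha,\theta}\}$ and $S^\star$ for $\{Z^\star\le\widehat q_{\alpha,\theta}\}$. Since $\widehat q_{\alpha,\theta}$ is independent of $X_0$ and $Z^\star$,
\[
\mathbb P(S\mid\mathcal D)=\mathbb E\{F_\theta(\widehat q_{\alpha,\theta})\mid\mathcal D\},
\qquad
\mathbb P(S^\star\mid\mathcal D)=\mathbb E\{F_\theta^{\mathrm{mix}}(\widehat q_{\alpha,\theta})\mid\mathcal D\}.
\]
The two differ by at most $\Delta_\theta$, with the convention $F(+\infty)=1$ when $\widehat q_{\alpha,\theta}=+\infty$. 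This yields the $\Delta_\theta$ term in both bounds.

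\textbf{Step 3: the mixture-level bounds.} Given $M_\theta=m$, the variables $Z^\star$ and the $m$ pooled scores are exchangeable.
\begin{itemize}
\item For the lower bound, the standard conformal rank argument, with $k=\lceil(m+1)(1-\alpha)\rceil$, gives $\mathbb P(S^\star\mid m)\ge k/(m+1)\ge 1-\alpha$; ties only help. The case $m=0$ gives $k=1=m+1$, so $\widehat q=+\infty$ and coverage is trivially $1$. Combined with Step 2, this gives \eqref{eq:recap-coverage-lower}.
\item For the upper bound, continuity of $F_\theta^{\mathrm{mix}}$ gives almost-sure distinct ranks, so $\mathbb P(S^\star\mid m)=k/(m+1)\le 1-\alpha+1/(m+1)$.
\end{itemize}

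\textbf{Step 4: averaging over the pool size (main obstacle).} It remains to take expectation over $M_\theta$ while keeping the bound of order $1/\{(M+1)p_\theta\}$. I would use the binomial identity
\[
\mathbb E\left\{\frac{1}{M_\theta+1}\right\}=\frac{1-(1-p_\theta)^{M+1}}{(M+1)p_\theta}\le\frac{1}{(M+1)p_\theta}.
\]
The event $m=0$ needs care, because there coverage is $1$, which could exceed $1-\alpha$ by $\alpha$. However, $m=0$ gives $1/(m+1)=1\ge\alpha$, so the inequality $k/(m+1)\le 1-\alpha+1/(m+1)$ still holds in that case, and the identity absorbs it. Combining the lower and upper bounds with the $\Delta_\theta$ transfer from Step 2 yields \eqref{eq:recap-coverage-two-sided}.
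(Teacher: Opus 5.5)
Your proposal is correct and follows essentially the same route as the paper's proof. Both arguments use the same four ingredients: the pooled scores are conditionally i.i.d.\ from $F_\theta^{\mathrm{mix}}$ given $M_\theta=m$; an exchangeability rank argument with an auxiliary mixture score gives $k/(m+1)$; the transfer to $F_\theta$ costs at most $\Delta_\theta$ through $\mathbb E\{F_\theta(\widehat q_{\alpha,\theta})-F_\theta^{\mathrm{mix}}(\widehat q_{\alpha,\theta})\}$; and the binomial identity $\mathbb E\{1/(M_\theta+1)\}=\{1-(1-p_\theta)^{M+1}\}/\{(M+1)p_\theta\}$ finishes the bound. The differences are cosmetic. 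Your quantile coupling of $Z^\star$ with $T(X_0,\theta)$ in Step 2 is never used, since only the marginal law of $Z^\star$ and its independence from the bank enter. The paper, for its part, makes your remark that ``ties only help'' explicit by attaching independent uniform marks.
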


This theorem shows that the coverage error is bounded by the discrepancy between the mixed score distribution $F_\theta^{\mathrm{mix}}$ and the true null distribution $F_\theta$, plus $1/((M+1)p_\theta)$, where $M$ is the simulation bank size and thus $Mp_\theta$ is the expected effective sample size for calibration.

\subsection{Moving-ball calibration}
\label{subsec:theory-moving-ball}

Recall from Section~\ref{subsec_moving_ball} that $h(u)$ is the radius of the pooling ball centered at $u\in\mathcal G$.
For a candidate $\theta$, write $h_\theta=h(g(\theta))$ for its pooling radius.
For the independent parameter--score pair $(\vartheta,Z)$ introduced above, let $H_g(t,u)=\mathbb P\{Z\leq t\mid g(\vartheta)=u,\mathcal D\}$.
Because conditional distributions are only defined almost surely, $H_g$ need not be unique. We therefore choose any version of $H_g$ and hold it fixed throughout the following analysis.
We measure its discrepancy from the candidate-specific null distributions by
\begin{equation}
\label{eq:recap-representation-discrepancy}
\varepsilon_{\mathrm{rep}}(g)=\sup_{\theta\in\Theta}\sup_{t\in\mathbb R}|F_\theta(t)-H_g(t,g(\theta))|.
\end{equation}
For an exact representation, we have $F_\theta(t)=H_0(t,g(\theta))$, choosing $H_g=H_0$ gives $\varepsilon_{\mathrm{rep}}(g)=0$.

\begin{Corollary}[Moving-ball coverage]
\label{cor:recap-moving-ball}
Under Assumption~\ref{ass:recap-independent-stages}, suppose that, conditional on $\mathcal D$, for some $L_0>0$ and $\beta\in(0,1]$,
\begin{equation}
\label{eq:recap-holder}
\sup_{t\in\mathbb R}|H_g(t,u)-H_g(t,u')|\leq L_0\|u-u'\|_2^\beta,
\qquad u,u'\in g(\Theta).
\end{equation}
For moving-ball pooling and $p_\theta>0$, $\Delta_\theta\leq\varepsilon_{\mathrm{rep}}(g)+L_0h_\theta^\beta$.
Consequently, $\mathbb P_\theta\{\theta\in\widehat{\mathcal C}_\alpha(X_0)\mid\mathcal D\}\geq1-\alpha-\varepsilon_{\mathrm{rep}}(g)-L_0h_\theta^\beta$, and, if $F_\theta^{\mathrm{mix}}$ is continuous,
\begin{equation}
\label{eq:recap-moving-ball-error}
\left|\mathbb P_\theta\{\theta\in\widehat{\mathcal C}_\alpha(X_0)\mid\mathcal D\}-(1-\alpha)\right|
\leq
\varepsilon_{\mathrm{rep}}(g)
+L_0h_\theta^\beta
+\frac{1}{(M+1)p_\theta}.
\end{equation}
\end{Corollary}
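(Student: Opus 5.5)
The plan is to reduce the corollary to Theorem~\ref{thm:recap-finite-coverage}. Once we show $\Delta_\theta\leq\varepsilon_{\mathrm{rep}}(g)+L_0h_\theta^\beta$, both displayed conclusions follow by substitution into \eqref{eq:recap-coverage-lower} and \eqref{eq:recap-coverage-two-sided}. All probabilities are conditional on $\mathcal D$, which fixes $g$, $h$ and $\mathcal N$; I suppress this conditioning below.

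First, I express the pooled mixture through the chosen version $H_g$. Write $A_\theta=\{u:\|u-g(\theta)\|_2\leq h_\theta\}$, so that $p_\theta=\mathbb P\{g(\vartheta)\in A_\theta\}$. By the tower property, conditioning on $g(\vartheta)$ and using that $A_\theta$ is deterministic given $\mathcal D$,
\[
F_\theta^{\mathrm{mix}}(t)=\frac{1}{p_\theta}\mathbb E\bigl[\mathbf 1\{g(\vartheta)\in A_\theta\}H_g(t,g(\vartheta))\bigr].
\]
This identity holds for any version of $H_g$, since versions agree almost surely with respect to the law of $g(\vartheta)$.

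Next, for each $t$ I split the difference using the triangle inequality:
\[
|F_\theta^{\mathrm{mix}}(t)-F_\theta(t)|\leq|F_\theta(t)-H_g(t,g(\theta))|+\frac{1}{p_\theta}\mathbb E\bigl[\mathbf 1\{g(\vartheta)\in A_\theta\}\,|H_g(t,g(\vartheta))-H_g(t,g(\theta))|\bigr].
\]
The first term is at most $\varepsilon_{\mathrm{rep}}(g)$ by definition \eqref{eq:recap-representation-discrepancy}. For the second term, both $g(\vartheta)$ and $g(\theta)$ lie in $g(\Theta)$, so the H\"older condition \eqref{eq:recap-holder} applies. On the event $g(\vartheta)\in A_\theta$ we have $\|g(\vartheta)-g(\theta)\|_2\leq h_\theta$, hence the integrand is bounded by $L_0h_\theta^\beta$. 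Dividing the expectation by $p_\theta$ then gives at most $L_0h_\theta^\beta$. Taking the supremum over $t$ yields the claimed bound on $\Delta_\theta$.

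The main obstacle is measure-theoretic. One must check that the H\"older bound is applied to a single fixed version of $H_g$ that satisfies \eqref{eq:recap-holder} everywhere on $g(\Theta)$, as the statement presumes. One must also check that the sup over $t$ commutes correctly with the expectation; this is fine because the bound is taken pointwise in $t$ before the supremum. The remaining step is to cite Theorem~\ref{thm:recap-finite-coverage}, using $p_\theta>0$ and continuity of $F_\theta^{\mathrm{mix}}$ for the two-sided version.
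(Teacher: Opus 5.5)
Your proposal is correct and follows essentially the same route as the paper: write $F_\theta^{\mathrm{mix}}(t)$ as the conditional average of $H_g(t,g(\vartheta))$ over the pooling ball, add and subtract $H_g(t,g(\theta))$, bound the two pieces by $\varepsilon_{\mathrm{rep}}(g)$ and $L_0h_\theta^\beta$ pointwise in $t$, take the supremum, and invoke Theorem~\ref{thm:recap-finite-coverage}. Your point that the H\"older bound and $\varepsilon_{\mathrm{rep}}(g)$ must refer to one fixed version of $H_g$ matches the paper's convention of choosing a single version and holding it fixed throughout.
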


\begin{Remark}[Coverage rate for an exact representation]
\label{rem:recap-exact-rate}
For a fixed exact representation of dimension $s\geq1$, suppose additionally that $g(\vartheta)$ has a density continuous and positive at $g(\theta)$.
If $F_\theta^{\mathrm{mix}}$ is continuous, choosing deterministic radii $h_\theta\asymp M^{-1/(\beta+s)}$ in \eqref{eq:recap-moving-ball-error} gives $\left|\mathbb P_\theta\{\theta\in\widehat{\mathcal C}_\alpha(X_0)\mid\mathcal D\}-(1-\alpha)\right|=O(M^{-\beta/(\beta+s)})$.
\end{Remark}

\subsection{Learned representations}
\label{subsec:theory-learned-representation}

Now consider estimating the linear representation in \eqref{eq:linear-distributional-representation} on a compact region $\Theta_0$ containing the target parameters and the support of $\pi$.
For the orthonormal estimate $\widehat B$ from Algorithm~\ref{alg:recap-representation}, define
\begin{equation}
\label{eq:recap-subspace-error}
D_\Theta=\sup_{\theta,\theta'\in\Theta_0}\|\theta-\theta'\|_2,
\qquad
a_N=\inf_{R\in\mathcal O(s)}\|\widehat B-B_0R\|_{\mathrm{op}},
\end{equation}
where $\mathcal O(s)$ is the set of $s\times s$ orthogonal matrices.
The alignment accounts for the fact that the matrix itself is not identifiable.
Recall $N$ is the representation-training budget and $M$ is the final calibration budget. 
\begin{Theorem}[Coverage with a learned linear representation]
\label{thm:recap-learned-coverage}
Under Assumption~\ref{ass:recap-independent-stages}, suppose \eqref{eq:linear-distributional-representation} holds on $\Theta_0$ and $H_0$ satisfies \eqref{eq:recap-holder} on $B_0^\top\Theta_0$, with $H_0$ in place of $H_g$.
Use $g=\widehat g$ and moving-ball pooling, and fix $\theta\in\Theta_0$ with $p_\theta>0$.
Conditional on $\mathcal D$,
\begin{equation}
\label{eq:recap-subspace-transfer}
\Delta_\theta\leq L_0(h_\theta+D_\Theta a_N)^\beta,
\qquad
\mathbb P_\theta\{\theta\in\widehat{\mathcal C}_\alpha(X_0)\mid\mathcal D\}\geq1-\alpha-L_0(h_\theta+D_\Theta a_N)^\beta.
\end{equation}
If $F_\theta^{\mathrm{mix}}$ is continuous, then
\begin{equation}
\label{eq:recap-learned-error}
\left|\mathbb P_\theta\{\theta\in\widehat{\mathcal C}_\alpha(X_0)\mid\mathcal D\}-(1-\alpha)\right|\leq L_0(h_\theta+D_\Theta a_N)^\beta+\frac{1}{(M+1)p_\theta}.
\end{equation}
\end{Theorem}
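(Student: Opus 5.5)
The plan is to reduce everything to Theorem~\ref{thm:recap-finite-coverage}. Once we show $\Delta_\theta\leq L_0(h_\theta+D_\Theta a_N)^\beta$, the lower bound \eqref{eq:recap-subspace-transfer} and the two-sided bound \eqref{eq:recap-learned-error} follow immediately from \eqref{eq:recap-coverage-lower} and \eqref{eq:recap-coverage-two-sided}. Both displays hold conditional on $\mathcal D$, and $\widehat B$, $h$ are $\mathcal D$-measurable, so throughout we condition on $\mathcal D$ and treat $\widehat g(\theta)=\widehat B^\top\theta$ and the radius $h_\theta=h(\widehat g(\theta))$ as fixed.

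\textbf{Step 1: express $\Delta_\theta$ as an average.} The fresh pair satisfies $\vartheta\sim\pi$, which is independent of $\mathcal D$, and $Z\mid\vartheta\sim F_\vartheta=H_0(\cdot,B_0^\top\vartheta)$ by \eqref{eq:linear-distributional-representation}. The pooling event $A=\{\|\widehat B^\top\vartheta-\widehat B^\top\theta\|_2\leq h_\theta\}$ depends only on $\vartheta$ given $\mathcal D$. Therefore
\[
F_\theta^{\mathrm{mix}}(t)=\mathbb E\{H_0(t,B_0^\top\vartheta)\mid A,\mathcal D\},
\]
and hence
\[
\Delta_\theta\leq\sup_{t}\mathbb E\{|H_0(t,B_0^\top\vartheta)-H_0(t,B_0^\top\theta)|\mid A,\mathcal D\}.
\]
Both points lie in $B_0^\top\Theta_0$, because $\theta\in\Theta_0$ and $\vartheta$ lies in the support of $\pi$, which is contained in $\Theta_0$. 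So the Hölder condition \eqref{eq:recap-holder} for $H_0$ applies, and it suffices to bound $\|B_0^\top(\vartheta-\theta)\|_2$ by $h_\theta+D_\Theta a_N$ on $A$.

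\textbf{Step 2: the geometric bound.} Fix any $R\in\mathcal O(s)$ and write $\delta=\vartheta-\theta$. Then
\[
\|B_0^\top\delta\|_2=\|RB_0^\top\delta\|_2=\|(B_0R^\top)^\top\delta\|_2\leq\|\widehat B^\top\delta\|_2+\|(\widehat B-B_0R^\top)^\top\delta\|_2\leq h_\theta+\|\widehat B-B_0R^\top\|_{\mathrm{op}}D_\Theta .
\]
Here the first term is bounded on $A$, and the second uses $\|\delta\|_2\leq D_\Theta$. As $R$ ranges over $\mathcal O(s)$, so does $R^\top$, so taking the infimum over $R$ gives $h_\theta+D_\Theta a_N$. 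Plugging this into Step 1 and using monotonicity of $x\mapsto x^\beta$ yields the bound on $\Delta_\theta$.

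\textbf{Main obstacle: versions of conditional distributions.} The main care point is Step 1. We must justify writing $F^{\mathrm{mix}}_\theta$ through $H_0(t,B_0^\top\vartheta)$ directly, rather than through some version $H_{\widehat g}$. This works because the model gives $Z\mid\vartheta$ exactly as $F_\vartheta$. So I would avoid Corollary~\ref{cor:recap-moving-ball} and $\varepsilon_{\mathrm{rep}}(\widehat g)$ entirely and work with the tower property over $\vartheta$, which sidesteps the non-uniqueness of conditional distributions given $\widehat g(\vartheta)$. I would also confirm that $\vartheta\in\Theta_0$ almost surely, so that the Hölder condition, which is stated only on $B_0^\top\Theta_0$, is applicable.
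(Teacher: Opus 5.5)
Your proposal is correct and follows the paper's proof essentially step for step. Like the paper, you write $F_\theta^{\mathrm{mix}}$ as the average of $F_\vartheta=H_0(\cdot,B_0^\top\vartheta)$ over the selected parameters, bound $\|B_0^\top(\vartheta-\theta)\|_2\le h_\theta+D_\Theta a_N$ by the orthogonal-alignment triangle inequality, apply the H\"older condition, and then invoke Theorem~\ref{thm:recap-finite-coverage}; the only cosmetic difference is that the paper fixes a minimizing $R\in\mathcal O(s)$ via compactness, whereas you take the infimum at the end and so never need the minimizer to exist.
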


To simplify the result, further suppose that, with probability tending to one, $c_hh_M\leq h_\theta\leq C_hh_M$ and $p_\theta\geq c_ph_\theta^s$, for fixed positive constants $c_h,C_h,c_p$ and a deterministic sequence $h_M>0$.
If $D_\Theta,L_0$ are fixed, $h_M\to0$, $Mh_M^s\to\infty$, and $F_\theta^{\mathrm{mix}}$ is continuous, then \eqref{eq:recap-learned-error} gives, with probability tending to one,
\begin{equation}
\label{eq:recap-joint-error}
\left|\mathbb P_\theta\{\theta\in\widehat{\mathcal C}_\alpha(X_0)\mid\mathcal D\}-(1-\alpha)\right|\leq C\left(a_N^\beta+h_M^\beta+\frac{1}{Mh_M^s}\right),
\end{equation}
where $C>0$ is independent of $N$ and $M$.
Choosing $h_M\asymp M^{-1/(\beta+s)}$ balances the two calibration terms and gives $\left|\mathbb P_\theta\{\theta\in\widehat{\mathcal C}_\alpha(X_0)\mid\mathcal D\}-(1-\alpha)\right|\leq C\{a_N^\beta+M^{-\beta/(\beta+s)}\}$ with probability tending to one, after enlarging $C$ if necessary. 
The pilot rule in Algorithm \ref{alg:recap-moving-ball} attains these radius and mass orders under Supplement Lemma~\ref{lem:recap-pilot-mass}, provided its local-mass constants hold uniformly over the learned maps with probability tending to one.

Combining Theorem~\ref{thm:recap-learned-coverage} with the convergence guarantee for the \texttt{RMAVE} ensemble \citep[Theorem~6.1]{yin2011ensemble} yields the following rate in the representation-training and calibration budgets.

\begin{Corollary}[Coverage rate with the \texttt{RMAVE} ensemble]
\label{cor:recap-rmave-rate}
Suppose the assumptions of Theorem~\ref{thm:recap-learned-coverage} and Supplement Assumption~\ref{ass:recap-rmave} hold, with fixed $D_\Theta$ and $L_0$.
For a fixed $\theta\in\Theta_0$, suppose additionally that $F_\theta^{\mathrm{mix}}$ is continuous and, with probability tending to one, $c_hh_M\leq h_\theta\leq C_hh_M$ and $p_\theta\geq c_ph_\theta^s$, for fixed positive constants $c_h,C_h,c_p$ and a deterministic sequence $h_M\asymp M^{-1/(\beta+s)}$.
Then, as $N,M\to\infty$,
\begin{equation}
\label{eq:recap-rmave-joint-rate}
\left|\mathbb P_\theta\{\theta\in\widehat{\mathcal C}_\alpha(X_0)\mid\mathcal D\}-(1-\alpha)\right|
=
O_{\mathbb P}\left[\left\{(\log N)N^{-4/(s+4)}+N^{-1/2}\right\}^\beta+M^{-\beta/(\beta+s)}\right].
\end{equation}
\end{Corollary}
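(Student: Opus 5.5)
The plan is to plug the \texttt{RMAVE} ensemble convergence rate for the subspace error $a_N$ into the simplified bound \eqref{eq:recap-joint-error} that follows Theorem~\ref{thm:recap-learned-coverage}, and then balance the calibration terms using the prescribed radius order.

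First, under Supplement Assumption~\ref{ass:recap-rmave}, invoke \citet[Theorem~6.1]{yin2011ensemble}. It gives a convergence rate for the estimated central-subspace basis $\widehat B$, stated in terms of the projection matrices $\widehat B\widehat B^\top-B_0B_0^\top$ (or an equivalent distance), of order $O_{\mathbb P}\{(\log N)N^{-4/(s+4)}+N^{-1/2}\}$. This rate comes from the bandwidth-optimal local linear smoothing in the $s$-dimensional reduced space, plus the parametric term. The rate must then be transferred to the aligned quantity $a_N=\inf_{R\in\mathcal O(s)}\|\widehat B-B_0R\|_{\mathrm{op}}$ in \eqref{eq:recap-subspace-error}. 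I would use the standard orthogonal Procrustes / Davis--Kahan-type inequality for orthonormal bases. If $R$ is taken as the polar factor of $B_0^\top\widehat B$, then
\[
\inf_R\|\widehat B-B_0R\|_{\mathrm{op}}\le \sqrt2\,\|\widehat B\widehat B^\top-B_0B_0^\top\|_{\mathrm{op}}.
\]
This gives $a_N=O_{\mathbb P}(r_N)$ with $r_N=(\log N)N^{-4/(s+4)}+N^{-1/2}$. Since $x\mapsto x^\beta$ is monotone, it follows that $a_N^\beta=O_{\mathbb P}(r_N^\beta)$.

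Next, on the event, whose probability tends to one, that $c_hh_M\le h_\theta\le C_hh_M$ and $p_\theta\ge c_ph_\theta^s$, apply \eqref{eq:recap-learned-error}. The elementary inequality $(x+y)^\beta\le x^\beta+y^\beta$ for $\beta\in(0,1]$ gives $L_0(h_\theta+D_\Theta a_N)^\beta\le L_0C_h^\beta h_M^\beta+L_0D_\Theta^\beta a_N^\beta$. Moreover, $1/\{(M+1)p_\theta\}\le 1/(c_pc_h^sMh_M^s)$. With $h_M\asymp M^{-1/(\beta+s)}$, both $h_M^\beta$ and $(Mh_M^s)^{-1}$ are of order $M^{-\beta/(\beta+s)}$. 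Combining the terms yields a bound $C\{a_N^\beta+M^{-\beta/(\beta+s)}\}$ on an event of probability tending to one.

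Finally, conclude the $O_{\mathbb P}$ statement. The bound holds except on an event of vanishing probability, and $a_N^\beta/r_N^\beta$ is tight, so for any $\epsilon>0$ the coverage error exceeds $K\{r_N^\beta+M^{-\beta/(\beta+s)}\}$ with probability at most $\epsilon$ for large $K$ and large $N,M$. This gives \eqref{eq:recap-rmave-joint-rate}.

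I expect the main obstacle to be the first step: matching the metric in which \citet{yin2011ensemble} state their rate to $a_N$. This includes the fact that their estimator is defined only up to rotation, and checking that their regularity conditions, as collected in Supplement Assumption~\ref{ass:recap-rmave}, apply with the predictor $\vartheta^{\mathrm{tr}}\sim\pi$ on the compact set $\Theta_0$. The remaining steps are bookkeeping.
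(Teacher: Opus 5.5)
Your proposal is correct and follows the paper's argument essentially step for step. The paper also takes the projector rate from \citet[Theorem~6.1]{yin2011ensemble}, stated as $b_N^4+\log N/(Nb_N^s)+N^{-1/2}$ with $b_N\asymp N^{-1/(s+4)}$, and transfers it to $a_N$ using exactly your polar-factor choice $R=VU^\top$ from the SVD $\widehat B^\top B_0=U\operatorname{diag}(\sigma_1,\ldots,\sigma_s)V^\top$, before substituting into \eqref{eq:recap-joint-error}. The one inequality you cite without proof is settled there by the chain $a_N^2\le 2(1-\sigma_s)\le 2(1-\sigma_s^2)\le 2\|\widehat B\widehat B^\top-B_0B_0^\top\|_{\mathrm{op}}^2$.
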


Thus, as the representation-training budget $N$ and calibration budget $M$ tend to infinity, the conditional coverage converges in probability to the nominal level $1-\alpha$ at each fixed $\theta$ under the stated assumptions.

\paragraph{Fixed simulation designs.}
The finite-simulation guarantee in Theorem~\ref{thm:recap-learned-coverage} also extends to prespecified calibration locations with arbitrary replication and independent simulator draws.
Conditional on the realized design, the same representation and pooling terms control coverage, while the two-sided bound replaces $1/\{(M+1)p_\theta\}$ by $1/(M_\theta+1)$ and assumes continuity of $F_\theta$.
Such designs allow direct control over simulation allocation when a sampling distribution $\pi$ is difficult to specify or feasible parameter configurations form a sparse or constrained set.
Supplement Subsection~\ref{sec:supp-fixed-design} gives the detailed result.

\paragraph{Coverage with dimension selection.}
Now consider the dimension selection procedure of Section~\ref{subsec_learning_representation} for learned representations. 
Specifically, we compare fitted representations by coverage MAE on independent validation simulations, using a separate bank for selection calibration and a fresh bank for final calibration.
Under the score-law and local-mass conditions in Supplement Proposition~\ref{prop:recap-mae-selection}, the selected procedure's expected MAE under $\pi$ is bounded by the smallest population MAE among the fitted candidates plus validation and calibration-bank approximation errors.
This controls MAE for the full procedure with dimension selection.
The details and proof are given in Supplement Section~\ref{sec:supp-mae-selection}.

\section{Simulation Studies}
\label{sec:simulation-study}

We conduct three main simulation studies using the models in Examples~\ref{ex:calibration-matrix-denoising}, \ref{ex:calibration-matrix-completion}, and \ref{ex:calibration-poisson}, covering exact, model-informed, and learned calibration representations, respectively.
We also study spectral clustering under the SBM in Example~\ref{ex:calibration-sbm}, with results reported in Supplement Section~\ref{subsec:simulation-sbm}.
In this experiment, \texttt{RECAP} closely follows the pointwise oracle in coverage and power (Supplement Figure~\ref{fig:simulation-sbm}).

All experiments use nominal coverage $1-\alpha=0.90$ and evaluate coverage and power at twelve fixed target parameters.
We compare \texttt{RECAP} with ambient-space local calibration, \texttt{TRUST++} \citep{cabezas2026tmlr-conformal}, direct \texttt{CatBoost} quantile regression, a plug-in parametric bootstrap, and a pointwise oracle.
Ambient-space local calibration pools simulated scores using neighborhoods in the original parameter space with the same moving-ball rule as in \texttt{RECAP}.
\texttt{CatBoost} quantile regression follows the strategy of \texttt{LF2I} \citep{dalmasso2024lf2i}, directly learning the map $\theta\mapsto q_{\alpha,\theta}$ over the full parameter space.
We use it as a representative global regression method because it directly models the candidate-specific calibration quantile, without requiring additional structures as in other global approaches \citep{masserano2023waldo,alkadhim2024alffi}.
\texttt{TRUST++} is the local calibration method with forest-based pooling proposed in \citet{cabezas2026tmlr-conformal}.
Hyperparameters in \texttt{TRUST++} and \texttt{CatBoost} are tuned using independent simulations.
Supplement Section~\ref{sec:supp-simulation} gives further details including the complete designs, simulation budgets, and method implementations.
For estimated coverage $\widehat c_i$ at target $\theta_i$, we summarize calibration accuracy by $\operatorname{MAE}=12^{-1}\sum_{i=1}^{12}|\widehat c_i-0.90|$.
The reported coverage is evaluated on further independent observations, with the fitted procedures and reusable calibration banks held fixed.
Each power value averages rejection probabilities for two false candidate parameters, with data generated at the corresponding true parameter and each fixed-method cutoff evaluated at the candidate being tested.
Supplement Section~\ref{subsec:supp-sim-diagnostics} gives further details.
We also note that some specialized inferential procedures are available for each of the four examples we study, but our primary comparisons focus on general-purpose calibration frameworks applicable across all four models; hence, we exclude these specialized procedures from our comparisons. 

\subsection{Matrix denoising}
\label{sec:simulation-matrix-denoising}

Following Example~\ref{ex:calibration-matrix-denoising}, we observe $Y=S+\tau W$, where $S\in\mathbb R^{20\times20}$ has rank two, the entries of $W$ are independent $N(0,1/20)$ variables, and $\tau$ equals $0.5/\sqrt{1.25}$ or $1.5/\sqrt{1.25}$ with equal probability.
The score is $T(Y,S)=\|\widehat S_2(Y)-S\|_{\mathrm F}$, where $\widehat S_2$ is the rank-two truncated-SVD estimator, and \texttt{RECAP} uses the exact representation $g(S)=(\lambda_1(S),\lambda_2(S))^\top$.
\texttt{RECAP}, ambient-space local calibration, \texttt{TRUST++}, and \texttt{CatBoost} use the same final bank of $M=20{,}000$ simulated parameter--score pairs.
The twelve targets combine $\lambda_2\in\{1.25,1.75,2.25,2.75\}$ with $\lambda_1-\lambda_2\in\{0.5,1.0,1.5\}$.
For power, the two false matrices have the same singular values but rotated left singular vectors, at Frobenius distances $3$ and $3.5$ from the truth.
See Supplement Section~\ref{subsec:supp-sim-denoising} for details.

Figure~\ref{fig:simulation-denoising} shows that \texttt{RECAP} closely follows the pointwise oracle in both coverage and power, whereas ambient-space calibration is generally conservative and less powerful.
\texttt{TRUST++} and \texttt{CatBoost} exhibit larger parameter-dependent coverage deviations, including both undercoverage and overcoverage.
The mean powers of \texttt{RECAP}, the pointwise oracle, and ambient-space calibration are $0.727$, $0.730$, and $0.590$, respectively, as summarized in Figure~\ref{fig:intro-matrix-denoising}.
The two plug-in bootstrap budgets give similar results and substantial undercoverage, indicating that increasing the resampling budget does not resolve the plug-in discrepancy in this experiment.

\begin{figure}[t]
    \centering
    \includegraphics[width=0.8\textwidth]{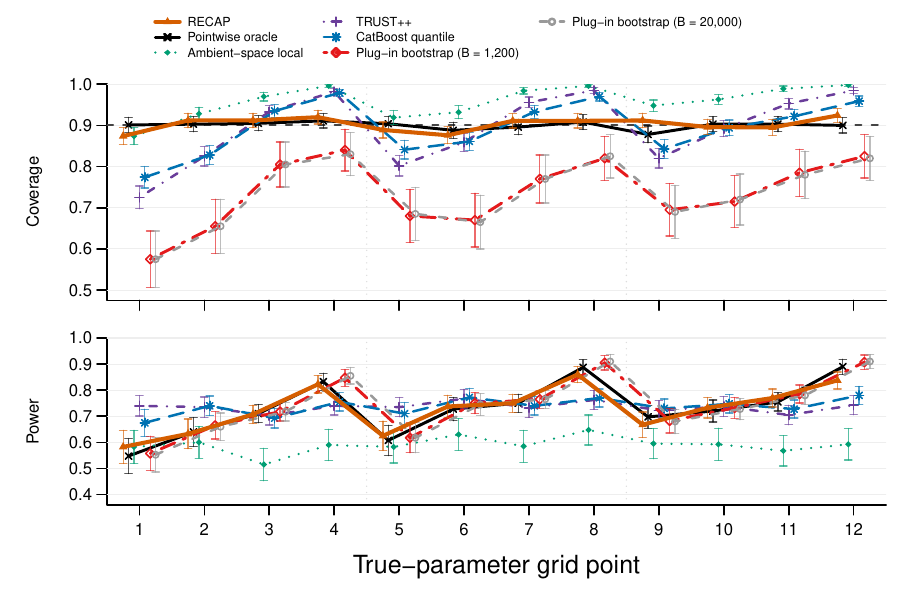}
    \caption{Matrix denoising: coverage and power.
    }
    \label{fig:simulation-denoising}
\end{figure}

\subsection{Noisy matrix completion}
\label{subsec:simulation-matrix-completion}

In the example, we illustrate how model-informed structure can improve calibration in \texttt{RECAP}, without requiring that representation to be exact or learned from simulations.
We construct a fixed, low-dimensional representation for \texttt{RECAP} using signal strength and row and column coherence.
We instantiate Example~\ref{ex:calibration-matrix-completion}, noisy matrix completion, using rank-two matrices in $\mathbb R^{32\times32}$, independent entry-observation probability $0.40$, and Gaussian observation noise with standard deviation $0.35$.
The estimator uses five ridge alternating-least-squares sweeps after spectral initialization, and the score is $\|\widehat M-M\|_{\mathrm F}/32$.
A controlled family varies the singular values $32\rho(1,0.60)$ and the row and column coherence through parameters $\kappa_U$ and $\kappa_V$.
The twelve targets combine $\rho\in\{0.50,0.65,0.80,0.95\}$ with $\kappa_U=\kappa_V\in\{0,0.5,1\}$.
\texttt{RECAP}'s fixed six-coordinate representation comprises normalized signal strength, two log-coherences, and their three pairwise interactions.
Power is evaluated after changing $\rho$ by $0.12$ or $0.18$ toward the proposal midpoint, with all other coordinates fixed.
The complete family and summary construction are given in Supplement Section~\ref{subsec:supp-sim-completion}.

Figure~\ref{fig:simulation-completion} shows that the fixed-summary version of \texttt{RECAP} generally tracks the pointwise oracle in coverage and power, although some coverage deviations remain.
Other calibration methods are generally conservative at lower coherence levels and exhibit undercoverage at several high-coherence targets.
Supplement Figure~\ref{fig:supp-simulation-completion-summary}(a) shows that fixed-summary \texttt{RECAP} has smaller coverage error than the other calibrators and the plug-in bootstrap.
This version of \texttt{RECAP} uses $32{,}000$ simulated datasets in total, compared with $64{,}144$ for each of \texttt{TRUST++} and \texttt{CatBoost}, $192{,}000$ for the pointwise oracle, and approximately $2.88$ million for the bootstrap (panel~(b)).

We also evaluated \texttt{RECAP} with a linear representation learned from the raw matrix entries using the \texttt{RMAVE} ensemble.
It performed much worse than the fixed-summary version in this example.
Note that the signal-strength and coherence summaries are nonlinear functions of the matrix, and supplying them directly was more effective here than using the learned linear projection.
These results illustrate the benefit of incorporating informative features into calibration and motivate nonlinear representation learning as a potential direction for future work.

\begin{figure}[t]
    \centering
    \includegraphics[width=0.8\textwidth]{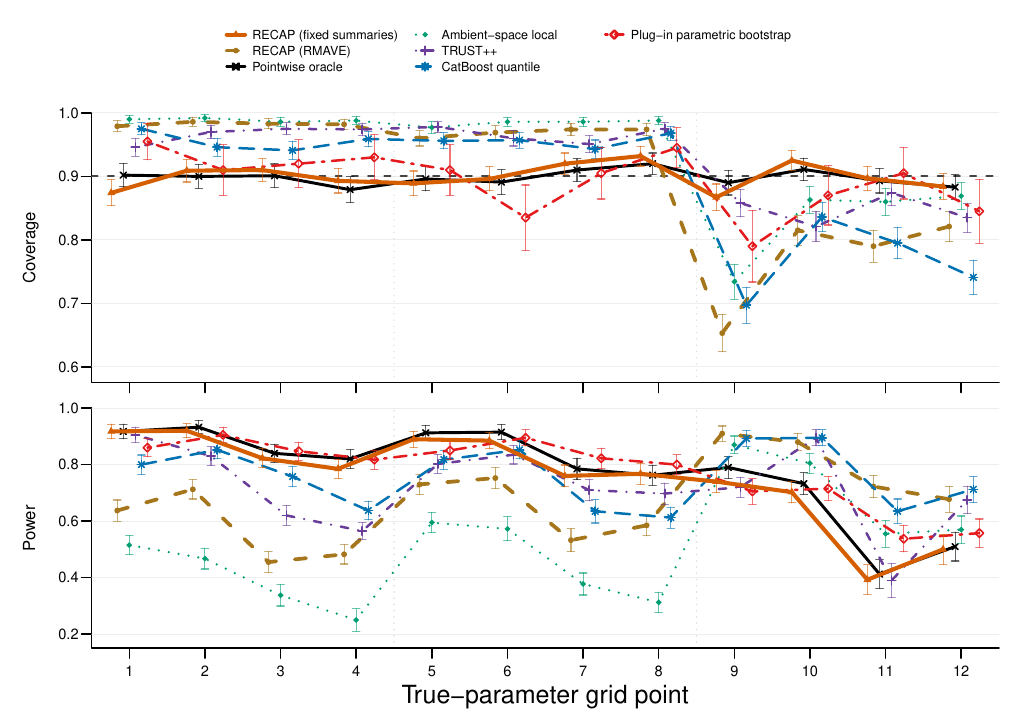}
    \caption{Noisy matrix completion: coverage and power.
    }
    \label{fig:simulation-completion}
\end{figure}

\subsection{Heteroscedastic Poisson regression}
\label{subsec:simulation-poisson}

Our next example on heteroscedastic Poisson regression illustrates how \texttt{RECAP} can learn an unknown calibration representation from simulations. We use $m=d=12$ and $\mu_j(\theta)=e_j\exp(x_j^\top\theta)$ in Example~\ref{ex:calibration-poisson}, with baseline intensities logarithmically spaced from $0.0008$ to $0.016$.
\texttt{RECAP} learns from the standardized twelve-dimensional parameter vector and selects its dimension from $\{1,\ldots,6\}$ using the coverage MAE in Supplement Section \ref{subsec:supp-mae-selection-statement}, evaluated at $24$ independently sampled proposal parameters with $256$ independent observations per parameter.
These selection locations and observations are separate from the twelve-target final evaluation.
The selected dimension is $\widehat s=3$.
We compare to the chi-square benchmark which uses the asymptotic distribution for this problem. 
Supplement Section~\ref{subsec:supp-sim-poisson} gives further details.

In Figure~\ref{fig:simulation-poisson}, \texttt{RECAP} remains close to nominal coverage, with small parameter-dependent deviations, and broadly follows the exact oracle in power.
Ambient-space calibration, \texttt{CatBoost}, the chi-square benchmark, and the plug-in bootstrap systematically undercover, and \texttt{TRUST++} also performs poorly across the target grid. In this example, the mean-regression criterion in \texttt{TRUST++} is uninformative because the conditional mean of the score is constant across parameters, even though its higher moments and quantiles vary with $\vartheta$.

Supplement Figure~\ref{fig:supp-simulation-poisson-summary}(a) shows that \texttt{RECAP} has small undercoverage and overcoverage contributions and substantially smaller MAE than the other nonoracle methods.
Including representation learning, dimension selection, and final evaluation, \texttt{RECAP} uses $276{,}144$ simulated datasets, the same per-method total as \texttt{TRUST++} and \texttt{CatBoost} and far below the approximately $30$ million used by the bootstrap (panel~(b)).

In Supplement Section~\ref{sec:supp-shared-g}, we conduct a separate Poisson experiment in which \texttt{RECAP}, \texttt{TRUST++}, and \texttt{CatBoost} share a learned calibration representation.
Their similar coverage and power suggest that the benefit of \texttt{RECAP} in this experiment comes primarily from exploiting low-dimensional structure in the null score distribution, and the particular calibration method used within that space seems less important.

\begin{figure}[t]
    \centering
    \includegraphics[width=0.8\textwidth]{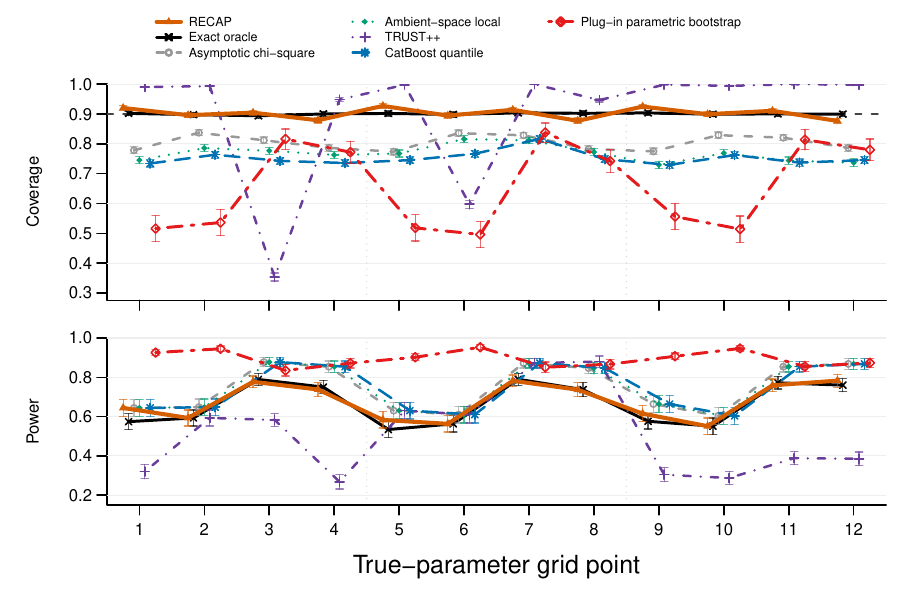}
    \caption{Poisson regression: pointwise coverage and power.
    }
    \label{fig:simulation-poisson}
\end{figure}

\section{Case Study: Coalescent Models in Population Genomics}
\label{sec:real}

Demographic inference in population genetics seeks to recover population-size histories, split times, and migration patterns from genetic variation.  Yet, different demographic parameters can have compensating effects, making joint uncertainty for this problem difficult to characterize \citep{Gutenkunst2009}. 
Although coalescent models can be simulated efficiently \citep{Baumdicker2022}, calibrating inference separately at each candidate parameter can be prohibitively expensive in a high-dimensional model. 
To address these challenges, we use \texttt{RECAP} to learn a low-dimensional representation that allows calibration information to be shared across demographic histories. The learned representation also provides a way to summarize uncertainty in scientifically meaningful combinations of parameters that may be difficult to interpret individually.

\subsection{Data}
\label{subsec:real-data}

We analyze phased genotypes from the NYGC high-coverage GRCh38 release of the 1000 Genomes Project \citep{ByrskaBishop2022,Fairley2020}.
Specifically, we study two African--European population pairs: ESN--GBR, comprising Esan individuals from Nigeria and British individuals from England and Scotland, and GWD--IBS, comprising Gambian individuals from the Western Division of The Gambia and Iberian individuals from Spain.
Each analysis includes 20 unrelated diploid individuals from each population, giving 40 sampled genome copies per population.

For each pair, the observation $X$ contains variant positions and genotypes in four 20-kilobase genomic blocks, one from each of chromosomes 1--4.
The same four blocks are used for both population pairs.
The selection of the four blocks is based on genomic accessibility.
Supplement Table~\ref{tab:real-windows} lists the selected blocks, and Supplement Section~\ref{subsec:supp-real-data} provides further details on data preparation.

Following summary-based approaches to population-genetic inference \citep{Beaumont2002,Bazin2010,Blum2013}, we compute 86 features for each block.
These describe allele-frequency distributions within and across populations, genetic diversity and differentiation, and linkage disequilibrium, which measures associations between alleles at different genomic positions.
Taking the mean and sample standard deviation of each feature across the four blocks gives 172 features.
We then apply componentwise $\operatorname{asinh}$ transformations, standardization, and principal component analysis, retaining 30 components to obtain $Z(X)\in\mathbb R^{30}$.
A separate reference bank of 2,048 simulated parameter--observation pairs determines this preprocessing map, which is held fixed and applied to both simulated and observed data.
Supplement Section~\ref{subsec:supp-real-learning} gives further details.

\subsection{Model}
\label{subsec:real-data-model}

We consider a neutral structured coalescent model with recombination, based on the coalescent model of \citet{Kingman1982} and its extensions for recombination and population structure \citep{Hudson1983,Notohara1990}.
This model traces the ancestry of sampled genome copies backward in time.
We instantiate this model framework to the following two-population demographic history.
Let the two contemporary populations be $A$ and $B$.
We assume they remain split until $T_{\mathrm{split}}$ generations before the present, when they merge into a single ancestral population.
Population sizes are assumed to be piecewise constant over time intervals called epochs.
We consider ten epochs for each contemporary population and four for the ancestral population, resulting in 24 effective population-size parameters.
We allow migration between the two contemporary populations before $T_{\mathrm{split}}$, with separate migration rates in each direction over five time intervals, resulting in ten migration parameters.
The resulting parameter vector $\theta\in\Theta\subseteq\mathbb R^{35}$ contains 20 contemporary effective sizes, four ancestral effective sizes, one split time, and ten migration rates.

Given $\theta$, we generate $X\sim P_\theta$ using the \texttt{msprime} model \citep{Kelleher2016,Baumdicker2022}.
The model first generates random ancestral histories for the 80 chromosome copies in each genomic block, tracing their ancestry backward in time under the specified demographic history.
An ancestral lineage represents the path of a DNA segment through successive generations.
At each genomic position, two lineages merge when they reach a common ancestor.
Migration allows lineages to move between populations, while recombination causes different parts of the same genomic block to follow different lineages.
The model then introduces random mutations along these lineages, producing the variant positions and genotypes in $X$.
The four blocks are simulated independently conditional on $\theta$, with mutation and recombination rates fixed.
Supplement Figure~\ref{fig:supp-real-demographic-model} illustrates the demographic history, and Supplement Section~\ref{subsec:supp-real-model} gives further details on the model and simulation settings.
We assess the model fit to the observed data in \ref{subsec:supp-real-data}.

\subsection{Inference}
\label{subsec:real-application}

Following simulation-based regression approaches \citep{Cabras2015}, we construct a discrepancy between $Z(X)$ and its predicted mean under each candidate $\theta$.
Write $u_j(\theta)=\log(\theta_j/L_j)/\log(U_j/L_j)$ for the normalized log-parameter coordinates, where $L_j$ and $U_j$ are the lower and upper bounds of the support of $\theta_j$ respectively, 
We use the statistic
\begin{equation}
\label{eq:real-score}
T(X,\theta)
=
\left[
\frac{1}{30}\sum_{\ell=1}^{30}
\left\{
\frac{Z_\ell(X)-\widehat m_\ell(u(\theta))-\widehat b_\ell}{\widehat\tau_\ell}
\right\}^{2}
\right]^{1/2},
\end{equation}
where $\widehat m(u(\theta))$ approximates $\mathbb E_\theta[Z(X)]$ using quadratic ridge regression fitted to simulated parameter--summary pairs, and $\widehat b_\ell$ and $\widehat\tau_\ell$ are residual corrections and scales estimated from separate simulations.
All fitted quantities are held fixed before representation learning and calibration.
Supplement Figure~\ref{fig:supp-real-workflow} summarizes the construction.

We learn a calibration representation following the procedure in Section~\ref{subsec_learning_representation}.
Selecting the dimension from $s\in\{1,2,3,4,6,8\}$ on independent simulations gives $\widehat s=1$ and
\begin{equation}
\label{eq:real-representation}
g(\theta)=\widehat B^\top\{\sqrt{3}(2u(\theta)-\mathbf 1)\},
\end{equation}
where $\widehat B\in\mathbb R^{35}$ is the fitted unit-length direction.
Here, the linear transformation in \eqref{eq:real-representation} centers each normalized log-parameter coordinate at zero and scales it to unit variance under the uniform pilot. 
With this learned representation fixed, we generate a fresh calibration bank of $M=16{,}384$ scores.
For each candidate $\theta$, we pool scores whose parameter representations lie in a moving ball centered at $g(\theta)$.
An independent parameter-only pilot determines the radii, targeting approximately 512 pooled scores per candidate.
The adjusted empirical $0.90$ quantile in \eqref{eq:recap-cutoff} gives the cutoff $\widehat q_{0.10,\theta}$.
Supplement Sections~\ref{subsec:supp-real-learning} and \ref{subsec:supp-real-calibration} give further fitting and calibration details. 
We also evaluate RECAP's coverage and power against other comparators in Supplement Section~\ref{subsec:supp-real-validation}.

We now turn to inference using the observed data.
Treating the two population pairs, $X_1$ and $X_2$, as two independent datasets, we apply the same cutoff rule separately to each, obtaining a nominal $90\%$ confidence set $\widehat{\mathcal C}_{0.10}(X_i)$ for its demographic parameter $\theta$ as in \eqref{eq:recap-confidence-set}.
This illustrates the reusability of \texttt{RECAP}: once the statistic, representation, and calibration bank are constructed, they can be applied to future independent datasets under the same model and score without additional calibration simulations.

Following Section~\ref{subsec:representation-inference}, we summarize each confidence set by the interval enclosing its image under the learned representation $g$:
\begin{equation}
\label{eq:real-g-interval}
I_g(X_0)=\left[\inf_{\theta\in\widehat{\mathcal C}_{0.10}(X_0)}g(\theta),\,\sup_{\theta\in\widehat{\mathcal C}_{0.10}(X_0)}g(\theta)\right].
\end{equation}
Figure~\ref{fig:real-application}(a) shows numerical approximations to these intervals, which overlap substantially and are considerably narrower than the full range of $g(\theta)$ over $\Theta$.
By checking the coefficients in $\hat B$, we find that the learned $g(\theta)$ is dominated by two coordinates of $\theta$: deep ancestral size $N_{e,\mathrm{anc},4}$ and split time $T_{\mathrm{split}}$.
$T_{\mathrm{split}}$ is the number of generations before present at which the two contemporary populations merge into their common ancestor when traced backward in time.
$N_{e,\mathrm{anc},4}$ is the effective population size in the oldest ancestral epoch.

To interpret these two parameters jointly, Figures~\ref{fig:real-application}(b,c) show acceptance slices with the other 33 normalized coordinates fixed at $0.5$.
For both population pairs, the downward-sloping lower boundaries indicate that, under this fixed background, a more recent split requires a larger deep ancestral size.
This time--size compensation is consistent with earlier coalescent analyses, where increasing ancestral population size can offset a shorter split time while preserving average sequence divergence \citep{Mailund2011}.
Human demographic studies likewise emphasize the dependence of split-time inference on other demographic parameters, including migration \citep{Gutenkunst2009}.

This application illustrates how \texttt{RECAP} uses a learned low-dimensional representation to share calibration simulations across demographic histories and population pairs and to summarize uncertainty in scientifically meaningful combinations of parameters.

\begin{figure}[tbp]
\centering
\subfigure[Intervals for $g(\theta)$.\label{fig:coal35-g-intervals}]{%
\includegraphics[width=0.3\linewidth]{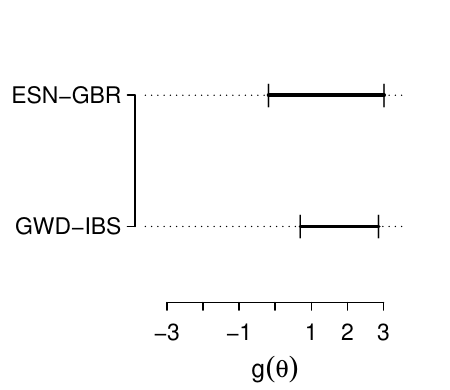}}
\hfill
\subfigure[ESN--GBR.\label{fig:coal35-esn-conditional}]{%
\includegraphics[width=0.3\linewidth]{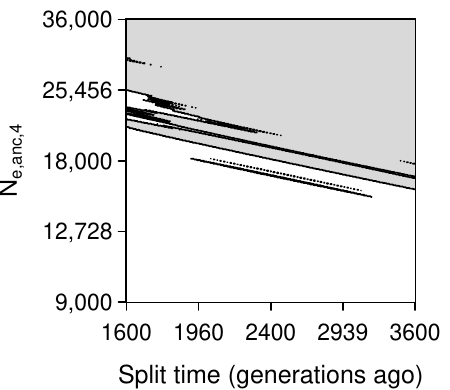}}
\hfill
\subfigure[GWD--IBS.\label{fig:coal35-gwd-conditional}]{%
\includegraphics[width=0.3\linewidth]{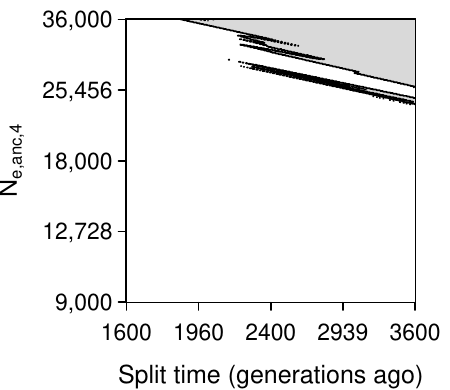}}
\caption{Coalescent-model inference for the two population pairs at nominal coverage $0.90$.
(a) Numerically approximated intervals for $g(\theta)$.
(b,c) Acceptance slices over split time and deep ancestral size, with all other normalized parameters fixed at $0.5$.
Gray denotes acceptance regions.
Supplement Section~\ref{subsec:supp-real-inversion} describes how the confidence sets are approximated numerically.}
\label{fig:real-application}
\end{figure}

\section{Discussion}

In this paper, we introduced \texttt{RECAP}, a general framework for frequentist simulation-based inference that exploits low-dimensional structure. 
The key idea is to calibrate inference in a representation space that preserves, exactly or approximately, the relevant score distribution, so that simulations can be shared across parameters that may be far apart in the original space. 
We considered representations derived from model structure as well as representations learned from simulations through sufficient dimension reduction, and developed finite-simulation coverage guarantees that account for representation error, local pooling, and calibration-bank size. 
In numerical experiments, \texttt{RECAP} provided accurate calibration and competitive power. 
The population genomics application further illustrated the practical use of \texttt{RECAP} in a high-dimensional scientific model and yielded scientifically meaningful findings consistent with existing demographic understanding.

The scope of \texttt{RECAP} extends beyond the models studied in this paper.
SBI is useful when data can be generated from a statistical model, but the likelihood or the sampling distribution of a chosen test statistic is difficult to evaluate.
A simulator is simply a computational way to sample from that model.
Requiring it to describe the data adequately is a modeling requirement shared with all model-based inference, rather than a restriction specific to SBI \citep{cranmer2020frontier}.
The practical requirement is that independent datasets can be simulated at candidate parameter values at an affordable computational cost.
Within this setting, \texttt{RECAP} is particularly useful when the parameter space is high-dimensional but the null score distribution varies mainly through a low-dimensional representation.
Potential applications include a broad range of scientific settings where complex simulators are routinely used, such as particle physics, weather forecasting, climate science, and astrophysics.
In such applications, scientific knowledge or independent training simulations could guide the construction of a calibration representation, allowing \texttt{RECAP} to share simulations across parameters with similar null score distributions.

Several directions remain for future work.
As illustrated by the matrix-completion experiment in Section~\ref{subsec:simulation-matrix-completion}, linear projections may fail to capture useful calibration structure in certain settings. 
Thus, developing nonlinear representation-learning methods is a natural next step.
Alternative pooling rules and calibration methods also deserve study, with attention to their effects on coverage, power, and computational cost.
When only one or a subset of parameters are of interest, an important extension is to develop calibration procedures that account for the remaining nuisance parameters without first constructing a full joint confidence set.
Another useful direction is to understand when and how an existing simulation bank can be safely reused after changing the model or test statistic.
Coverage bounds that account for changes in the null score distribution could clarify when such reuse is justified and when additional simulations or recalibration are needed.
Finally, further applications across different scientific domains would help assess how \texttt{RECAP} performs in a wider range of models and practical settings.

\subsubsection*{Generative AI Use Statement}

Generative AI tools (GPT-5.6 and GPT-6) were used solely in simulation studies, real-data analyses, English-language editing, and proofreading. All AI-assisted outputs were reviewed by the authors, who take full responsibility for the full manuscript.


\begingroup
\spacingset{1.45}
\setlength{\bibsep}{0pt}
\bibliographystyle{agsm}
\bibliography{bib}
\endgroup

\clearpage

\appendix

\begin{center}
{%
  \begin{minipage}{0.88\textwidth}
    \centering
    {\LARGE\bfseries Supplementary Material for}
    {\LARGE\bfseries RECAP: Representation-Based Calibration with Amortized Pooling for Frequentist Simulation-Based Inference}
  \end{minipage}%
}

\ifnum\blind=0\relax
\vspace{1.5em}
{\large Runshi Tang \quad and \quad Genevera I. Allen}\\[0.5em]
Department of Statistics, Columbia University, USA
\fi
\end{center}

\bigskip

\section{Preliminaries}

\subsection{Sufficient dimension reduction}
\label{subsec_sdr}

For a predictor $X\in\mathbb R^p$ and a response $Y\in\mathbb R$, sufficient dimension reduction (SDR) seeks a lower-dimensional representation of $X$ that preserves the conditional distribution of $Y$ \citep{cook1998regression,li2018sdr}.
In linear SDR, $\operatorname{span}(B)$ is a dimension-reduction subspace if
\[
Y\perp\!\!\!\perp X\mid B^\top X.
\]
The \emph{central subspace}, denoted by $\mathcal S_{Y\mid X}$, is the intersection of all dimension-reduction subspaces, provided that this intersection is itself a dimension-reduction subspace.
Its dimension is called the \emph{structural dimension}.
The linearity refers to the projection $B^\top X$, not to the dependence of $Y$ on the projected predictor.

When $\mathbb E|Y|<\infty$, a mean dimension-reduction subspace instead satisfies $\mathbb E(Y\mid X)=\mathbb E(Y\mid B^\top X)$.
The \emph{central mean subspace}, denoted by $\mathcal S_{\mathbb E(Y\mid X)}$, is the intersection of all such subspaces when that intersection also preserves the conditional mean.
We assume that the subspaces discussed below exist.
Then $\mathcal S_{\mathbb E(Y\mid X)}\subseteq\mathcal S_{Y\mid X}$, with potentially strict inclusion: directions affecting conditional variance or other distributional features can be absent from the central mean subspace.

\subsection{\texttt{RMAVE} ensemble}
\label{subsec_rmave}

Minimum average variance estimation (\texttt{MAVE}) estimates the central mean subspace through local linear regression, while refined \texttt{MAVE} (\texttt{RMAVE}) updates the smoothing weights using the estimated reduced space \citep{xia2002adaptive}.
To target the full central subspace, the ensemble approach of \citet{yin2011ensemble} combines regressions for transformed responses.
A \emph{characterizing family} is a collection of transformations whose central mean subspaces together span $\mathcal S_{Y\mid X}$.
One such choice is the characteristic-function family, represented by the real-valued transformations $\cos(\tau Y)$ and $\sin(\tau Y)$ over $\tau\in\mathbb R$.
Their conditional expectations jointly determine the conditional distribution of $Y$ given $X$.

In practice, select finitely many paired sine and cosine transformations, denoted by $f_1,\ldots,f_L$.
Given independent observations $\{(X_i,Y_i)\}_{i=1}^n$ and a chosen dimension $s$, the \texttt{RMAVE} ensemble uses the local least-squares criterion
\begin{equation}
\label{eq:rmave-ensemble-objective}
\min_{\substack{B^\top B=I_s\\ \{a_{j\ell},b_{j\ell}\}}}
\sum_{\ell=1}^L\sum_{j=1}^n\sum_{i=1}^n
w_{ij}
\left\{
f_\ell(Y_i)-a_{j\ell}-b_{j\ell}^\top B^\top(X_i-X_j)
\right\}^2,
\end{equation}
where $B\in\mathbb R^{p\times s}$, $a_{j\ell}\in\mathbb R$, and $b_{j\ell}\in\mathbb R^s$.
At a current basis $\widetilde B$, the normalized weights satisfy $w_{ij}\propto K_h(\widetilde B^\top(X_i-X_j))$ and $\sum_{i=1}^n w_{ij}=1$, where $K_h$ is a nonnegative kernel with bandwidth $h$.
The procedure alternates between updating the local regression coefficients and the common basis with the weights fixed, then recomputes the weights in the updated reduced space.
Unlike ordinary \texttt{RMAVE} applied only to $Y$, this ensemble can detect directions that affect transformed conditional means even when they do not affect $\mathbb E(Y\mid X)$.

\section{Additional calibration representations from orthogonal invariance}
\label{sec:supp-orthogonal-invariance}

We present three additional examples in which orthogonal invariance yields exact calibration representations through Proposition~\ref{prop:invariant-score-representation}.
Throughout, sample sizes, dimensions, and tuning parameters are fixed.

\begin{Example}[Principal component analysis]
\label{ex:calibration-spiked-covariance}
Consider the Gaussian spiked covariance model \citep{johnstone2001distribution}
\[
X_i\overset{\mathrm{i.i.d.}}{\sim}N_p(0,\Sigma),
\qquad
\Sigma=\sigma^2\{I_p+U\operatorname{diag}(\kappa_1,\ldots,\kappa_r)U^\top\},
\qquad i=1,\ldots,n,
\]
where $U^\top U=I_r$ and $\kappa_1\geq\cdots\geq\kappa_r>0$.
For the sample principal subspace $\widehat U$, consider the projection-loss score
\[
T(X,\Sigma)=\|\widehat U\widehat U^\top-UU^\top\|_{\mathrm F}.
\]
Orthogonal rotations preserve this loss, while a common rescaling of the observations leaves the sample principal subspace unchanged.
Consequently, $g(\Sigma)=(\kappa_1,\ldots,\kappa_r)^\top$ is an exact calibration representation.
The score distribution depends only on the $r$ spike strengths, not on the orientation of the population subspace or the overall scale $\sigma^2$.
\end{Example}

\begin{Example}[Canonical correlation analysis]
\label{ex:calibration-canonical-correlation}
Consider the canonical correlation analysis model \citep{hotelling1936relations} with independent observations from a centered jointly Gaussian model after whitening both marginal covariance matrices.
Write the cross-covariance matrix as
\[
\operatorname{Cov}(X,Y)=U\operatorname{diag}(\rho_1,\ldots,\rho_r)V^\top,
\]
where $U$ and $V$ have orthonormal columns and $\rho_1,\ldots,\rho_r$ are the ordered nonzero canonical correlations.
For an equivariant estimator and a canonical-subspace loss invariant under separate orthogonal rotations of the two coordinate systems, the score distribution depends only on the canonical correlations.
Thus, $g(\theta)=(\rho_1,\ldots,\rho_r)^\top$ is an exact calibration representation, irrespective of the canonical directions.
\end{Example}

\begin{Example}[Matrix regression with a random design]
\label{ex:calibration-matrix-sensing}
Consider the matrix regression model \citep{rechtfazelparrilo2010guaranteed,candesplan2011tight}
\[
y_i=\langle A_i,M\rangle+\sigma\varepsilon_i,
\qquad i=1,\ldots,n,
\]
where $M$ has rank $r$, the vectorized design matrices $A_i$ are independent isotropic Gaussian vectors, and the noise law is fixed and independent of the design matrices.
For orthogonal $Q$ and $R$, the transformations $A_i\mapsto QA_iR^\top$ and $M\mapsto QMR^\top$ preserve both the design distribution and the inner products $\langle A_i,M\rangle$.
For an equivariant estimator $\widehat M$ and the Frobenius-loss score $\|\widehat M-M\|_{\mathrm F}$, an exact representation at a fixed noise scale $\sigma>0$ is
\[
g(M)=\left(\frac{\lambda_1(M)}{\sigma},\ldots,\frac{\lambda_r(M)}{\sigma}\right)^\top,
\]
where the nonzero singular values are ordered.
As in matrix denoising, the score distribution does not depend on the singular vectors of $M$.
This representation concerns the joint simulator that generates both the random design matrices and the responses.
\end{Example}

\section{Additional calibration representations from permutation invariance}
\label{sec:supp-permutation-invariance}

We present two additional examples in which permutation invariance contributes to exact calibration representations.
The Gaussian-mixture example also uses translation and orthogonal invariance, while the regression example combines coordinate permutations with sign changes.
Throughout, sample sizes, dimensions, and tuning parameters are fixed.

\begin{Example}[Spectral clustering in isotropic Gaussian mixtures]
\label{ex:calibration-gaussian-mixtures}
Conditionally on fixed labels $z_i\in\{1,\ldots,K\}$, suppose that
\[
Y_i=\mu_{z_i}+\sigma\varepsilon_i,
\qquad
\varepsilon_i\overset{\mathrm{i.i.d.}}{\sim}N_p(0,I_p),
\qquad i=1,\ldots,n,
\]
with fixed $\sigma>0$.
Consider a clustering procedure $\widehat z(Y)$ that is equivariant under observation permutations and invariant, up to relabeling its output, under common translations and orthogonal rotations.
Use the label-invariant misclustering score
\[
T(Y,\theta)
=
\frac{1}{n}\min_{\eta\in\mathfrak S_K}
\sum_{i=1}^n\mathbf 1\{\widehat z_i(Y)\neq\eta(z_i)\},
\]
where $\mathfrak S_K$ is the set of permutations of the component labels.
Translations and rotations remove the absolute location and orientation of the component centers, while observation permutations remove the identities of the observations assigned to each component.
An exact representation is
\[
g(\theta)
=
\operatorname{canon}\left(
n_1,\ldots,n_K,
\left\{\frac{\|\mu_k-\mu_\ell\|_2}{\sigma}\right\}_{1\leq k<\ell\leq K}
\right),
\]
where $n_k=\sum_{i=1}^n\mathbf 1\{z_i=k\}$ and $\operatorname{canon}$ selects a canonical representative under simultaneous relabeling of the component sizes and the indexed pairwise-distance array.
Thus, the high-dimensional centers and observation labels are reduced to component sizes and standardized separation geometry.
This gives an exact calibration structure for isotropic Gaussian-mixture clustering procedures with the stated invariances \citep{loffler2021optimality}.
\end{Example}

\begin{Example}[Sparse random-design regression]
\label{ex:calibration-sparse-regression}
Consider $Y=X\beta+\sigma\varepsilon$, where the entries of $X\in\mathbb R^{n\times p}$ are independent standard Gaussian variables, the noise is independent of $X$, and $\sigma>0$ is fixed.
Let $\widehat\beta$ be a signed-permutation-equivariant sparse estimator.
The joint model is invariant under simultaneous signed permutations of the design columns and coefficient coordinates.
An invariant score, such as $\|\widehat\beta-\beta\|_2$ or support-recovery error, therefore depends on $\beta$ only through its coefficient magnitudes, rather than their locations or signs.
For an $s$-sparse coefficient vector, an exact representation is
\[
g(\beta)
=
\operatorname{sort}\left(
\frac{|\beta_j|}{\sigma}:j\in\operatorname{supp}(\beta)
\right),
\]
with zeros appended to obtain $s$ coordinates when the support has fewer than $s$ elements.
The calibration dimension is therefore at most $s$, even though the unknown support ranges over subsets of $\{1,\ldots,p\}$ \citep{wainwright2009sharp}.
\end{Example}

\section{Detailed implementation of \texttt{RECAP}}
\label{sec:supp-recap-algorithm}

We provide a more detailed implementation of \texttt{RECAP} in Algorithm~\ref{alg:recap-detailed}.
The algorithm combines representation learning (Algorithm~\ref{alg:recap-representation}) and moving-ball pooling (Algorithm~\ref{alg:recap-moving-ball}) with the subsequent calibration and inference steps.
When a known representation is available, the learning step can be skipped.

\begin{algorithm}[htbp]
\caption{Detailed implementation of \texttt{RECAP} with moving-ball pooling}
\label{alg:recap-detailed}
\small
\textbf{Input:} Simulator $P_\theta$, statistic $T$, prior $\pi$, level $\alpha$, calibration budget $M$, pilot size $M_{\mathrm{pil}}$, target neighborhood mass $\rho$, and representation-learning settings if $g$ is unknown.
\begin{algorithmic}[1]
\Statex \textbf{Representation learning and pooling}
\If{$g$ is unknown}
    \State Generate an independent training bank by \eqref{eq:recap-training-bank}.
    \State Apply Algorithm~\ref{alg:recap-representation} to obtain $\widehat g$ for each candidate dimension.
\EndIf
\State For each candidate representation, construct a moving-ball pooling rule using Algorithm~\ref{alg:recap-moving-ball}, where $h(u)$ is the $\lceil M_{\mathrm{pil}}\rho\rceil$th smallest pilot distance to $u$.
\State If needed, select the representation dimension using independent validation simulations as in Supplement Section~\ref{subsec:supp-mae-selection-statement}.

\Statex \textbf{Calibration bank}
\For{$i=1,\ldots,M$}
    \State Draw $\vartheta_i\sim\pi$, simulate $X_i\sim P_{\vartheta_i}$, and store $(g(\vartheta_i),Z_i)$, where $Z_i=T(X_i,\vartheta_i)$.
\EndFor

\Statex \textbf{Calibration and inference}
\For{each candidate $\theta\in\Theta$}
    \State Compute $u=g(\theta)$ and $h(u)$.
    \State Select $\mathcal I_\theta=\{i:\|g(\vartheta_i)-u\|_2\leq h(u)\}$ and set $M_\theta=|\mathcal I_\theta|$.
    \State Set $k_\theta=\lceil(M_\theta+1)(1-\alpha)\rceil$.
    \State Set $\widehat q_{\alpha,\theta}$ to the $k_\theta$th smallest score among $\{Z_i:i\in\mathcal I_\theta\}$, or $+\infty$ if $k_\theta>M_\theta$.
    \State Accept $\theta$ if $T(X_0,\theta)\leq\widehat q_{\alpha,\theta}$; otherwise reject it.
\EndFor
\end{algorithmic}
\textbf{Output:} The confidence set $\widehat{\mathcal C}_\alpha(X_0)$ in \eqref{eq:recap-confidence-set}. The representation, pooling rule, and calibration bank can be reused for future independent datasets.
\end{algorithm}

\section{Heteroscedastic Poisson regression: calibration and moment structure}
\label{sec:supp-poisson}

We provide the calculations related to Example~\ref{ex:calibration-poisson}.
Throughout, $Y_j$ are independent with means $\mu_j(\theta)=ne_j\exp(x_j^\top\theta)>0$, and the design vectors $x_j$ and exposure factors $e_j$ are fixed.

\subsection*{Exact distribution and numerical calibration}

For the Pearson score $T(Y,\theta)=\sum_{j=1}^m\{Y_j-\mu_j(\theta)\}^2/\mu_j(\theta)$, independence gives
\begin{align*}
F_\theta(t)
&=
\exp\left\{-\sum_{j=1}^m\mu_j(\theta)\right\}
\sum_{k_1,\ldots,k_m\geq0}
\left\{\prod_{j=1}^m\frac{\mu_j(\theta)^{k_j}}{k_j!}\right\}\\
&\qquad\times
\mathbf 1\left\{
\sum_{j=1}^m\frac{\{k_j-\mu_j(\theta)\}^2}{\mu_j(\theta)}
\leq t
\right\}.
\end{align*}
Thus, the full mean vector determines the finite-sample score distribution.
Its exact quantile $q_{\alpha,\theta}=\inf\{t:F_\theta(t)\geq1-\alpha\}$ yields the confidence set $\mathcal C_\alpha(Y)=\{\theta:T(Y,\theta)\leq q_{\alpha,\theta}\}$, with
\[
\mathbb P_\theta\{\theta\in\mathcal C_\alpha(Y)\}
=
F_\theta(q_{\alpha,\theta})
\geq1-\alpha.
\]
The inequality can be strict because the score distribution is discrete.

A direct numerical calculation truncates each count at an integer $K_j$, producing $\prod_{j=1}^m(K_j+1)$ joint states.
Let $F_{\theta,K}(t)$ denote the resulting sum without renormalizing the retained probabilities.
Then
\[
0\leq F_\theta(t)-F_{\theta,K}(t)
\leq
\mathbb P_\theta\left\{\bigcup_{j=1}^m\{Y_j>K_j\}\right\}
\leq
\sum_{j=1}^m\mathbb P_\theta(Y_j>K_j).
\]
The first inequality follows because the omitted summands are nonnegative, and the last follows by bounding the indicator of a union by the sum of its indicators.

An exactness certificate is also available.
Choose $K_j\geq\lceil\mu_j(\theta)\rceil$ and define
\[
b_{\theta,K}
=
\min_{1\leq j\leq m}
\frac{\{K_j+1-\mu_j(\theta)\}^2}{\mu_j(\theta)}.
\]
Every omitted count vector has at least one component exceeding its truncation point, whose score contribution is at least $b_{\theta,K}$.
Since all other contributions are nonnegative, $F_{\theta,K}(t)=F_\theta(t)$ for every $t<b_{\theta,K}$.
Consequently, if the quantile obtained from $F_{\theta,K}$ is finite and lies below $b_{\theta,K}$, it equals $q_{\alpha,\theta}$.
This provides a numerical pointwise oracle when feasible, although the component distributions and their convolution generally change with the candidate parameter, requiring the numerical calibration to be recomputed across candidates and making dense confidence-set inversion computationally costly in practice.

\subsection*{Large-exposure approximation}

Fix $m$ and consider a sequence for which $\min_j\mu_j(\theta)\to\infty$, as occurs when $n\to\infty$ with each $e_j\exp(x_j^\top\theta)$ bounded away from zero.
For any fixed $t=(t_1,\ldots,t_m)^\top\in\mathbb R^m$, independence and the Poisson characteristic function give
\begin{align*}
&\mathbb E_\theta\exp\left\{
\mathrm{i}\sum_{j=1}^m t_j\frac{Y_j-\mu_j(\theta)}{\sqrt{\mu_j(\theta)}}
\right\}\\
&\qquad=
\exp\left[
\sum_{j=1}^m\mu_j(\theta)
\left\{
\exp\left(\frac{\mathrm{i}t_j}{\sqrt{\mu_j(\theta)}}\right)
-1-\frac{\mathrm{i}t_j}{\sqrt{\mu_j(\theta)}}
\right\}
\right]\\
&\qquad=
\exp\left\{-\frac{1}{2}\sum_{j=1}^m t_j^2+R_n(t)\right\},
\end{align*}
where the Taylor remainder satisfies
\[
|R_n(t)|
\leq
\frac{1}{6}\sum_{j=1}^m\frac{|t_j|^3}{\sqrt{\mu_j(\theta)}}
\longrightarrow0.
\]
The characteristic-function continuity theorem states that pointwise convergence of characteristic functions to a characteristic function continuous at the origin implies convergence in distribution.
Here, the limit is the characteristic function of $N_m(0,I_m)$, so the vector of standardized residuals converges to that distribution.
Because convergence in distribution is preserved under continuous maps, applying the sum-of-squares map yields
\[
T(Y,\theta)\overset{d}{\longrightarrow}\chi_m^2.
\]
The limiting distribution is continuous, so its $(1-\alpha)$ quantile gives asymptotic coverage $1-\alpha$.

\subsection*{Score moments and inverse-intensity summaries}

First consider $Y\sim\operatorname{Poisson}(\mu)$ and write $D=Y-\mu$ and $W=D^2/\mu$.
The centered moment-generating function is
\[
\mathbb E[\exp(tD)]
=
\exp\{\mu(e^t-1-t)\}
=
\exp\left\{\mu\sum_{\ell=2}^{\infty}\frac{t^\ell}{\ell!}\right\}.
\]
Expanding this expression and matching the coefficients of $t^2$, $t^4$, and $t^6$ gives
\[
\mathbb E[D^2]=\mu,
\qquad
\mathbb E[D^4]=\mu+3\mu^2,
\qquad
\mathbb E[D^6]=\mu+25\mu^2+15\mu^3.
\]
For example, the coefficient of $t^6$ is
\[
\frac{\mu}{6!}
+
\mu^2\left\{\frac{1}{2!4!}+\frac{1}{2(3!)^2}\right\}
+
\frac{\mu^3}{3!(2!)^3},
\]
which becomes $\mu+25\mu^2+15\mu^3$ after multiplication by $6!$.
Therefore,
\[
\mathbb E[W]=1,
\qquad
\mathbb E[W^2]=3+\mu^{-1},
\qquad
\mathbb E[W^3]=15+25\mu^{-1}+\mu^{-2}.
\]
Subtracting the appropriate products of lower moments yields
\begin{align*}
\operatorname{Var}(W)
&=\mathbb E[W^2]-\{\mathbb E[W]\}^2
=2+\mu^{-1},\\
\kappa_3(W)
&=\mathbb E[W^3]-3\mathbb E[W^2]\mathbb E[W]+2\{\mathbb E[W]\}^3
=8+22\mu^{-1}+\mu^{-2}.
\end{align*}

Returning to the regression model, define
\[
s_\ell(\theta)
=
\sum_{j=1}^m\{e_j\exp(x_j^\top\theta)\}^{-\ell},
\qquad \ell\geq1.
\]
Since $\sum_j\mu_j(\theta)^{-\ell}=n^{-\ell}s_\ell(\theta)$ and cumulants add for independent random variables, the preceding identities imply
\begin{equation}
\label{eq:poisson-score-cumulants}
\begin{aligned}
\mathbb E_\theta[T(Y,\theta)]&=m,\\
\operatorname{Var}_\theta(T(Y,\theta))&=2m+\frac{s_1(\theta)}{n},\\
\kappa_3(T(Y,\theta))&=8m+\frac{22s_1(\theta)}{n}+\frac{s_2(\theta)}{n^2}.
\end{aligned}
\end{equation}
Thus, $s_1$ and $s_2$ determine the variance and third cumulant, including their deviations from the $\chi_m^2$ values.

Higher moments require additional summaries in general.
For example, the same expansion gives
\[
\mathbb E[D^8]=\mu+119\mu^2+490\mu^3+105\mu^4.
\]
Using $\mathbb E[W]=1$, the fourth cumulant is
\[
\kappa_4(W)
=
\mathbb E[W^4]-4\mathbb E[W^3]-3\{\mathbb E[W^2]\}^2+12\mathbb E[W^2]-6
=
48+384\mu^{-1}+112\mu^{-2}+\mu^{-3}.
\]
Consequently,
\[
\kappa_4(T(Y,\theta))
=
48m+\frac{384s_1(\theta)}{n}
+\frac{112s_2(\theta)}{n^2}
+\frac{s_3(\theta)}{n^3}.
\]
The fourth cumulant also depends on $s_3(\theta)$, showing that the first two inverse-intensity summaries do not generally determine the full score distribution.

\section{Proofs and technical details for the coverage guarantees}
\label{sec:supp-coverage-theory}

Throughout this section, write $\gamma_\theta=\mathbb P_\theta\{\theta\in\widehat{\mathcal C}_\alpha(X_0)\mid\mathcal D\}$ for the conditional coverage probability, averaging over the final calibration bank and $X_0$ with $\mathcal D$ held fixed.

\paragraph{Proof structure.}
The fixed-parameter coverage arguments have five principal steps.
\begin{enumerate}[leftmargin=*,nosep]
    \item Identify the pooled-score law by independence, then transfer its exchangeable-rank coverage using the discrepancy $\Delta_\theta$.
    \item Bound the pooling discrepancy by the H\"older condition and identify the local mass by integration over a representation-space ball.
    \item Propagate subspace error through the operator-norm triangle inequality, and translate the \texttt{RMAVE} projector rate using an orthogonal alignment.
    \item Analyze the pilot radius through the distribution and moments of a uniform order statistic.
    \item Extend coverage to fixed calibration locations using quantile coupling for independent, potentially nonidentically distributed scores.
\end{enumerate}

\subsection{Proof of Theorem~\ref{thm:recap-finite-coverage}}

\begin{proof}
Fix the information $\mathcal D$ throughout the argument.
By Assumption~\ref{ass:recap-independent-stages}, conditional on $\mathcal D$, the parameter--score pairs $(\vartheta_i,Z_i)$ are i.i.d., and the observed score $T(X_0,\theta)$ has distribution $F_\theta$ and is independent of these pairs.
Let $I_i=\mathbf 1\{g(\vartheta_i)\in\mathcal N(g(\theta))\}$.
The indicators are independent Bernoulli variables with success probability $p_\theta$, so $M_\theta\sim\operatorname{Binomial}(M,p_\theta)$.
Conditional on a particular indicator vector having selected indices $j_1<\cdots<j_m$, independence of the parameter--score pairs gives
\[
\mathbb P\{Z_{j_1}\leq t_1,\ldots,Z_{j_m}\leq t_m\mid I_1,\ldots,I_M,\mathcal D\}
=\prod_{\ell=1}^mF_\theta^{\mathrm{mix}}(t_\ell).
\]
The right-hand side does not depend on the particular selected indices.
Averaging over indicator vectors with $m$ successes therefore shows that, conditional on $M_\theta=m$, the selected scores are independent with common distribution $F_\theta^{\mathrm{mix}}$.

Set $k=\lceil(m+1)(1-\alpha)\rceil$ and introduce an independent score $Z^{\mathrm{mix}}$ with this same distribution.
When $k=m+1$, the cutoff is $+\infty$ and its acceptance probability is $1=k/(m+1)$.
For $k\leq m$, attach independent continuous uniform marks to the $m+1$ scores and order the score--mark pairs lexicographically.
Exchangeability makes the rank of $Z^{\mathrm{mix}}$ uniform on $\{1,\ldots,m+1\}$, as every label has the same probability of occupying any specified rank, and exactly one label occupies it.
If this rank is at most $k$, then $Z^{\mathrm{mix}}$ is no larger than the $k$th calibration score.
Consequently,
\[
\mathbb P\{Z^{\mathrm{mix}}\leq\widehat q_{\alpha,\theta}\mid M_\theta=m,\mathcal D\}
\geq\frac{k}{m+1}.
\]
If $F_\theta^{\mathrm{mix}}$ is continuous, score ties have probability zero and the two events coincide, giving equality.

The observed score is independent of the calibration bank and has distribution $F_\theta$.
Thus, with both distribution functions extended to equal one at $+\infty$,
\[
\begin{aligned}
&\left|\mathbb P_\theta\{T(X_0,\theta)\leq\widehat q_{\alpha,\theta}\mid M_\theta=m,\mathcal D\}
-\mathbb P\{Z^{\mathrm{mix}}\leq\widehat q_{\alpha,\theta}\mid M_\theta=m,\mathcal D\}\right|\\
&\qquad=\left|\mathbb E\left[F_\theta(\widehat q_{\alpha,\theta})-F_\theta^{\mathrm{mix}}(\widehat q_{\alpha,\theta})\mid M_\theta=m,\mathcal D\right]\right|
\leq\Delta_\theta.
\end{aligned}
\]
Since $k/(m+1)\geq1-\alpha$, averaging over $M_\theta$ proves \eqref{eq:recap-coverage-lower}.
Under continuity, $0\leq k/(m+1)-(1-\alpha)\leq1/(m+1)$, so
\[
|\gamma_\theta-(1-\alpha)|\leq\Delta_\theta+\mathbb E\left[\frac{1}{M_\theta+1}\,\middle|\,\mathcal D\right].
\]
Using $1/(m+1)=\int_0^1x^m\,dx$ and the binomial expansion gives
\[
\mathbb E\left[\frac{1}{M_\theta+1}\,\middle|\,\mathcal D\right]
=\int_0^1(1-p_\theta+p_\theta x)^M\,dx
=\frac{1-(1-p_\theta)^{M+1}}{(M+1)p_\theta}.
\]
This proves \eqref{eq:recap-coverage-two-sided}.
When $p_\theta=0$, every indicator is zero almost surely, so the cutoff is $+\infty$ and $\gamma_\theta=1$.
\end{proof}

\subsection{Proof of Corollary~\ref{cor:recap-moving-ball}}

\begin{proof}
Conditional on $\mathcal D$, the definition of $H_g$ and conditioning on the pooling event give
\[
F_\theta^{\mathrm{mix}}(t)
=\mathbb E\left[H_g(t,g(\vartheta))\,\middle|\,\|g(\vartheta)-g(\theta)\|_2\leq h_\theta,\mathcal D\right].
\]
Subtract $F_\theta(t)$, add and subtract $H_g(t,g(\theta))$, and apply \eqref{eq:recap-holder}.
For every $t$,
\[
\begin{aligned}
|F_\theta^{\mathrm{mix}}(t)-F_\theta(t)|
&\leq|H_g(t,g(\theta))-F_\theta(t)|\\
&\quad+\mathbb E\left[|H_g(t,g(\vartheta))-H_g(t,g(\theta))|\,\middle|\,\|g(\vartheta)-g(\theta)\|_2\leq h_\theta,\mathcal D\right]\\
&\leq\varepsilon_{\mathrm{rep}}(g)+L_0h_\theta^\beta.
\end{aligned}
\]
Taking the supremum over $t$ proves the discrepancy bound, and Theorem~\ref{thm:recap-finite-coverage} gives both coverage conclusions.
There is only one representation-error term because the mixture is an average of the actual conditional distribution $H_g$.

For the rate statement following Corollary~\ref{cor:recap-moving-ball}, let $f_g$ be the density of $g(\vartheta)$ and $u_\theta=g(\theta)$.
Continuity and positivity at $u_\theta$ give a radius $r_0>0$ such that $f_g(u_\theta)/2\leq f_g(u)\leq2f_g(u_\theta)$ whenever $\|u-u_\theta\|_2\leq r_0$.
Let $v_s$ be the Lebesgue volume of the unit ball in $\mathbb R^s$.
Changing variables $u=u_\theta+h_\theta v$ in the density integral shows that, for $0<h_\theta\leq r_0$,
\[
\frac{f_g(u_\theta)v_s}{2}h_\theta^s\leq p_\theta\leq2f_g(u_\theta)v_sh_\theta^s.
\]
For an exact representation, \eqref{eq:recap-moving-ball-error} is therefore bounded by a constant times $h_\theta^\beta+(Mh_\theta^s)^{-1}$.
If $h_\theta\asymp M^{-1/(\beta+s)}$, then
\[
h_\theta^\beta\asymp M^{-\beta/(\beta+s)},
\qquad
\frac{1}{Mh_\theta^s}\asymp M^{-1+s/(\beta+s)}=M^{-\beta/(\beta+s)}.
\]
This proves the asserted rate.
\end{proof}

\subsection{Proof of Theorem~\ref{thm:recap-learned-coverage}}

\begin{proof}
Fix $\mathcal D$ and choose $R\in\mathcal O(s)$ attaining the infimum in \eqref{eq:recap-subspace-error}; existence follows from compactness of $\mathcal O(s)$ and continuity of the norm.
For every calibration parameter $\vartheta\in\Theta_0$ selected by the learned ball,
\[
\begin{aligned}
\|B_0^\top(\vartheta-\theta)\|_2
&=\|R^\top B_0^\top(\vartheta-\theta)\|_2\\
&\leq\|\widehat B^\top(\vartheta-\theta)\|_2
+\|\widehat B-B_0R\|_{\mathrm{op}}\|\vartheta-\theta\|_2\\
&\leq h_\theta+D_\Theta a_N.
\end{aligned}
\]
The exact representation and the H\"older condition therefore imply
\[
\sup_t|F_\vartheta(t)-F_\theta(t)|\leq L_0(h_\theta+D_\Theta a_N)^\beta.
\]
Averaging $F_\vartheta$ over the selected parameter distribution proves the first inequality in \eqref{eq:recap-subspace-transfer}.
The remaining finite-simulation conclusions follow from Theorem~\ref{thm:recap-finite-coverage}.

For $x,y\geq0$ and $0<\beta\leq1$, $(x+y)^\beta\leq x^\beta+y^\beta$.
Indeed, when $x+y>0$, put $t=x/(x+y)$ and use $t^\beta+(1-t)^\beta\geq t+(1-t)=1$; the case $x+y=0$ is immediate.
On the radius and mass event specified after Theorem~\ref{thm:recap-learned-coverage}, \eqref{eq:recap-learned-error} consequently yields
\[
|\gamma_\theta-(1-\alpha)|
\leq L_0C_h^\beta h_M^\beta+L_0D_\Theta^\beta a_N^\beta
+\frac{1}{(M+1)c_pc_h^sh_M^s}.
\]
Taking $C=\max\{L_0C_h^\beta,L_0D_\Theta^\beta,(c_pc_h^s)^{-1}\}$ proves \eqref{eq:recap-joint-error} on this event, whose probability tends to one.
If $h_M\asymp M^{-1/(\beta+s)}$, then $h_M^\beta\asymp M^{-\beta/(\beta+s)}$ and $(Mh_M^s)^{-1}\asymp M^{-1+s/(\beta+s)}=M^{-\beta/(\beta+s)}$, which gives the optimized bound after enlarging $C$ if necessary.
\end{proof}

\subsection{\texttt{RMAVE} assumptions and the subspace-rate specialization}
\label{subsec:supp-rmave-rate}

\begin{Assumption}[\texttt{RMAVE} conditions in \citet{yin2011ensemble}]
\label{ass:recap-rmave}
Fix $d,s,L$ and prespecified paired sine and cosine transformations whose central mean subspaces span $\operatorname{span}(B_0)$.
Use the \texttt{RMAVE-FC} estimator specified in Section~4.2 of \citet{yin2011ensemble}, including its initialization, trimming, and bandwidth iteration, with dimension $s$.
Assume their conditions (C1)--(C5), with $X,Y,p,d_0,n$ replaced by $\vartheta,Z,d,s,N$, respectively, and (C3) imposed on the selected finite ensemble.
Specifically, (C1) requires bounded predictor support, bounded second derivatives of the predictor density, and bounded derivatives of the conditional first and second predictor moments near the central subspace; (C2) requires the stated fourth-order conditional-density smoothness.
Condition (C3) requires, for every $c>0$ with a nonempty constraint set,
\[
\inf_{\substack{B^\top B=I_s\\\|BB^\top-B_0B_0^\top\|_{\mathrm{op}}\geq c}}
\sum_{\ell=1}^L\mathbb E\left[\left\{\mathbb E(f_\ell(Z)\mid B^\top\vartheta)-\mathbb E(f_\ell(Z)\mid B_0^\top\vartheta)\right\}^2\right]>0.
\]
Condition (C4) requires a symmetric, compactly supported univariate kernel density with bounded second derivative.
Condition (C5) uses $b_{N,0}\asymp N^{-1/(d+4)}$ and $b_{N,j}=\max\{\zeta b_{N,j-1},b_N\}$, where $\zeta\in(1/2,1)$ and $b_N\asymp N^{-1/(s+4)}$.
\end{Assumption}

Under this assumption and the exact score-SDR model, Theorem~6.1 of \citet{yin2011ensemble} gives
\[
\|\widehat B\widehat B^\top-B_0B_0^\top\|_{\mathrm{op}}=O_{\mathbb P}\left(b_N^4+\frac{\log N}{Nb_N^s}+N^{-1/2}\right).
\]
Its population target is $\operatorname{span}(B_0)$ because the fixed ensemble spans that subspace.
To transfer this projector rate to $a_N$, take a singular-value decomposition $\widehat B^\top B_0=U\operatorname{diag}(\sigma_1,\ldots,\sigma_s)V^\top$, with $0\leq\sigma_s\leq\cdots\leq\sigma_1\leq1$, and set $R=VU^\top$.
Then
\[
(\widehat B-B_0R)^\top(\widehat B-B_0R)
=2I_s-2U\operatorname{diag}(\sigma_1,\ldots,\sigma_s)U^\top,
\]
so $a_N^2\leq2(1-\sigma_s)$.
Moreover,
\[
\begin{aligned}
1-\sigma_s^2
&=\|(I_d-\widehat B\widehat B^\top)B_0\|_{\mathrm{op}}^2\\
&=\|(B_0B_0^\top-\widehat B\widehat B^\top)B_0\|_{\mathrm{op}}^2
\leq\|B_0B_0^\top-\widehat B\widehat B^\top\|_{\mathrm{op}}^2.
\end{aligned}
\]
Since $1-\sigma_s\leq1-\sigma_s^2$, this proves
\[
a_N
\leq
\sqrt2\|\widehat B\widehat B^\top-B_0B_0^\top\|_{\mathrm{op}}
=
O_{\mathbb P}\left(
b_N^4+\frac{\log N}{Nb_N^s}+N^{-1/2}
\right).
\]
Because $b_N\asymp N^{-1/(s+4)}$,
\[
b_N^4\asymp N^{-4/(s+4)},
\qquad
\frac{\log N}{Nb_N^s}\asymp(\log N)N^{-4/(s+4)},
\]
and hence
\[
a_N
=
O_{\mathbb P}\left\{
(\log N)N^{-4/(s+4)}+N^{-1/2}
\right\}.
\]
Therefore, $a_N^\beta=O_{\mathbb P}\!\left[\{(\log N)N^{-4/(s+4)}+N^{-1/2}\}^\beta\right]$.
Substituting this bound into \eqref{eq:recap-joint-error} and using $h_M\asymp M^{-1/(\beta+s)}$, so that both $h_M^\beta$ and $(Mh_M^s)^{-1}$ are of order $M^{-\beta/(\beta+s)}$, gives
\[
|\gamma_\theta-(1-\alpha)|
=
O_{\mathbb P}\left[
\left\{(\log N)N^{-4/(s+4)}+N^{-1/2}\right\}^\beta
+
M^{-\beta/(\beta+s)}
\right],
\]
which is \eqref{eq:recap-rmave-joint-rate}.
For $s\leq3$, $4/(s+4)>1/2$, and therefore $(\log N)N^{1/2-4/(s+4)}\to0$.

\subsection{Pilot-selected radii}
\label{subsec:supp-pilot-radii}

The following lemma links Algorithm~\ref{alg:recap-moving-ball} to the radius and local-mass conditions used to derive \eqref{eq:recap-joint-error}.

\begin{Lemma}[Pilot mass and radius]
\label{lem:recap-pilot-mass}
Fix $g$ and $\theta$ independently of the pilot, and let
\[
A_\theta(r)=\mathbb P\{\|g(\vartheta)-g(\theta)\|_2\leq r\},\qquad\vartheta\sim\pi.
\]
Assume the distance distribution is continuous, including no atom at zero.
Set $m=M_{\mathrm{pil}}$, choose $\rho\in(0,1)$ before drawing the pilot, and let $k=\lceil m\rho\rceil$.
For the radius in \eqref{eq:moving-ball-radius}, the true pooling mass satisfies
\begin{equation}
\label{eq:recap-pilot-beta}
p_\theta=A_\theta(h_\theta)\sim\operatorname{Beta}(k,m+1-k).
\end{equation}
If $\rho\to0$ and $m\rho\to\infty$, then $p_\theta/\rho\to1$ in probability.
If, additionally, there exist constants $c_-,c_+,r_0>0$ such that
\begin{equation}
\label{eq:recap-pilot-local-mass}
c_-r^s\leq A_\theta(r)\leq c_+r^s,\qquad0<r\leq r_0,
\end{equation}
then, with probability tending to one,
\begin{equation}
\label{eq:recap-pilot-radius-order}
\frac{\rho}{2}\leq p_\theta\leq2\rho,
\qquad
\left(\frac{\rho}{2c_+}\right)^{1/s}\leq h_\theta\leq\left(\frac{2\rho}{c_-}\right)^{1/s}.
\end{equation}
For an independently learned $g$, the distributional statement holds conditional on its training information; the asymptotic conclusions also hold if continuity and \eqref{eq:recap-pilot-local-mass} hold with probability tending to one, with the same constants.
\end{Lemma}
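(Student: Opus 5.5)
The plan rests on the probability integral transform. Let $D_j=\|g(\vartheta_j^{\mathrm{pil}})-g(\theta)\|_2$, $j=1,\ldots,m$. These are i.i.d.\ with continuous distribution function $A_\theta$, because $g$ and $\theta$ are fixed independently of the pilot. By \eqref{eq:moving-ball-radius}, $h_\theta$ is the smallest $r$ at which the empirical proportion reaches $\rho$, which is the $k$th order statistic $D_{(k)}$ with $k=\lceil m\rho\rceil$. The fresh calibration draw $\vartheta$ is independent of the pilot, so $p_\theta=A_\theta(h_\theta)=A_\theta(D_{(k)})$. Since $A_\theta$ is continuous and nondecreasing, $A_\theta(D_{(k)})=U_{(k)}$, where $U_j=A_\theta(D_j)$ are i.i.d.\ $\mathrm{Uniform}(0,1)$. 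Continuity is needed twice here: to make $U_j$ exactly uniform, and to ensure that monotonicity preserves the order statistic without ties. The uniform $k$th order statistic is $\operatorname{Beta}(k,m+1-k)$, which gives \eqref{eq:recap-pilot-beta}.

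For the ratio $p_\theta/\rho$, I would use the Beta moments. The mean is $k/(m+1)$, and since $m\rho\le k<m\rho+1$ and $m\rho\to\infty$, we get $k/((m+1)\rho)\to1$. The variance is at most $k/(m+1)^2\lesssim\rho/m$, so
\[
\operatorname{Var}(p_\theta/\rho)\lesssim\frac{1}{m\rho}\to0.
\]
Chebyshev's inequality then gives $p_\theta/\rho\to1$ in probability, and this immediately yields $\rho/2\le p_\theta\le2\rho$ with probability tending to one.

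For the radius bounds, I would invert \eqref{eq:recap-pilot-local-mass} on this event. Since $\rho\to0$, eventually $2\rho<c_-r_0^s$.

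\emph{Upper bound.} Suppose $h_\theta>(2\rho/c_-)^{1/s}=:r_1$. Then $r_1\le r_0$, and monotonicity gives $p_\theta=A_\theta(h_\theta)\ge A_\theta(r_1)\ge c_-r_1^s=2\rho$. Strict inequality is not obtained this way, so I would instead argue on the event $p_\theta<2\rho$. Because the limit ratio is $1$, this event also has probability tending to one if the constant is replaced by anything in $(1,2)$. I will use $3\rho/2$ as the working threshold and then conclude the stated non-strict bound.

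\emph{Lower bound.} If $h_\theta<r_2:=(\rho/(2c_+))^{1/s}\le r_0$, then $p_\theta\le A_\theta(r_2)\le c_+r_2^s=\rho/2$. This contradicts $p_\theta>\rho/2$, an event that holds with probability tending to one.

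For an independently learned $g$, I would condition on the training information $\mathcal D$. Given $\mathcal D$, the pilot is still i.i.d.\ and independent of $g$, so the Beta law holds conditionally. The asymptotic statements follow by intersecting with the event, of probability tending to one, on which continuity and the uniform local-mass constants hold, and then applying dominated convergence to the conditional probabilities.

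The main obstacle is the boundary handling: the identification $A_\theta(D_{(k)})=U_{(k)}$ when $A_\theta$ has flat stretches, and the endpoint case $p_\theta=2\rho$ exactly. I would resolve these with the continuity assumption and by working with slightly tighter thresholds than the ones stated.
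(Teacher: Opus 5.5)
Your proposal is correct and follows essentially the same route as the paper. You use the probability integral transform to identify $p_\theta=A_\theta(D_{(k)})$ with a uniform order statistic, Beta moments plus Chebyshev for $p_\theta/\rho\to1$, inversion of the local-mass condition, and conditioning on training for learned $g$. The only difference is minor: the paper first notes that $h_\theta<r_0$ with probability tending to one (because $p_\theta<A_\theta(r_0)$) and then applies the local-mass bounds directly at $r=h_\theta$, giving $(p_\theta/c_+)^{1/s}\le h_\theta\le(p_\theta/c_-)^{1/s}$, so your $3\rho/2$ detour for the endpoint case is unnecessary.
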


\begin{proof}
Write $D_j=\|g(\vartheta_j^{\mathrm{pil}})-g(\theta)\|_2$ and $V_j=A_\theta(D_j)$.
For $v\in(0,1)$, continuity gives a radius $r_v$ with $A_\theta(r_v)=v$.
Flat portions of $A_\theta$ have zero distance probability, so $\mathbb P\{A_\theta(D_j)\leq v\}=\mathbb P(D_j\leq r_v)=v$.
Thus, $V_1,\ldots,V_m$ are independent uniform variables.
Monotonicity of $A_\theta$ gives $p_\theta=A_\theta(D_{(k)})=V_{(k)}$.
For $0<v<1$,
\[
\mathbb P(V_{(k)}\leq v)=\sum_{j=k}^m\binom mjv^j(1-v)^{m-j}.
\]
Differentiating this finite sum and canceling adjacent terms gives the density
\[
\frac{m!}{(k-1)!(m-k)!}v^{k-1}(1-v)^{m-k},
\]
which proves \eqref{eq:recap-pilot-beta}.
Integrating against this density, using $\int_0^1v^a(1-v)^b\,dv=a!b!/(a+b+1)!$ for nonnegative integers $a,b$, gives
\[
\mathbb E(p_\theta)=\frac{k}{m+1},
\qquad
\mathbb E(p_\theta^2)=\frac{k(k+1)}{(m+1)(m+2)},
\qquad
\operatorname{Var}(p_\theta)=\frac{k(m+1-k)}{(m+1)^2(m+2)}.
\]
The integral identity follows by repeated integration by parts, reducing $a$ to zero and then integrating $(1-v)^{a+b}$.
Since $m\rho\leq k<m\rho+1$, $|k/(m+1)-\rho|\leq1/(m+1)$ and $\operatorname{Var}(p_\theta)\leq(m\rho+1)/m^2$.
Consequently,
\[
\mathbb E\left[\left(\frac{p_\theta}{\rho}-1\right)^2\right]
\leq\frac{1}{m\rho}+\frac{2}{m^2\rho^2}\longrightarrow0.
\]
For every $\epsilon>0$, the probability of $|p_\theta/\rho-1|>\epsilon$ is at most this expectation divided by $\epsilon^2$, because the corresponding indicator is bounded by $(p_\theta/\rho-1)^2/\epsilon^2$.
This proves the mass convergence.

Since $\rho\to0$ and $A_\theta(r_0)\geq c_-r_0^s>0$, the mass convergence implies $p_\theta<A_\theta(r_0)$ with probability tending to one.
Monotonicity then implies $h_\theta<r_0$.
On this event, \eqref{eq:recap-pilot-local-mass} gives $(p_\theta/c_+)^{1/s}\leq h_\theta\leq(p_\theta/c_-)^{1/s}$.
Combining this with $\rho/2\leq p_\theta\leq2\rho$ proves \eqref{eq:recap-pilot-radius-order}.
For learned representations, repeat the argument conditional on training.
The displayed moment bound is independent of $g$, and the assumed common local-mass constants make the radius argument uniform over the training event; its complement has probability tending to zero.
\end{proof}

For a known exact representation satisfying Corollary~\ref{cor:recap-moving-ball} and continuous pooled-score distributions, \eqref{eq:recap-pilot-radius-order} gives, with probability tending to one,
\[
|\gamma_\theta-(1-\alpha)|\leq L_0\left(\frac{2\rho}{c_-}\right)^{\beta/s}+\frac{2}{(M+1)\rho}.
\]
For the learned linear representation in Theorem~\ref{thm:recap-learned-coverage}, under the learned-map conditions of Lemma~\ref{lem:recap-pilot-mass} and continuity of $F_\theta^{\mathrm{mix}}$, the same reasoning gives, with probability tending to one,
\[
|\gamma_\theta-(1-\alpha)|\leq L_0D_\Theta^\beta a_N^\beta+L_0\left(\frac{2\rho}{c_-}\right)^{\beta/s}+\frac{2}{(M+1)\rho}.
\]
Choosing $\rho\asymp M^{-s/(\beta+s)}$ makes both calibration terms of order $M^{-\beta/(\beta+s)}$.
The pilot requirement is then $M_{\mathrm{pil}}M^{-s/(\beta+s)}\to\infty$.

\subsection{Learned-representation coverage under fixed simulation designs}
\label{sec:supp-fixed-design}

We extend Theorem~\ref{thm:recap-learned-coverage} to calibration simulations at prescribed locations, allowing arbitrary replication at each location.
Retain the compact region $\Theta_0$, the orthonormal matrices $B_0,\widehat B\in\mathbb R^{d\times s}$, and $D_\Theta,a_N$ from \eqref{eq:recap-subspace-error}.
Let $v_1,\ldots,v_J\in\Theta_0$ be calibration locations with positive integer replication counts $r_1,\ldots,r_J$ and total budget $M=\sum_{j=1}^Jr_j$.
The design can be deterministic or randomized; for a randomized design, we condition on its realized locations.

Let $\mathcal D_{\mathrm{fix}}$ collect the design and all preliminary information determining the statistic, $\widehat B$, and the radius function $h$.
These quantities are chosen without using the final calibration outcomes or $X_0$; in particular, $h$ may depend on the design locations or an independent parameter-only pilot.
Conditional on $\mathcal D_{\mathrm{fix}}$, assume $X_0\sim P_\theta$ and mutually independent calibration draws
\[
X_{jr}\sim P_{v_j},
\qquad
Z_{jr}=T(X_{jr},v_j),
\qquad
j=1,\ldots,J,\quad r=1,\ldots,r_j,
\]
independent of $X_0$.
For a fitted statistic, $F_v$ denotes its null distribution with the fitting information held fixed.

For a fixed candidate $\theta\in\Theta_0$, recall that $h_\theta=h(\widehat B^\top\theta)$ and define
\begin{equation}
\label{eq:recap-fixed-pool}
\mathcal J_\theta
=
\{j:\|\widehat B^\top(v_j-\theta)\|_2\leq h_\theta\},
\qquad
M_\theta=\sum_{j\in\mathcal J_\theta}r_j.
\end{equation}
Pool all scores at the selected locations and use \eqref{eq:recap-cutoff}, with $k_\theta=\lceil(M_\theta+1)(1-\alpha)\rceil$ and the same $+\infty$ convention.
Write
\[
\gamma_\theta^{\mathrm{fix}}
=
\mathbb P_\theta\{\theta\in\widehat{\mathcal C}_\alpha(X_0)
\mid\mathcal D_{\mathrm{fix}}\}.
\]

\begin{Proposition}[Fixed-design coverage with a learned representation]
\label{prop:recap-learned-fixed-design}
Under the preceding conditional simulation construction, suppose \eqref{eq:linear-distributional-representation} holds on $\Theta_0$ and $H_0$ satisfies \eqref{eq:recap-holder} on $B_0^\top\Theta_0$, with $H_0$ in place of $H_g$.
For each fixed $\theta\in\Theta_0$ with $M_\theta>0$,
\begin{equation}
\label{eq:recap-fixed-learned-lower}
\gamma_\theta^{\mathrm{fix}}
\geq 1-\alpha-L_0(h_\theta+D_\Theta a_N)^\beta.
\end{equation}
If $F_\theta$ is continuous, then
\begin{equation}
\label{eq:recap-fixed-learned-error}
\left|\gamma_\theta^{\mathrm{fix}}-(1-\alpha)\right|
\leq L_0(h_\theta+D_\Theta a_N)^\beta
+\frac{1}{M_\theta+1}.
\end{equation}
If $M_\theta=0$, the cutoff is $+\infty$ and $\gamma_\theta^{\mathrm{fix}}=1$.
\end{Proposition}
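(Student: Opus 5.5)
The argument follows the proof of Theorem~\ref{thm:recap-learned-coverage}, adapted to the fixed design. The selected set $\mathcal J_\theta$ is now deterministic given $\mathcal D_{\mathrm{fix}}$. The pooled scores are therefore independent but not identically distributed: there are $r_j$ draws from $F_{v_j}$ for each $j\in\mathcal J_\theta$. So the exchangeability step must be replaced by a coupling.

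The first step is the distributional discrepancy. For each $j\in\mathcal J_\theta$, the alignment argument from the proof of Theorem~\ref{thm:recap-learned-coverage} gives $\|B_0^\top(v_j-\theta)\|_2\le h_\theta+D_\Theta a_N$. The exact representation \eqref{eq:linear-distributional-representation} and the H\"older condition then give
\[
\delta:=\max_{j\in\mathcal J_\theta}\sup_t|F_{v_j}(t)-F_\theta(t)|\le L_0(h_\theta+D_\Theta a_N)^\beta.
\]
If $M_\theta=0$ the cutoff is $+\infty$, and coverage is trivially one.

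The second step, the quantile coupling, is where the main work lies. Draw independent uniforms $U_{jr}$ and $U_0$. Realize each calibration score as $Z_{jr}=F_{v_j}^{-1}(U_{jr})$, and define companion variables $\widetilde Z_{jr}=F_\theta^{-1}(U_{jr})$ and $\widetilde Z_0=F_\theta^{-1}(U_0)=T(X_0,\theta)$ in distribution. The $\widetilde Z$'s are i.i.d.\ from $F_\theta$, so the exchangeable-rank argument of Theorem~\ref{thm:recap-finite-coverage} gives $\mathbb P\{\widetilde Z_0\le \widetilde q\}\ge k/(M_\theta+1)\ge 1-\alpha$, where $\widetilde q$ is the $k$th order statistic of the companions. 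Under continuity of $F_\theta$, this probability equals $k/(M_\theta+1)\le 1-\alpha+1/(M_\theta+1)$.

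The coupling is cleanest in the uniform scale. Acceptance $\{T(X_0,\theta)\le Z_{(k)}\}$ is equivalent to $\{U_0\le F_\theta(Z_{(k)})\}$ up to null events, or with the appropriate one-sided version when $F_\theta$ has atoms. Moreover $F_\theta(F_{v_j}^{-1}(u))$ lies in $[u-\delta,u+\delta]$, up to the generalized-inverse boundary conventions that I expect to be the main technical nuisance. Hence $F_\theta(Z_{(k)})$ lies within $\delta$ of $U_{(k)}$, the $k$th order statistic of the calibration uniforms, because monotone perturbations bounded by $\delta$ move order statistics by at most $\delta$.

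Conditioning on the calibration uniforms and using independence of $U_0$, the coverage is $\mathbb E[F_\theta(Z_{(k)})]$ in the continuous case. This lies within $\delta$ of $\mathbb E[U_{(k)}]=k/(M_\theta+1)$, which gives \eqref{eq:recap-fixed-learned-error}. For the lower bound \eqref{eq:recap-fixed-learned-lower} without continuity, I would use $\mathbb P\{T(X_0,\theta)\le q\}=F_\theta(q)$ together with $F_\theta(Z_{(k)})\ge U_{(k)}-\delta$. This holds because $F_{v_j}(F_{v_j}^{-1}(u))\ge u$ and $F_\theta\ge F_{v_j}-\delta$, so $F_\theta(Z_{jr})\ge U_{jr}-\delta$; monotonicity of order statistics then transfers the bound to the $k$th one. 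Taking expectations yields coverage at least $k/(M_\theta+1)-\delta\ge1-\alpha-\delta$.

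Verifying the upper-direction inequality $F_\theta(Z_{jr})\le U_{jr}+\delta$ under continuity of $F_\theta$ alone is the delicate point. $F_{v_j}$ may have atoms, so $F_{v_j}(F_{v_j}^{-1}(u)^-)\le u$ only holds from the left. I would handle it by arguing that $F_\theta(F_{v_j}^{-1}(u))=F_\theta(F_{v_j}^{-1}(u)^-)\le F_{v_j}(F_{v_j}^{-1}(u)^-)+\delta\le u+\delta$, where the first equality uses continuity of $F_\theta$.
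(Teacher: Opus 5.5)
Your proposal is correct and follows essentially the same route as the paper. The paper also combines the alignment bound $\sup_t|F_{v_j}(t)-F_\theta(t)|\le L_0(h_\theta+D_\Theta a_N)^\beta$ with the quantile coupling $Z_{jr}=F_{v_j}^{-1}(U_{jr})$, using $F_\theta(Z_{jr})\ge U_{jr}-\delta$ in general and $F_\theta(Z_{jr})=F_\theta(Z_{jr}-)\le U_{jr}+\delta$ under continuity, then transfers these bounds to $U_{(k)}$ with $\mathbb E\,U_{(k)}=k/(M_\theta+1)$. The companion variables $\widetilde Z$ and the exchangeable-rank detour are unnecessary, and you should also dispose of the case $k_\theta=M_\theta+1$ with $M_\theta>0$, where coverage is one and $\alpha<1/(M_\theta+1)$ yields both bounds.
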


\begin{proof}
Condition on $\mathcal D_{\mathrm{fix}}$, and set $m=M_\theta$ and $k=k_\theta$.
If $m=0$, the conclusion is immediate.
If $k=m+1$, the cutoff is $+\infty$ and coverage equals one; moreover, $\alpha<1/(m+1)$, so both bounds hold.
Suppose $k\leq m$.

Choose $R\in\mathcal O(s)$ attaining the infimum defining $a_N$.
For every selected location $v_j$,
\[
\begin{aligned}
\|B_0^\top(v_j-\theta)\|_2
&=\|R^\top B_0^\top(v_j-\theta)\|_2\\
&\leq\|\widehat B^\top(v_j-\theta)\|_2
+\|\widehat B-B_0R\|_{\mathrm{op}}\|v_j-\theta\|_2\\
&\leq h_\theta+D_\Theta a_N.
\end{aligned}
\]
The exact representation and the H\"older condition therefore give
\[
\sup_t|F_{v_j}(t)-F_\theta(t)|
\leq L_0(h_\theta+D_\Theta a_N)^\beta
=:b_\theta.
\]
This is the same pointwise transfer step as in the proof of Theorem~\ref{thm:recap-learned-coverage}.

Relabel the selected scores as $Z_1,\ldots,Z_m$, with distribution functions $G_1,\ldots,G_m$.
Conditional independence permits the realization
\[
Z_i=G_i^{-1}(U_i),
\qquad
U_i\overset{\mathrm{i.i.d.}}{\sim}\operatorname{Unif}(0,1),
\qquad
G_i^{-1}(u)=\inf\{t:G_i(t)\geq u\}.
\]
The generalized inverse satisfies $G_i(Z_i-)\leq U_i\leq G_i(Z_i)$ almost surely.
Since $\sup_t|G_i(t)-F_\theta(t)|\leq b_\theta$,
\[
F_\theta(Z_i)
\geq G_i(Z_i)-b_\theta
\geq U_i-b_\theta.
\]
Monotonicity of $F_\theta$ and of order statistics under coordinatewise inequalities implies $F_\theta(Z_{(k)})\geq U_{(k)}-b_\theta$.
Independence of $X_0$ and the final calibration outcomes now gives
\[
\gamma_\theta^{\mathrm{fix}}
=\mathbb E\{F_\theta(Z_{(k)})\mid\mathcal D_{\mathrm{fix}}\}
\geq\frac{k}{m+1}-b_\theta
\geq1-\alpha-b_\theta,
\]
where $\mathbb E(U_{(k)})=k/(m+1)$.

If $F_\theta$ is continuous, taking left limits also gives
\[
F_\theta(Z_i)=F_\theta(Z_i-)
\leq G_i(Z_i-)+b_\theta
\leq U_i+b_\theta.
\]
The same argument yields $|\gamma_\theta^{\mathrm{fix}}-k/(m+1)|\leq b_\theta$.
Combining this with $0\leq k/(m+1)-(1-\alpha)\leq1/(m+1)$ proves \eqref{eq:recap-fixed-learned-error}.
\end{proof}

\paragraph{Approximate linear representations.}
The exact relation can be relaxed to
\begin{equation}
\label{eq:recap-fixed-linear-approximation}
\sup_{v\in\Theta_0}\sup_t
|F_v(t)-H_0(t,B_0^\top v)|\leq\varepsilon_0,
\end{equation}
where $\varepsilon_0\geq0$ and $H_0(\cdot,u)$ is a distribution function satisfying the same H\"older condition.
The preceding geometric bound and two applications of \eqref{eq:recap-fixed-linear-approximation} give, for each selected $v_j$,
\[
\sup_t|F_{v_j}(t)-F_\theta(t)|
\leq2\varepsilon_0+L_0(h_\theta+D_\Theta a_N)^\beta.
\]
Thus, the same proof gives
\[
\gamma_\theta^{\mathrm{fix}}
\geq1-\alpha-2\varepsilon_0-L_0(h_\theta+D_\Theta a_N)^\beta,
\]
and, if $F_\theta$ is continuous,
\[
\left|\gamma_\theta^{\mathrm{fix}}-(1-\alpha)\right|
\leq2\varepsilon_0+L_0(h_\theta+D_\Theta a_N)^\beta
+\frac{1}{M_\theta+1}.
\]
Here $B_0$ is an orthonormal reference basis for the approximate representation.

\paragraph{Simulation rates.}
These finite-simulation bounds require no density or spacing assumptions on the design and apply to any preliminary orthonormal estimate $\widehat B$ with $s$ columns satisfying the conditional construction.
For fixed $D_\Theta,L_0$, a rate in $M$ requires additional local design control: for example, if $h_\theta\asymp h_M\to0$, $M_\theta\geq cMh_M^s$ for a fixed $c>0$, and $Mh_M^s\to\infty$, then under continuity \eqref{eq:recap-fixed-learned-error} gives the analogue of \eqref{eq:recap-joint-error}, with $\mathcal D_{\mathrm{fix}}$ in place of $\mathcal D$.

\subsection{Coverage accuracy with dimension selection}
\label{sec:supp-mae-selection}

We specify the coverage-MAE selection procedure and analyze its calibration accuracy when validation parameters are independently sampled from $\pi$ and both calibration banks follow Algorithm~\ref{alg:recap-framework}.

\subsubsection{Selection procedure and oracle inequality}
\label{subsec:supp-mae-selection-statement}

Fix $\tau=1-\alpha\in(0,1)$.
Let $\mathcal D_0$ collect all information used to fit the statistic, a nonempty finite family of candidate representations $g_r:\Theta\to\mathbb R^r$, and their moving-ball radius functions $h_r$, for $r\in\mathcal S$.
Write $K=|\mathcal S|$ and condition on $\mathcal D_0$ throughout.
The candidate set, representations, and radius functions are fixed before the selection-calibration bank and validation parameters are generated.
For a fitted statistic, $F_\theta$ denotes its null score distribution with the fitting information held fixed.
Assume all maps, cutoff rules, and integrals below are measurable.

Generate mutually independent selection-calibration and final-calibration banks, of sizes $M_0$ and $M$, respectively, each consisting of i.i.d.\ pairs $(\vartheta,Z)$ with $\vartheta\sim\pi$ and $Z\mid\vartheta\sim F_\vartheta$.
The raw parameter--data banks are independent of $\mathcal D_0$ and $X_0$.
Independently of both banks and $\mathcal D_0$, draw $\vartheta_j^{\mathrm{val}}\overset{\mathrm{i.i.d.}}{\sim}\pi$, $j=1,\ldots,J$, and simulate $R$ conditionally independent datasets $X_{j\ell}^{\mathrm{val}}\sim P_{\vartheta_j^{\mathrm{val}}}$ at each parameter.
The validation scores use the frozen statistic.
Different candidate dimensions may share the same calibration pairs and validation observations.
Use the same candidate representation and radius function for selection and final calibration.
For a generic bank of size $m$, let $\widehat q_{\alpha,r,m}(u)$ be the cutoff obtained by pooling scores with $\|g_r(\vartheta)-u\|_2\leq h_r(u)$ and applying \eqref{eq:recap-cutoff}, including its $+\infty$ convention.
Write $\widehat q_{\alpha,r}^{\mathrm{sel}}$ and $\widehat q_{\alpha,r}^{\mathrm{fin}}$ for the cutoffs from the two banks.
Define the empirical coverage and mean absolute coverage error by
\begin{equation}
\label{eq:recap-selection-mae}
\begin{aligned}
\widehat c_{j,r}^{\mathrm{sel}}
&=\frac1R\sum_{\ell=1}^R
\mathbf 1\!\left\{T(X_{j\ell}^{\mathrm{val}},\vartheta_j^{\mathrm{val}})
\leq\widehat q_{\alpha,r}^{\mathrm{sel}}(g_r(\vartheta_j^{\mathrm{val}}))\right\},\\
\widehat L_r
&=\frac1J\sum_{j=1}^J
\left|\widehat c_{j,r}^{\mathrm{sel}}-(1-\alpha)\right|,
\qquad
\widehat s\in\arg\min_{r\in\mathcal S}\widehat L_r.
\end{aligned}
\end{equation}
Use any measurable tie-breaking rule and construct the final cutoffs from the independent final bank with $g_{\widehat s}$ and its pooling rule held fixed, without reusing selection-calibration scores or validation observations.
Any reported empirical performance is evaluated using further independent observations, and new evaluation parameters are used when assessing performance beyond a prespecified target collection.
Other criteria, such as root mean squared coverage error or maximum undercoverage, can replace MAE; validation parameters may also be prespecified or sampled to emphasize a particular region.
The guarantee below concerns the MAE rule in \eqref{eq:recap-selection-mae} with validation parameters drawn from the same $\pi$ as the calibration banks.

Let $\pi_r$ be the distribution of $g_r(\vartheta)$ under $\pi$, and define
\begin{equation}
\label{eq:recap-mae-population-cdf}
H_r(t,u)=\mathbb P\{Z\leq t\mid g_r(\vartheta)=u,\mathcal D_0\},
\qquad
q_r^0(u)=\inf\{t:H_r(t,u)\geq\tau\}.
\end{equation}
Fix versions of these conditional distributions on $g_r(\Theta)$.
Suppose $H_r(\cdot,u)$ is continuous for $\pi_r$-almost every $u$ and that, on a set of full $\pi_r$ measure, for common constants $L_0>0$ and $\beta\in(0,1]$,
\begin{equation}
\label{eq:recap-mae-holder}
\sup_{t\in\mathbb R}|H_r(t,u)-H_r(t,u')|
\leq L_0\|u-u'\|_2^\beta.
\end{equation}
Suppose also that there exist constants $\overline h_r<\infty$ and $\underline p_r>0$ such that, for $\pi_r$-almost every $u$,
\begin{equation}
\label{eq:recap-mae-local-mass}
h_r(u)\leq\overline h_r,
\qquad
p_r(u):=\mathbb P_{\vartheta\sim\pi}
\{\|g_r(\vartheta)-u\|_2\leq h_r(u)\}
\geq\underline p_r.
\end{equation}

Define the population benchmark and the integrated MAE of a realized calibrated rule by
\begin{equation}
\label{eq:recap-mae-population-loss}
\begin{aligned}
L_r^0
&=\int_\Theta
\left|F_\theta(q_r^0(g_r(\theta)))-\tau\right|\,\pi(d\theta),\\
L_r^a
&=\int_\Theta
\left|F_\theta(\widehat q_{\alpha,r}^{a}(g_r(\theta)))-\tau\right|\,\pi(d\theta),
\qquad a\in\{\mathrm{sel},\mathrm{fin}\}.
\end{aligned}
\end{equation}
Thus, $L_r^0$ measures the calibration error of the representation-conditional population quantile.
The quantity $L_r^{\mathrm{fin}}$ conditions on the realized final bank.
All distribution functions are extended to equal one at $+\infty$.
For a bank of size $m$, put
\begin{equation}
\label{eq:recap-mae-bank-error}
\omega(m)=\max_{r\in\mathcal S}
\left\{
L_0\overline h_r^\beta
+\frac{1}{2\sqrt{(m+1)\underline p_r}}
+\frac{1}{(m+1)\underline p_r}
\right\}.
\end{equation}

\begin{Proposition}[Coverage-MAE selection and fresh-bank calibration]
\label{prop:recap-mae-selection}
Under the preceding construction and assumptions, select $\widehat s$ by minimizing $\widehat L_r$ in \eqref{eq:recap-selection-mae} over $\mathcal S$, with any measurable tie-breaking rule.
Then
\begin{equation}
\label{eq:recap-mae-oracle}
\mathbb E\!\left[L_{\widehat s}^{\mathrm{fin}}\mid\mathcal D_0\right]
\leq
\min_{r\in\mathcal S}L_r^0
+\frac{1}{2\sqrt R}
+\sqrt{\frac{\log K}{2J}}
+(K+1)\omega(M_0)+\omega(M).
\end{equation}
Moreover,
\begin{equation}
\label{eq:recap-mae-selection-transfer}
\mathbb E\!\left[
\left|L_{\widehat s}^{\mathrm{fin}}-\widehat L_{\widehat s}\right|
\,\middle|\,\mathcal D_0\right]
\leq
\frac{1}{2\sqrt R}
+\sqrt{\frac{\log(2K)}{2J}}
+K\omega(M_0)+\omega(M).
\end{equation}
Both expectations average over selection calibration, validation, and final calibration, with $\mathcal D_0$ held fixed.
\end{Proposition}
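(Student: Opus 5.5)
The plan is a standard oracle-inequality decomposition: pass from the final-bank loss to the selection-bank population loss, then to the empirical validation MAE, use the $\arg\min$ property, and undo the chain for the best candidate. Throughout we condition on $\mathcal D_0$.

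\textbf{Step 1: a per-candidate bank bound.} For any $r$ and a bank of size $m$, show
\[
\mathbb E\bigl[|L_r^{a}-L_r^0|\mid\mathcal D_0\bigr]\le\omega(m).
\]
Fix $\theta$ with $u=g_r(\theta)$. By the argument in the proof of Theorem~\ref{thm:recap-finite-coverage}, conditional on the pool size $M_u\sim\operatorname{Binomial}(m,p_r(u))$, the pooled scores are i.i.d.\ from the mixture
\[
F^{\mathrm{mix}}_u(t)=\mathbb E\bigl[H_r(t,g_r(\vartheta))\mid \text{ball}\bigr].
\]
Since the ball has radius at most $\overline h_r$, the H\"older condition \eqref{eq:recap-mae-holder} gives $\sup_t|F^{\mathrm{mix}}_u-H_r(\cdot,u)|\le L_0\overline h_r^\beta$.

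The key observation is that, although $F_\theta$ itself need not be close to $H_r(\cdot,g_r(\theta))$, averaging over $\theta\sim\pi$ with $g_r(\theta)=u$ gives $\mathbb E[F_\theta(t)\mid g_r(\theta)=u]=H_r(t,u)$. The triangle inequality then lets us replace $F_\theta(\widehat q)$ by $H_r(\widehat q,u)$ inside the integral, paying $|F_\theta(\widehat q)-F_\theta(q^0)|$-type terms. The cleaner route is to bound
\[
\bigl|\,|F_\theta(\widehat q)-\tau|-|F_\theta(q_r^0)-\tau|\,\bigr|\le|F_\theta(\widehat q)-F_\theta(q^0_r)|
\]
and note that both quantiles depend on $\theta$ only through $u$. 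Conditioning on $u$, the expectation of $|F_\theta(\widehat q)-F_\theta(q^0)|$ is at most the conditional expectation of $|H_r(\widehat q,u)-H_r(q^0,u)|$ plus a sign issue. Handling this sign issue (absolute value outside versus inside the average) is the main obstacle. I would first try working directly with $H_r$ via the $\pi$-disintegration, writing $F_\theta(\widehat q)-F_\theta(q^0)$ as a monotone increment whose sign is determined by $\widehat q$ versus $q^0$, which depend only on $u$. This makes the signed average equal $|H_r(\widehat q,u)-H_r(q^0,u)|$, where $H_r(q^0,u)=\tau$ by continuity.

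It then remains to show
\[
\mathbb E\,|H_r(\widehat q,u)-\tau|\le L_0\overline h_r^\beta+\mathbb E\,|F^{\mathrm{mix}}_u(\widehat q)-\tau|.
\]
The term $F^{\mathrm{mix}}_u(\widehat q)$ is distributed like a uniform order statistic $U_{(k)}$, by the quantile coupling of Proposition~\ref{prop:recap-learned-fixed-design}. Its deviation from $\tau$ is bounded by the bias term $1/(M_u+1)$ plus the standard deviation of $U_{(k)}$, which is at most $1/(2\sqrt{M_u+2})$. Averaging these over the binomial via $\mathbb E[1/(M_u+1)]\le 1/((m+1)p)$ and Jensen's inequality yields the three terms of $\omega(m)$.

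\textbf{Step 2: the validation step.} Given the selection bank, the quantities $\widehat L_r$ are averages over $j$ of $|\widehat c_{j,r}-\tau|$. Each $\widehat c_{j,r}$ has conditional mean $F_{\vartheta_j}(\widehat q^{\mathrm{sel}})$ and standard deviation at most $1/(2\sqrt R)$, so
\[
\mathbb E\,\bigl|\widehat L_r-\widetilde L_r\bigr|\le \frac{1}{2\sqrt R},
\]
where $\widetilde L_r$ is the same average with the population coverage in place of $\widehat c_{j,r}$. Because $\widetilde L_r$ is an i.i.d.\ average over $\vartheta_j\sim\pi$ of $[0,1]$-valued terms with mean $L_r^{\mathrm{sel}}$, Hoeffding's inequality and the maximal inequality for sub-Gaussian variables give
\[
\mathbb E\max_r\bigl(L_r^{\mathrm{sel}}-\widetilde L_r\bigr)\le\sqrt{\frac{\log K}{2J}}
\]
in the one-sided case, and $\sqrt{\log(2K)/(2J)}$ in the two-sided case.

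\textbf{Step 3: chaining.} Write
\[
L^{\mathrm{fin}}_{\widehat s}\le L^0_{\widehat s}+|L^{\mathrm{fin}}_{\widehat s}-L^0_{\widehat s}|,
\]
where the second term has expectation at most $\omega(M)$, since the final bank is independent of $\widehat s$. Next, $L^0_{\widehat s}\le L^{\mathrm{sel}}_{\widehat s}+\sum_r|L^{\mathrm{sel}}_r-L^0_r|$, which costs $K\omega(M_0)$ because $\widehat s$ depends on the selection bank. Then
\[
L^{\mathrm{sel}}_{\widehat s}\le\widetilde L_{\widehat s}+\max_r\bigl(L_r^{\mathrm{sel}}-\widetilde L_r\bigr)\le\widehat L_{\widehat s}+\ldots\le\widehat L_{r^*}+\ldots
\]
where $r^*$ minimizes $L_r^0$. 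Finally, $\widehat L_{r^*}$ is returned to $L^0_{r^*}$ at a cost of $1/(2\sqrt R)$ and one more $\omega(M_0)$, giving the factor $(K+1)\omega(M_0)$.

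The $1/(2\sqrt R)$ term should appear only once. Rather than bounding the $\widehat L$-to-$\widetilde L$ gap separately for $\widehat s$ and $r^*$, I would use the fact that each $\widehat c_{j,r}$ is unbiased for its population coverage and bound the gap directly for $r^*$, while absorbing the $\widehat s$ side through convexity. I expect this bookkeeping to need care. The transfer bound \eqref{eq:recap-mae-selection-transfer} follows by the same chain without invoking the minimizer: a two-sided maximum over $r$ for the validation term, $K\omega(M_0)$ for the selection-to-population step, and $\omega(M)$ for the final-to-population step.
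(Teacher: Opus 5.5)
Your Step 1 (the common-sign identity on level sets of $g_r$, the H\"older mixture bound, the $U_{(k)}$ coupling and the binomial inverse moment) follows the paper. So does the outer chaining of Step 3: $\omega(M)$ from the independent final bank, $K\omega(M_0)$ for the data-dependent $\widehat s$, and one more $\omega(M_0)$ at a fixed minimizer.

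The gap is in the validation step, exactly at the ``$\ldots$'' you leave open. You introduce $\widetilde L_r=J^{-1}\sum_j|c_r(\vartheta_j^{\mathrm{val}})-\tau|$ and concentrate only $L_r^{\mathrm{sel}}-\widetilde L_r$. Your chain then needs $\widetilde L_{\widehat s}\le\widehat L_{\widehat s}+(\text{small})$, and convexity cannot give this. Jensen yields $\widetilde L_r\le\mathbb E[\widehat L_r\mid\{\vartheta_j^{\mathrm{val}}\},\mathcal C_0]$ only for a fixed $r$. But $\widehat s$ is chosen using the same Bernoulli noise in the $\widehat c_{j,r}^{\mathrm{sel}}$: it favours candidates whose noise moved $\widehat c_{j,r}^{\mathrm{sel}}$ toward $\tau$. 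Hence $\widehat L_{\widehat s}$ can be biased downward relative to $\widetilde L_{\widehat s}$. Bridging this requires a second uniform bound on $\max_r(\widetilde L_r-\widehat L_r)$, which adds a maximal-inequality term absent from \eqref{eq:recap-mae-oracle}. The transfer bound has the same defect. The bound $\mathbb E|\widehat L_r-\widetilde L_r|\le1/(2\sqrt R)$ holds for each fixed $r$ but not at $r=\widehat s$, and $\mathbb E\max_r|\widehat L_r-\widetilde L_r|$ is not bounded by $1/(2\sqrt R)$. So your route does not yield \eqref{eq:recap-mae-selection-transfer} either.

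The missing idea is not to split the validation noise. Conditional on the selection bank $\mathcal C_0$, the $J$ quantities $|\widehat c_{j,r}^{\mathrm{sel}}-\tau|$ are i.i.d.\ and $[0,1]$-valued, each depending on $\vartheta_j^{\mathrm{val}}$ together with its $R$ observations. So a single exponential-moment maximal inequality applied to $D_r=\mu_r-\widehat L_r$, with $\mu_r=\mathbb E(\widehat L_r\mid\mathcal C_0)$, controls both sources of noise at once. The Bernoulli noise then enters only through the sandwich $0\le\mu_r-L_r^{\mathrm{sel}}\le1/(2\sqrt R)$, which is deterministic given $\mathcal C_0$.

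Its lower half gives, pointwise and at no cost for the random index,
$L_{\widehat s}^{\mathrm{sel}}\le\mu_{\widehat s}\le\widehat L_{\widehat s}+\max_rD_r\le\widehat L_{r_0}+\max_rD_r$.
Its upper half is paid once at the fixed index, through $\mathbb E(\widehat L_{r_0}\mid\mathcal C_0)=\mu_{r_0}\le L_{r_0}^{\mathrm{sel}}+1/(2\sqrt R)$; your $r^*$ works equally well here. For the transfer bound, $\max_r|\widehat L_r-L_r^{\mathrm{sel}}|\le\max_r|D_r|+1/(2\sqrt R)$ gives exactly $1/(2\sqrt R)+\sqrt{\log(2K)/(2J)}$.
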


The first bound is an oracle inequality for the fitted candidate family.
The second formalizes transfer from the observed selection MAE to the actual integrated MAE after recalibration.

\subsubsection{Proof of Proposition~\ref{prop:recap-mae-selection}}
\label{subsec:supp-mae-selection-proof}

\paragraph{Proof structure.}
The proof has four substantive steps.
\begin{enumerate}[leftmargin=*,nosep]
    \item Control validation bias by the binomial variance and Jensen's inequality, and control selection over finitely many candidates by a bounded-variable exponential-moment inequality proved below.
    \item Express an integrated difference between representation-based cutoff rules through $H_r$ by conditional expectation and monotonicity of cumulative distribution functions.
    \item Bound each bank's deviation from its population quantile using the H\"older assumption, uniform-order-statistic moments, and the exact inverse-moment identity for a binomial pool size.
    \item Combine the empirical minimization inequality with these bounds, using independence of the final bank to avoid an additional candidate-uniform error at deployment.
\end{enumerate}

\begin{proof}
Fix $\mathcal D_0$ and suppress it in expectations unless additional conditioning is displayed.

\textit{Step 1: validation error and empirical minimization.}
We condition on the selection-calibration bank, denoted by $\mathcal C_0$.
For each candidate, write
\[
c_r(\theta)=F_\theta(\widehat q_{\alpha,r}^{\mathrm{sel}}(g_r(\theta))),
\qquad
\mu_r=\mathbb E(\widehat L_r\mid\mathcal C_0).
\]
Given a validation parameter $\theta$, the corresponding coverage estimate $\widehat c$ is the mean of $R$ independent Bernoulli variables with success probability $c_r(\theta)$.
For any such mean with success probability $c$, convexity of the absolute value, its reverse triangle inequality, and Cauchy--Schwarz give
\[
0\leq\mathbb E|\widehat c-\tau|-|c-\tau|
\leq\mathbb E|\widehat c-c|
\leq\sqrt{\frac{c(1-c)}R}
\leq\frac1{2\sqrt R}.
\]
Integrating over the independent validation parameter proves
\begin{equation}
\label{eq:recap-mae-validation-bias}
0\leq\mu_r-L_r^{\mathrm{sel}}\leq\frac1{2\sqrt R}.
\end{equation}

We use the exponential-moment bound
\begin{equation}
\label{eq:recap-mae-bounded-mgf}
\mathbb E\exp\{t(Y-\mathbb EY)\}\leq\exp(t^2/8),
\qquad t\in\mathbb R,
\end{equation}
for every random variable $Y$ taking values in $[0,1]$.
For completeness, let $\phi(t)=\log\mathbb E\exp\{t(Y-\mathbb EY)\}$.
Boundedness permits differentiation under the expectation; $\phi''(t)$ is the variance of $Y$ under the probability law tilted by $\exp(tY)$.
Under this tilted law, $Y$ still belongs to $[0,1]$, so its variance is at most its mean squared distance from $1/2$, which is at most $1/4$.
Since $\phi(0)=\phi'(0)=0$, integrating this second-derivative bound gives $\phi(t)\leq t^2/8$ for every real $t$, proving \eqref{eq:recap-mae-bounded-mgf}.

For each fixed $r$, the $J$ quantities $|\widehat c_{j,r}^{\mathrm{sel}}-\tau|$ are independent given $\mathcal C_0$ and belong to $[0,1]$.
Their independence follows from independent validation parameters and independent simulator draws across parameters.
Applying \eqref{eq:recap-mae-bounded-mgf} to their centered average gives, for $D_r=\mu_r-\widehat L_r$,
\[
\mathbb E\{\exp(tD_r)\mid\mathcal C_0\}\leq\exp\{t^2/(8J)\},
\qquad t\in\mathbb R.
\]
For $t>0$, the pointwise inequality $\exp(t\max_rD_r)\leq\sum_r\exp(tD_r)$ and Jensen's inequality imply
\[
\mathbb E(\max_rD_r\mid\mathcal C_0)
\leq\frac{\log K}{t}+\frac{t}{8J}.
\]
For $K>1$, minimizing at $t=\sqrt{8J\log K}$ gives
\begin{equation}
\label{eq:recap-mae-max-deviation}
\mathbb E(\max_rD_r\mid\mathcal C_0)
\leq\sqrt{\frac{\log K}{2J}},
\qquad
\mathbb E(\max_r|D_r|\mid\mathcal C_0)
\leq\sqrt{\frac{\log(2K)}{2J}}.
\end{equation}
The first inequality also holds for $K=1$, since $\mathbb E(D_1\mid\mathcal C_0)=0$.
The second follows by applying the same argument to the $2K$ variables $D_r$ and $-D_r$.

Choose $r_0\in\arg\min_rL_r^{\mathrm{sel}}$, with a fixed tie-breaking rule, so that $r_0$ is determined by $\mathcal C_0$.
By \eqref{eq:recap-mae-validation-bias} and empirical minimization,
\[
L_{\widehat s}^{\mathrm{sel}}
\leq\mu_{\widehat s}
\leq\widehat L_{\widehat s}+\max_rD_r
\leq\widehat L_{r_0}+\max_rD_r.
\]
Taking expectations over validation and applying \eqref{eq:recap-mae-max-deviation} gives
\begin{equation}
\label{eq:recap-mae-frozen-bank-oracle}
\mathbb E(L_{\widehat s}^{\mathrm{sel}}\mid\mathcal C_0)
\leq\min_rL_r^{\mathrm{sel}}
+\frac1{2\sqrt R}+\sqrt{\frac{\log K}{2J}}.
\end{equation}
The same bias bound and the second inequality in \eqref{eq:recap-mae-max-deviation} give
\begin{equation}
\label{eq:recap-mae-frozen-bank-gap}
\mathbb E\!\left[\max_r|\widehat L_r-L_r^{\mathrm{sel}}|\mid\mathcal C_0\right]
\leq\frac1{2\sqrt R}+\sqrt{\frac{\log(2K)}{2J}}.
\end{equation}

\textit{Step 2: integrated cutoff comparison.}
Fix a candidate $r$ and a realization of a calibration bank.
For two measurable extended-real cutoff functions $a(u)$ and $b(u)$, the difference $F_\theta(a(u))-F_\theta(b(u))$ has a common sign over all parameters satisfying $g_r(\theta)=u$, because each distribution function is nondecreasing.
Conditional expectation at $g_r(\vartheta)=u$ therefore gives
\begin{equation}
\label{eq:recap-mae-threshold-identity}
\begin{aligned}
&\int_\Theta
|F_\theta(a(g_r(\theta)))-F_\theta(b(g_r(\theta)))|\,\pi(d\theta)\\
&\qquad=
\int |H_r(a(u),u)-H_r(b(u),u)|\,\pi_r(du).
\end{aligned}
\end{equation}
The representation-conditional law is unchanged by conditioning on the bank because the fresh pair used to define it is independent of that bank.
This argument can equivalently be written by conditioning the indicator of the interval between $a(u)$ and $b(u)$, so it also applies to measurable cutoffs varying with $u$.
Continuity and properness of $H_r(\cdot,u)$ imply that its $\tau$-quantile $q_r^0(u)$ is finite and satisfies $H_r(q_r^0(u),u)=\tau$ for $\pi_r$-almost every $u$.
Hence, for a generic bank of size $m$, define
\[
W_{r,m}=
\int_\Theta
\left|F_\theta(\widehat q_{\alpha,r,m}(g_r(\theta)))
-F_\theta(q_r^0(g_r(\theta)))\right|\,\pi(d\theta).
\]
Equation~\eqref{eq:recap-mae-threshold-identity} identifies this quantity with
$\int|H_r(\widehat q_{\alpha,r,m}(u),u)-\tau|\,\pi_r(du)$.

\textit{Step 3: calibration-bank approximation.}
For a representation value $u$ satisfying the assumptions, let $G_{r,u}$ be the score distribution conditional on $\|g_r(\vartheta)-u\|_2\leq h_r(u)$.
Conditioning further on $g_r(\vartheta)$ gives
\[
G_{r,u}(t)
=\mathbb E\!\left[H_r(t,g_r(\vartheta))
\,\middle|\,\|g_r(\vartheta)-u\|_2\leq h_r(u)\right].
\]
The H\"older condition thus gives
\begin{equation}
\label{eq:recap-mae-mixture-discrepancy}
\sup_t|G_{r,u}(t)-H_r(t,u)|\leq L_0\overline h_r^\beta.
\end{equation}
Continuity of $H_r(\cdot,v)$ for almost every $v$ and bounded convergence imply continuity of $G_{r,u}$.
For a bank of size $m$, the pool size $n$ is binomial with parameters $m$ and $p_r(u)$, and, conditional on $n$, the selected scores are i.i.d.\ from $G_{r,u}$, by the conditional-indicator calculation in the proof of Theorem~\ref{thm:recap-finite-coverage}.

Put $k=\lceil(n+1)\tau\rceil$.
When $k\leq n$, write each selected score as $G_{r,u}^{-1}(U_i)$ for independent $U_i\sim\operatorname{Unif}(0,1)$, where $G^{-1}(v)=\inf\{t:G(t)\geq v\}$.
This construction has the required distribution because $\mathbb P\{G^{-1}(U_i)\leq t\}=\mathbb P\{U_i\leq G(t)\}=G(t)$.
Continuity of $G$ implies $G(G^{-1}(v))=v$ for $0<v<1$, and monotonicity therefore gives $G_{r,u}(\widehat q_{\alpha,r,m}(u))\overset{d}{=}U_{(k)}$ conditional on $n$.
Here $U_{(k)}$ is the $k$th order statistic of $n$ independent uniforms.
Its distribution function is $\sum_{j=k}^n\binom njv^j(1-v)^{n-j}$ on $0<v<1$, whose derivative is
\[
\frac{n!}{(k-1)!(n-k)!}v^{k-1}(1-v)^{n-k}.
\]
Integration against this density, using the beta integral proved in Lemma~\ref{lem:recap-pilot-mass}, yields
\[
\mathbb EU_{(k)}=\frac{k}{n+1},
\qquad
\mathbb EU_{(k)}^2=\frac{k(k+1)}{(n+1)(n+2)},
\]
and therefore
\[
\operatorname{Var}(U_{(k)})
=\frac{k(n+1-k)}{(n+1)^2(n+2)}
\leq\frac1{4(n+2)}.
\]
Since $0\leq k/(n+1)-\tau\leq1/(n+1)$, Cauchy--Schwarz and the triangle inequality give
\[
\mathbb E|U_{(k)}-\tau|
\leq\frac1{2\sqrt{n+2}}+\frac1{n+1}.
\]
When $k=n+1$, including $n=0$, the cutoff is $+\infty$ and $1-\tau<1/(n+1)$, so the same upper bound holds for its transformed cutoff.
Combining this observation with \eqref{eq:recap-mae-mixture-discrepancy} gives
\[
\mathbb E\!\left[
|H_r(\widehat q_{\alpha,r,m}(u),u)-\tau|\mid n
\right]
\leq L_0\overline h_r^\beta
+\frac1{2\sqrt{n+2}}+\frac1{n+1}.
\]

For $p=p_r(u)>0$, the identity $1/(n+1)=\int_0^1x^n\,dx$ and the binomial expansion give
\[
\mathbb E\frac1{n+1}
=\int_0^1(1-p+px)^m\,dx
=\frac{1-(1-p)^{m+1}}{(m+1)p}
\leq\frac1{(m+1)\underline p_r}.
\]
Cauchy--Schwarz also gives
\[
\mathbb E\frac1{\sqrt{n+2}}
\leq\left(\mathbb E\frac1{n+2}\right)^{1/2}
\leq\frac1{\sqrt{(m+1)\underline p_r}}.
\]
Integrating over $u$ and interchanging nonnegative integrals and expectations now proves
\begin{equation}
\label{eq:recap-mae-single-bank}
\mathbb EW_{r,m}
\leq L_0\overline h_r^\beta
+\frac1{2\sqrt{(m+1)\underline p_r}}
+\frac1{(m+1)\underline p_r}
\leq\omega(m).
\end{equation}
By the triangle inequality for the absolute-value loss,
$|L_r^a-L_r^0|\leq W_r^a$, where $W_r^{\mathrm{sel}}$ and $W_r^{\mathrm{fin}}$ denote $W_{r,m}$ evaluated on the corresponding bank.
In particular,
\begin{equation}
\label{eq:recap-mae-bank-max}
\mathbb E\max_rW_r^{\mathrm{sel}}
\leq\sum_r\mathbb EW_r^{\mathrm{sel}}
\leq K\omega(M_0).
\end{equation}

\textit{Step 4: selection and fresh calibration.}
Choose $r_*\in\arg\min_rL_r^0$, which is fixed conditional on $\mathcal D_0$.
The final bank is independent of $\widehat s$ and all selection information, so \eqref{eq:recap-mae-single-bank} implies
\[
\mathbb EW_{\widehat s}^{\mathrm{fin}}
=\sum_r\mathbb P(\widehat s=r)\mathbb EW_r^{\mathrm{fin}}
\leq\omega(M).
\]
Using \eqref{eq:recap-mae-frozen-bank-oracle} and \eqref{eq:recap-mae-bank-max}, we obtain
\[
\begin{aligned}
\mathbb EL_{\widehat s}^{\mathrm{fin}}
&\leq\mathbb EL_{\widehat s}^0+\omega(M)\\
&\leq\mathbb EL_{\widehat s}^{\mathrm{sel}}
+K\omega(M_0)+\omega(M)\\
&\leq\mathbb E\min_rL_r^{\mathrm{sel}}
+\frac1{2\sqrt R}+\sqrt{\frac{\log K}{2J}}
+K\omega(M_0)+\omega(M)\\
&\leq L_{r_*}^0
+\frac1{2\sqrt R}+\sqrt{\frac{\log K}{2J}}
+(K+1)\omega(M_0)+\omega(M),
\end{aligned}
\]
where the last line uses $\min_rL_r^{\mathrm{sel}}\leq L_{r_*}^{\mathrm{sel}}\leq L_{r_*}^0+W_{r_*}^{\mathrm{sel}}$ and \eqref{eq:recap-mae-single-bank}.
This proves \eqref{eq:recap-mae-oracle}.
Finally, the pointwise inequality
\[
|L_{\widehat s}^{\mathrm{fin}}-\widehat L_{\widehat s}|
\leq W_{\widehat s}^{\mathrm{fin}}
+\max_rW_r^{\mathrm{sel}}
+\max_r|L_r^{\mathrm{sel}}-\widehat L_r|
\]
and \eqref{eq:recap-mae-frozen-bank-gap} prove \eqref{eq:recap-mae-selection-transfer} after taking expectations.
\end{proof}

\paragraph{Consequences and scope.}
For fixed $K$, the remainder in \eqref{eq:recap-mae-oracle} vanishes if $J,R\to\infty$, $\max_r\overline h_r\to0$, and both $\min_rM_0\underline p_r$ and $\min_rM\underline p_r$ diverge, with $L_0$ and $\beta$ fixed.
If the smallest population-candidate MAE also tends to zero, the selected procedure's expected integrated MAE tends to zero.
For random preliminary fits, the same conclusion holds after averaging over $\mathcal D_0$ if these conditions and $\min_rL_r^0\to0$ hold in probability, because the conditional expected MAE is bounded by one and by the right-hand side of \eqref{eq:recap-mae-oracle}.
Indeed, for any bounded nonnegative quantity $V\leq1$, $\mathbb EV\leq\epsilon+\mathbb P(V>\epsilon)$, so convergence in probability to zero implies convergence of its expectation.

The fresh-bank comparison in Proposition~\ref{prop:recap-mae-selection} uses both the stated regularity assumptions and validation parameters drawn from the same $\pi$ as the calibration banks, through \eqref{eq:recap-mae-threshold-identity}.
The proposition analyzes MAE minimization over a candidate set fixed before selection. 
A validation-dependent eligibility filter, another error criterion, or changes to the fitted representation or radius function after selection require separate analysis.

\section{Additional simulation: calibration with a shared low-dimensional representation}
\label{sec:supp-shared-g}

To compare calibration rules after a representation has been learned, we repeat the Poisson experiment in Section~\ref{subsec:simulation-poisson}, providing \texttt{RECAP}, \texttt{TRUST++}, and \texttt{CatBoost} with the same learned representation $g(\theta)$.
The model, Pearson statistic, proposal, twelve target parameters, and false candidates are unchanged.
We generate new simulation banks and relearn the representation independently of the main experiment.
The common representation is learned by \texttt{CF-RMAVE}, with dimension selected from $\{1,\ldots,6\}$ using the independent-location coverage-MAE criterion for \texttt{RECAP}.
The selected dimension in this experiment is $\widehat s=2$.
The exact oracle continues to calibrate at the full candidate parameter.
Section~\ref{subsec:supp-sim-poisson-shared-g} gives the implementation details.

Figure~\ref{fig:supp-poisson-shared-g} shows that the three learned calibrators have coverage close to $0.90$ at most targets and power that broadly follows the exact oracle.
\texttt{RECAP} and \texttt{CatBoost} have similar final-grid coverage MAEs, with somewhat larger error for \texttt{TRUST++} (Figure~\ref{fig:supp-poisson-shared-g-summary}(a)).
The coverage deviations of \texttt{TRUST++} and \texttt{CatBoost} are substantially smaller than in the original-parameter comparison in Figure~\ref{fig:simulation-poisson}.
These results show that the learned representation is useful for forest-based pooling and direct quantile regression as well as for moving-ball calibration.

\begin{figure}[t]
    \centering
    \includegraphics[width=0.8\textwidth]{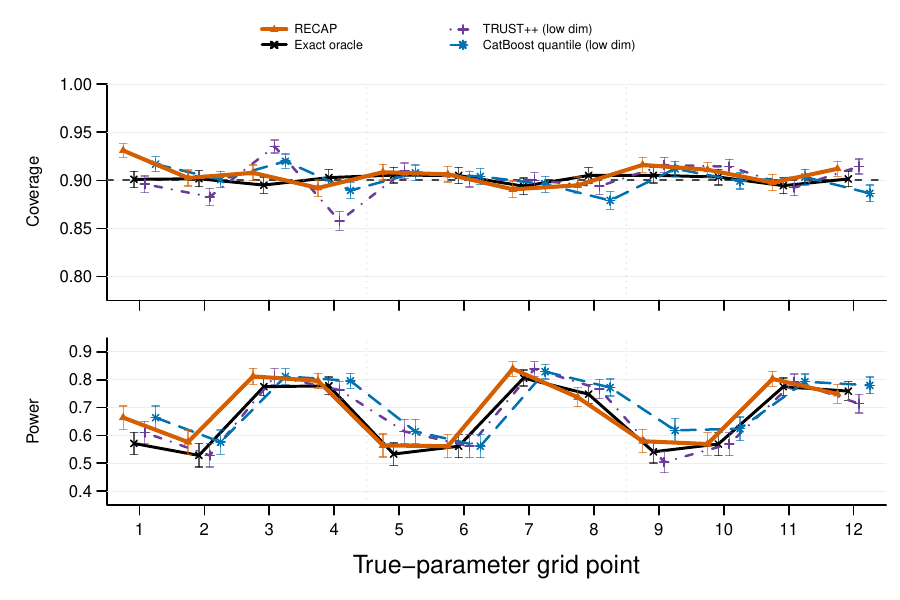}
    \caption{Poisson regression with a shared learned calibration representation: coverage and power at nominal coverage $0.90$.
    \texttt{RECAP}, \texttt{TRUST++}, and \texttt{CatBoost} use the same learned representation.
    The exact oracle uses the full Poisson model.}
    \label{fig:supp-poisson-shared-g}
\end{figure}

\begin{figure}[p]
    \centering
    \subfigure[Coverage error.]{%
        \includegraphics[width=\textwidth]{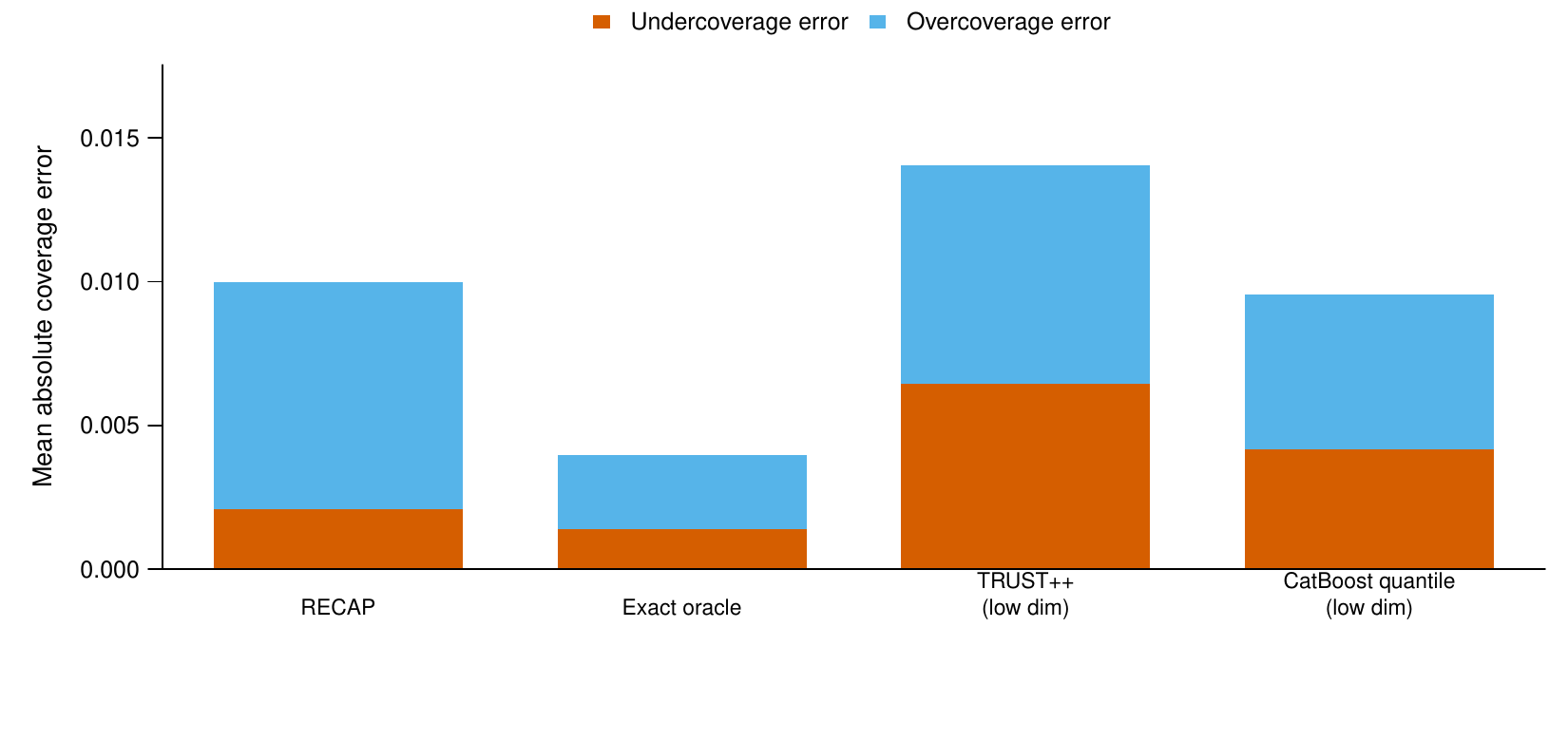}}
    \\[0.5ex]
    \subfigure[Simulation cost.]{%
        \includegraphics[width=0.78\textwidth]{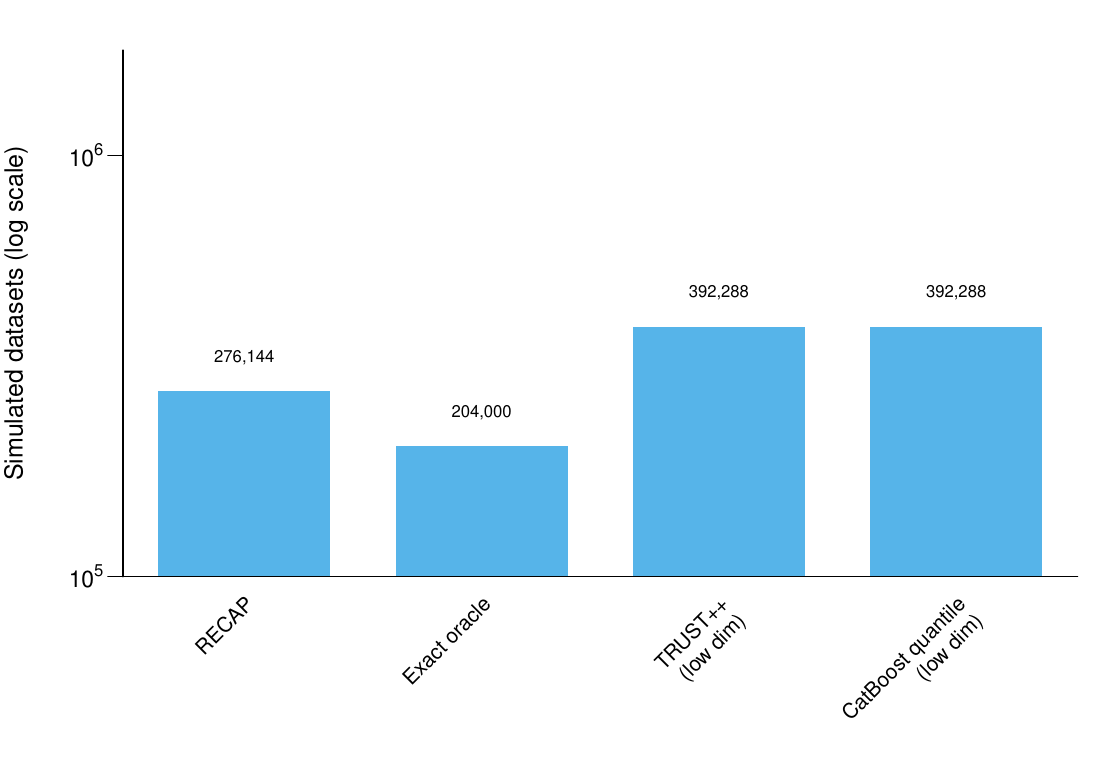}}
    \caption{Poisson regression with a shared learned calibration representation at nominal coverage $0.90$.
    (a) Mean absolute coverage error across twelve targets, decomposed into undercoverage and overcoverage contributions.
    (b) Total simulated datasets on a logarithmic scale, including representation learning, dimension selection, method-specific training and tuning, calibration, and final evaluation when required.
    The common representation-learning and selection costs are charged to each method that uses the learned representation.}
    \label{fig:supp-poisson-shared-g-summary}
\end{figure}

\section{Additional simulation: spectral clustering in the stochastic block model}
\label{subsec:simulation-sbm}

We investigate Example~\ref{ex:calibration-sbm}, spectral clustering for the Stochastic Block Model, using a simulation model with $30$ vertices in three communities of size ten, with edge probabilities $p_{\mathrm{out}}+\Delta\mathbf 1\{z_i=z_j\}$.
Spectral clustering uses the leading three adjacency eigenvectors, row normalization, and deterministic three-means clustering.
The score is the label-invariant misclustering fraction augmented by an independent $\operatorname{Unif}(0,1/30)$ variable to break ties.
\texttt{RECAP} uses the two model-based coordinates $(p_{\mathrm{out}},\Delta)$, rescaled by their proposal ranges.
The targets combine $p_{\mathrm{out}}\in\{0.08,0.12,0.16,0.20,0.24,0.28\}$ with $\Delta\in\{0.28,0.22\}$, holding a balanced community assignment fixed at each target.
The two false assignments remain balanced and have label-invariant distances $0.40$ and $0.50$ from the truth, with connection probabilities unchanged.
Supplement Section~\ref{subsec:supp-sim-sbm} specifies the clustering and alternative constructions.

In Figure~\ref{fig:simulation-sbm}, \texttt{RECAP} tracks the pointwise oracle's coverage and power.
\texttt{TRUST++} also remains reasonably close to nominal coverage, but has larger coverage deviations than \texttt{RECAP}, while \texttt{CatBoost} tends to undercover.
Ambient-space calibration is particularly conservative and less powerful.
The plug-in bootstrap shows substantial undercoverage at several targets, and increasing its budget from $1{,}200$ to $10{,}000$ does not materially change this pattern.

Supplement Figure~\ref{fig:supp-simulation-sbm-summary}(a) shows that \texttt{RECAP} has the smallest observed MAE among the nonoracle methods, with little undercoverage.
Panel~(b) reports $80{,}000$ simulated datasets for \texttt{RECAP}, compared with $112{,}144$ for each of \texttt{TRUST++} and \texttt{CatBoost} and $300{,}000$ for the pointwise oracle; the larger-budget bootstrap uses approximately $24$ million.

\begin{figure}[t]
    \centering
    \includegraphics[width=0.8\textwidth]{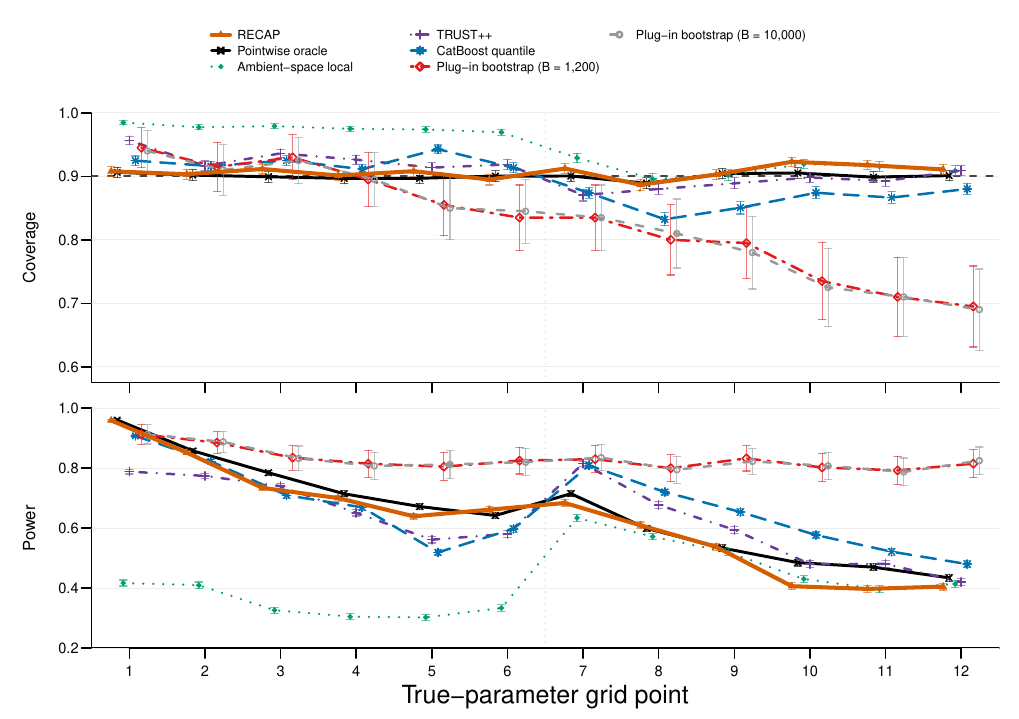}
    \caption{Stochastic block model: coverage and power.
    }
    \label{fig:simulation-sbm}
\end{figure}

\section{Simulation study: designs and implementation details}
\label{sec:supp-simulation}

This section provides further details for the four experiments in Section~\ref{sec:simulation-study}.

\subsection{Simulation banks and evaluation}

Table~\ref{tab:simulation-budgets} lists the simulation budgets.
Within each experiment, \texttt{RECAP}, ambient-space calibration, \texttt{TRUST++}, and \texttt{CatBoost} share the final score bank.
Separate simulations are used for representation learning, tuning, dimension selection, pointwise calibration, and final evaluation when needed.
\texttt{TRUST++} and \texttt{CatBoost} share their fitting and tuning banks, while the bootstrap uses new simulations for each evaluated dataset.

\begin{table}[tp]
    \centering
    \small
    \setlength{\tabcolsep}{4pt}
    \begin{tabular}{lrrrr}
        \toprule
        & Denoising & SBM & Completion & Poisson \\
        \midrule
        Final bank $M$ & $20{,}000$ & $20{,}000$ & $20{,}000$ & $100{,}000$ \\
        Parameter-only pilot & $100{,}000$ & $100{,}000$ & $100{,}000$ & $100{,}000$ \\
        Target neighborhood mass & $0.06$ & $0.06$ & $0.06$ & $0.01$ \\
        Representation training & -- & -- & -- & $5{,}000$ \\
        Representation tuning & -- & -- & -- & $5{,}000$ \\
        Dimension-selection calibration & -- & -- & -- & $100{,}000$ \\
        Dimension-selection locations & -- & -- & -- & $24$ \\
        Selection scores per location & -- & -- & -- & $256$ \\
        Comparator training & $6{,}000$ & $6{,}000$ & $6{,}000$ & $10{,}000$ \\
        Comparator tuning calibration & $20{,}000$ & $20{,}000$ & $20{,}000$ & $100{,}000$ \\
        Comparator validation locations & $24$ & $24$ & $24$ & $24$ \\
        Validation scores per location & $256$ & $256$ & $256$ & $256$ \\
        Pointwise calibration per null & $5{,}000$ & $20{,}000$ & $5{,}000$ & Exact \\
        Exact-oracle pilot per candidate & -- & -- & -- & $4{,}000$ \\
        Coverage observations per target & $1{,}000$ & $5{,}000$ & $1{,}000$ & $5{,}000$ \\
        Bootstrap observations per target & $200$ & $200$ & $200$ & $500$ \\
        Power observations, nonbootstrap & $200$ & $5{,}000$ & $200$ & $500$ \\
        Power observations, bootstrap & $200$ & $200$ & $200$ & $500$ \\
        Bootstrap resamples per observation & $1{,}200/20{,}000$ & $1{,}200/10{,}000$ & $1{,}200$ & $5{,}000$ \\
        \bottomrule
    \end{tabular}
    \caption{Simulation budgets and evaluation sizes.
    Entries count simulated datasets unless a row explicitly concerns parameter-only draws, validation locations, or neighborhood mass.
    A dash indicates that the stage is not used.
    Comparator training and tuning banks are shared by \texttt{TRUST++} and \texttt{CatBoost}.
    Bootstrap evaluation uses the first indicated observations from the corresponding final test bank.}
    \label{tab:simulation-budgets}
\end{table}

Coverage is evaluated at twelve fixed parameters using independent observations.
The displayed $95\%$ Monte Carlo intervals use the estimated coverage plus or minus $1.96$ standard errors, truncated to $[0,1]$.
For reusable cutoffs, these intervals describe evaluation uncertainty with the fitted procedures and calibration banks held fixed.
Power is averaged over two false candidates at each target; its standard error is computed from the paired rejection averages across observations.
The same test observations are used across methods where their evaluation budgets overlap.
Coverage MAE is the mean absolute deviation from $0.90$ across the twelve targets, and we report its undercoverage and overcoverage contributions separately.

\subsection{\texttt{RECAP}: pooling and representation learning}
\label{subsec:supp-sim-recap}

All experiments use moving-ball calibration with the adjusted empirical quantile in \eqref{eq:recap-cutoff}.
The pooling radii are determined by $100{,}000$ independent parameter-only draws, targeting neighborhood mass $0.06$ in denoising, the SBM, and completion, and $0.01$ in Poisson regression.
These choices give approximately $1{,}200$ and $1{,}000$ scores per cutoff, respectively.

Only Poisson regression uses a learned representation in the main experiments.
We fit the characteristic-function \texttt{RMAVE} ensemble \citep{yin2011ensemble} using sixteen frequencies between $0.20$ and $2.50$ and separate simulation banks for fitting and tuning.
We select the dimension from $\{1,\ldots,6\}$ by coverage MAE at 24 independently sampled parameters, with 256 observations per parameter and a separate selection-calibration bank.
The selected dimension is 3.
None of the dimension-selection simulations is reused in final calibration or evaluation.

\subsection{Shared comparison methods}
\label{subsec:supp-sim-comparisons}

The pointwise oracle calibrates directly at each tested parameter, using the budgets in Table~\ref{tab:simulation-budgets}.
For denoising and the SBM, parameters with the same null score distribution share a pointwise cutoff.
In Poisson regression, we instead compute the exact finite-sample quantile numerically.
Ambient-space calibration follows the same moving-ball rule as \texttt{RECAP}, but uses the original parameters rather than a reduced representation.

\texttt{TRUST++} fits a random forest to simulated scores and pools them by forest proximity \citep{cabezas2026tmlr-conformal}.
We tune its leaf size and proximity threshold using coverage MAE on independent simulations.
\texttt{CatBoost} directly estimates the conditional $0.90$ score quantile, with tree depth and regularization tuned by validation quantile loss and boosting iterations chosen by early stopping on separate simulations.
After tuning, \texttt{CatBoost} is refitted using its training bank and the shared final bank.

For the plug-in parametric bootstrap, we estimate the model parameter from each observed dataset and simulate at that estimate.
The fitted parameter comes from the truncated SVD in denoising, spectral clustering and fitted edge probabilities in the SBM, ridge alternating least squares in completion, and constrained maximum likelihood in Poisson regression.
The resulting bootstrap cutoff is used for testing both the true parameter and the two false candidates for that observation.

\subsection{Matrix denoising}
\label{subsec:supp-sim-denoising}

We generate rank-two signals $S=U\operatorname{diag}(\lambda_1,\lambda_2)V^\top$ with independent random orthonormal factors, $\lambda_2\sim\operatorname{Unif}(1.25,2.75)$, and $\lambda_1-\lambda_2\sim\operatorname{Unif}(0.5,1.5)$.
\texttt{RECAP} pools scores by Euclidean distance between the two ordered singular values.
For power, we rotate the left singular vectors to obtain false matrices at Frobenius distances $3$ and $3.5$ from the truth, leaving their singular values unchanged.
Thus, these candidates have the same null score distribution as the true matrix.

\subsection{Stochastic block model}
\label{subsec:supp-sim-sbm}

We generate balanced three-community graphs with $p_{\mathrm{out}}\sim\operatorname{Unif}(0.06,0.30)$ and $\Delta\sim\operatorname{Unif}(0.20,0.30)$, independently, and randomly permute the community labels.
The spectral-clustering procedure uses the leading three adjacency eigenvectors, normalized rows, and three-means clustering.
We add independent $\operatorname{Unif}(0,1/30)$ noise to each misclustering score to break ties.
\texttt{RECAP} pools scores using the rescaled coordinates $(p_{\mathrm{out}},\Delta)$.
The two false assignments remain balanced, have label-invariant misclustering distances $0.40$ and $0.50$ from the truth, and keep the same edge probabilities.

\subsection{Matrix completion}
\label{subsec:supp-sim-completion}

We use the rank-two matrix family described in Section~\ref{subsec:simulation-matrix-completion}, varying signal strength $\rho\sim\operatorname{Unif}(0.45,1.00)$ and row and column coherence parameters $\kappa_U,\kappa_V\sim\operatorname{Unif}(0,1.20)$ independently.
The matrices are formed from orthonormal Hadamard columns, with four rows or columns rescaled by $e^{\kappa_U}$ or $e^{\kappa_V}$ and with random row and column permutations.
The estimator starts from the rank-two SVD of the rescaled zero-filled data and applies five ridge alternating-least-squares sweeps, with ridge penalty $0.05$.

For the fixed representation, let $U$ and $V$ be the orthonormal left and right singular-vector matrices, set $\mu_U=16\max_i\|U_{i\cdot}\|_2^2$, and define $\mu_V$ in the same way from $V$.
We use the normalized coordinates
\[
\begin{aligned}
x_1&=\frac{\lambda_1(M)/32-0.45}{0.55},\\
x_2&=\frac{\log\mu_U}{\log\mu_{\max}},\qquad
x_3=\frac{\log\mu_V}{\log\mu_{\max}},\qquad
\mu_{\max}=\frac{e^{2.4}}{7/8+e^{2.4}/8}.
\end{aligned}
\]
We use $g(M)=(x_1,x_2,x_3,x_1x_2,x_1x_3,x_2x_3)^\top$ for \texttt{RECAP}, while the ambient-space and supervised comparators use the original matrix entries.
For power, we change $\rho$ by $0.12$ or $0.18$ toward the center of its range, leaving coherence and the row and column permutations fixed.

\subsection{Poisson regression}
\label{subsec:supp-sim-poisson}

We use a fixed full-rank design with two dominant directions and ten weak directions.
The baseline intensities are $e_j=\exp\{\log(0.0008)+(j-1)\log(20)/11\}$, $j=1,\ldots,12$.
In rotated coordinates, the two dominant parameters lie in $[-2,2]$ and the other ten lie in $[-9,9]$.
For calibration, we independently draw the first two coordinates from $N(0,0.8^2)$ truncated to $[-2,2]$ and the other ten from $N(0,6^2)$ truncated to $[-9,9]$.
The twelve targets combine four settings for the two dominant coordinates with three levels along a fixed nuisance direction.
For power, the two false candidates shift the first dominant coordinate by $-0.65$ and $-1$.
We standardize the original parameters before fitting \texttt{RMAVE} and the projected coordinates before pooling.

The pointwise oracle calculates the finite-sample Pearson-score quantile by enumerating the relevant joint Poisson probabilities until their cumulative mass reaches $0.90$.
The numerical calculation checks that omitted count vectors cannot change the quantile; 4,000 pilot simulations per tested parameter set the initial search range.
The chi-square comparator uses the asymptotic $\chi^2_{12,0.90}$ cutoff.
For the bootstrap, we fit the parameter by maximum likelihood within the parameter bounds and generate new Poisson observations at that fit.

\subsection{Poisson regression with a shared learned representation}
\label{subsec:supp-sim-poisson-shared-g}

We repeat the Poisson experiment with independent simulation banks and provide \texttt{RECAP}, \texttt{TRUST++}, and \texttt{CatBoost} with the same learned representation $g(\theta)$.
The representation is learned by the characteristic-function \texttt{RMAVE} ensemble, and independent coverage-MAE selection gives $\widehat s=2$.
All three methods use a new shared final bank of $100{,}000$ scores, with separate fitting and tuning simulations for \texttt{TRUST++} and \texttt{CatBoost}.
The pointwise oracle still uses the full Poisson model; the bootstrap and chi-square comparisons are omitted.

\subsection{Coverage MAE and simulation cost}
\label{subsec:supp-sim-diagnostics}

Figures~\ref{fig:supp-simulation-sbm-summary}--\ref{fig:supp-simulation-poisson-summary} show coverage MAE and simulation costs for the SBM, completion, and Poisson experiments.
The corresponding denoising results appear in Figure~\ref{fig:intro-matrix-denoising}.
MAE is split into undercoverage and overcoverage, and simulation totals include fitting, tuning, calibration, and final evaluation when used.
Parameter-only pilots are not counted; bootstrap totals cover the full repeated-evaluation experiments rather than a single observed dataset.
For Poisson regression, the exact oracle also uses 4,000 pilot simulations at each of the 36 tested parameters, although numerical enumeration itself is not counted as simulation.

\begin{figure}[p]
    \centering
    \subfigure[Coverage MAE.]{%
        \includegraphics[width=0.78\textwidth]{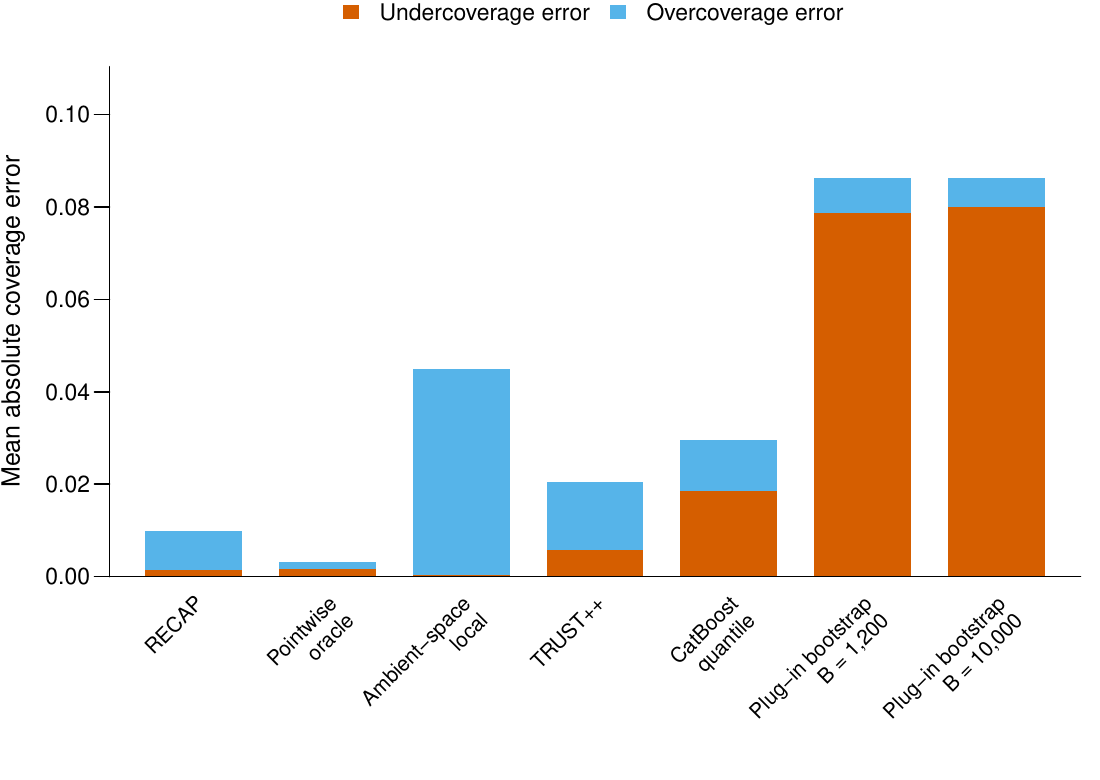}}
    \\[0.5ex]
    \subfigure[Simulation cost.]{%
        \includegraphics[width=0.78\textwidth]{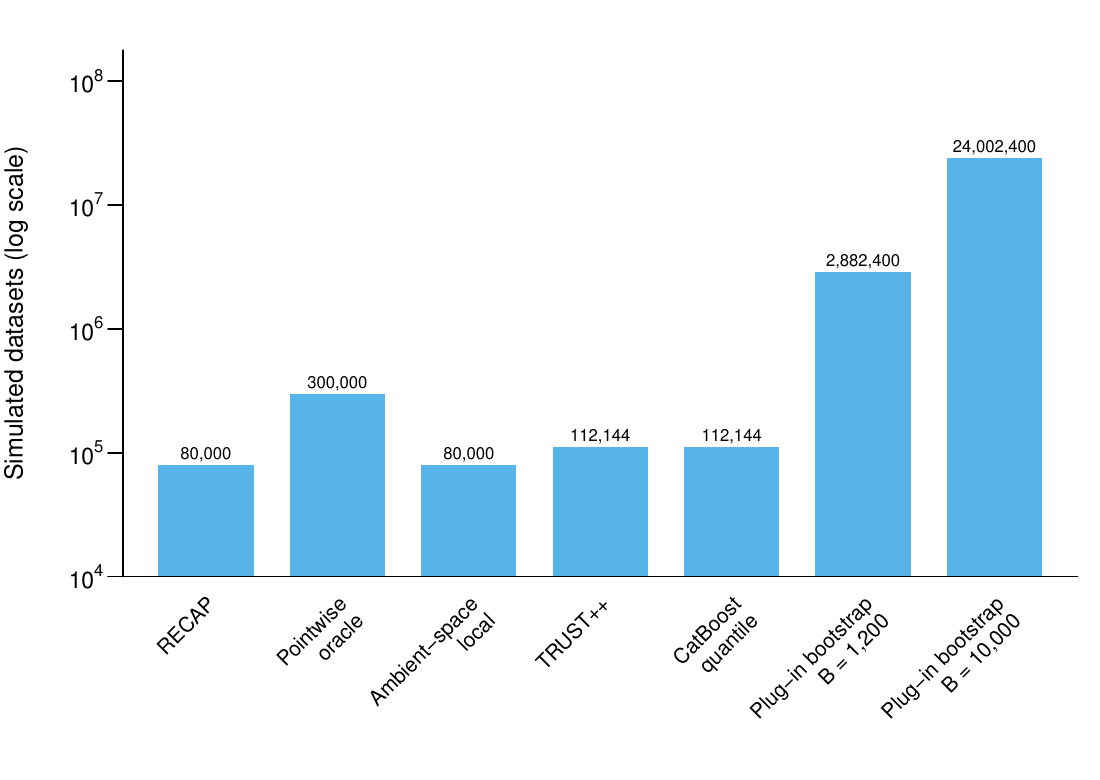}}
    \caption{Stochastic block model: (a) mean absolute coverage error at nominal level $0.90$ across twelve targets; (b) total simulated datasets on a logarithmic scale, including training, tuning, calibration, bootstrap resampling, and final evaluation.}
    \label{fig:supp-simulation-sbm-summary}
\end{figure}

\begin{figure}[p]
    \centering
    \subfigure[Coverage MAE.]{%
        \includegraphics[width=0.78\textwidth]{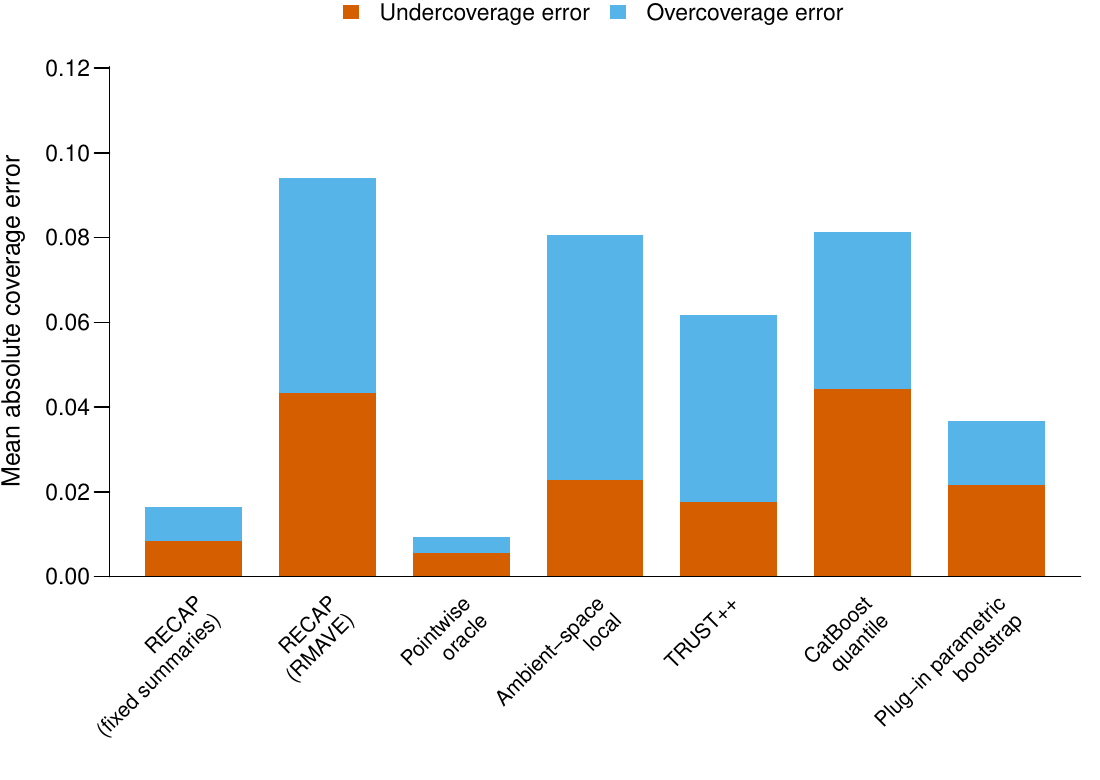}}
    \\[0.5ex]
    \subfigure[Simulation cost.]{%
        \includegraphics[width=0.78\textwidth]{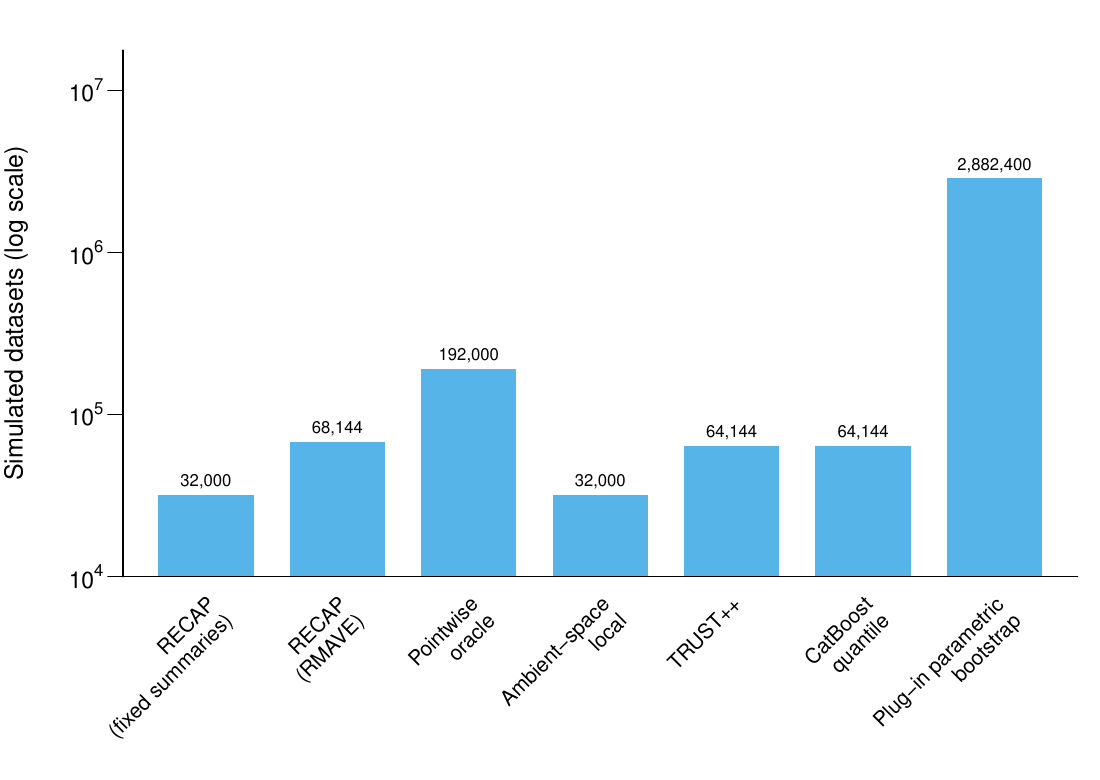}}
    \caption{Noisy matrix completion: (a) mean absolute coverage error at nominal level $0.90$; (b) total simulated datasets on a logarithmic scale.
    \texttt{RECAP} uses fixed summaries without representation learning.}
    \label{fig:supp-simulation-completion-summary}
\end{figure}

\begin{figure}[p]
    \centering
    \subfigure[Coverage MAE.]{%
        \includegraphics[width=\textwidth]{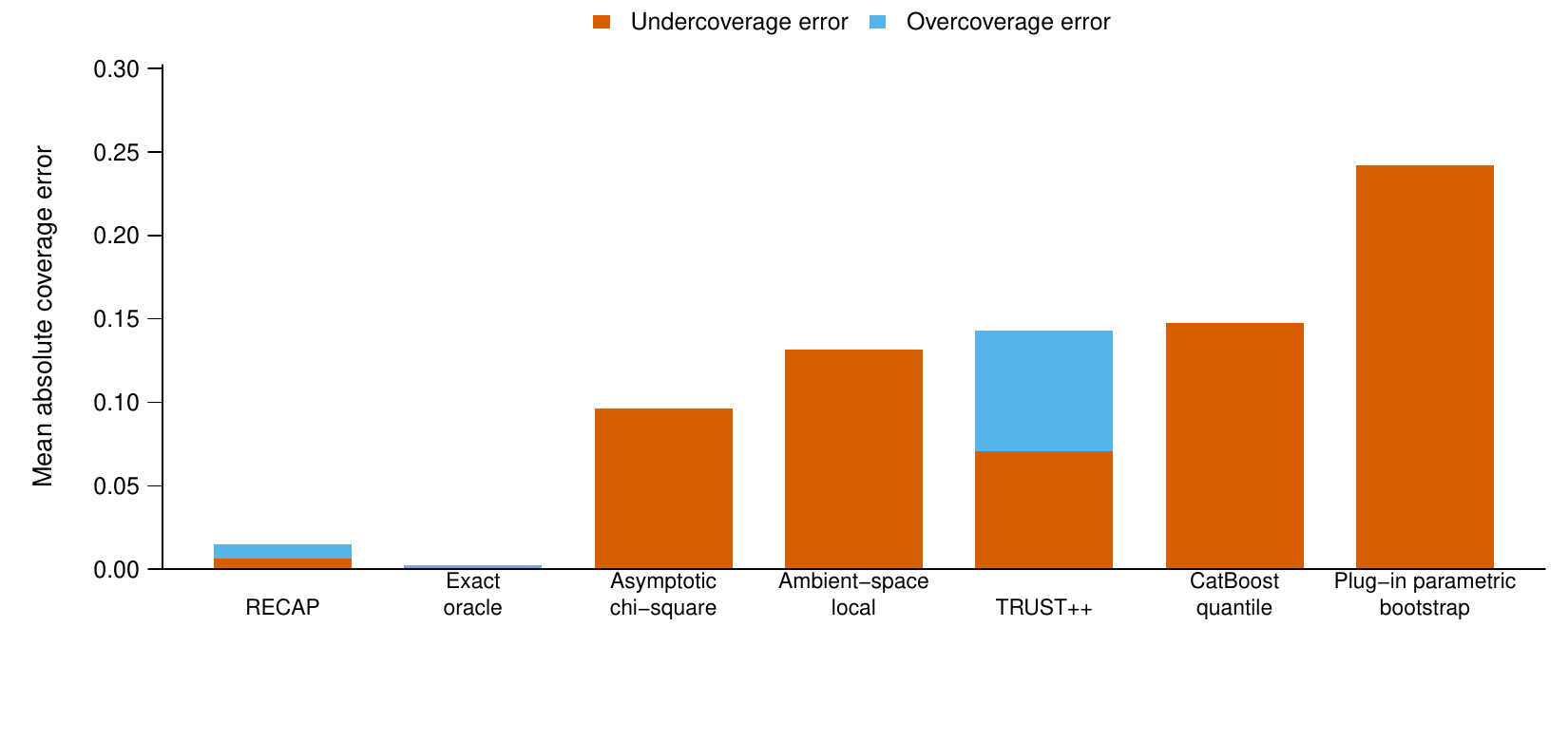}}
    \\[0.5ex]
    \subfigure[Simulation cost.]{%
        \includegraphics[width=0.78\textwidth]{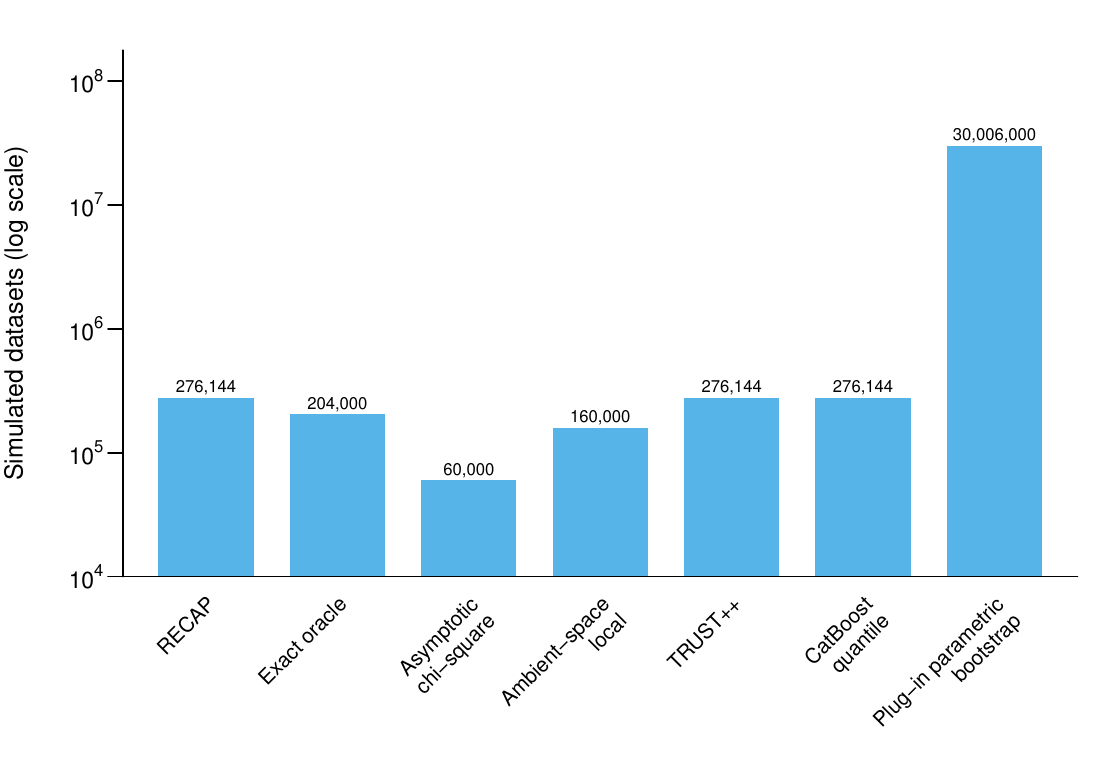}}
    \caption{Poisson regression: (a) final-test mean absolute coverage error at nominal level $0.90$; (b) total simulated datasets on a logarithmic scale, including representation learning and independent-location dimension selection for \texttt{RECAP}.}
    \label{fig:supp-simulation-poisson-summary}
\end{figure}

\section{Coalescent case study: model, validation, and implementation}
\label{sec:supp-real}

\subsection{Demographic specification and genomic summaries}
\label{subsec:supp-real-model}

We use a neutral two-population coalescent model with recombination \citep{Kingman1982,Hudson1983,Notohara1990}.
The two populations share a common ancestor before the split time $T_{\mathrm{split}}$.
The 35 parameters comprise ten population sizes for each present-day population, four ancestral population sizes, the split time, and five migration rates in each direction.
Population sizes and migration rates are constant within specified time intervals (Figure~\ref{fig:supp-real-demographic-model}).
Table~\ref{tab:real-size-bounds} gives the population-size bounds; the split time ranges from 1,600 to 3,600 generations, and migration rates range from $10^{-6}$ to $10^{-4}$ per generation.
We use the normalized log-parameters $u_j(\theta)=\log(\theta_j/L_j)/\log(U_j/L_j)$ from Section~\ref{subsec:real-application} and sample simulation locations across $[0,1]^{35}$ using Latin-hypercube designs.

\begin{figure}[t]
    \centering
    \includegraphics[width=0.95\textwidth]{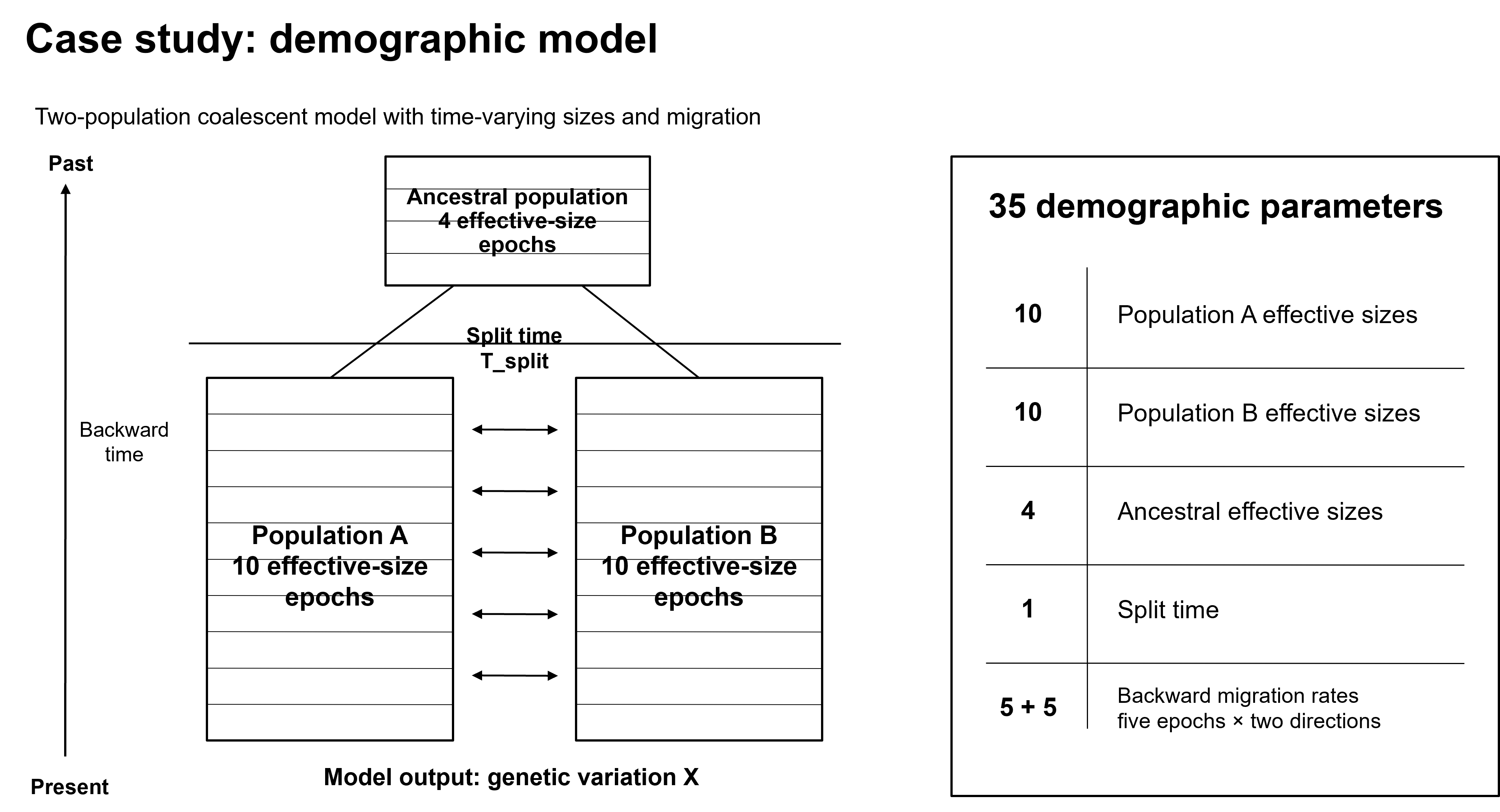}
    \caption{Two-population demographic model. Time runs backward from the present. Dashed lines show population-size epochs, and arrows show migration between the populations.}
    \label{fig:supp-real-demographic-model}
\end{figure}

\begin{table}[htbp]
    \centering
    \begin{tabular}{rrrr}
        \toprule
        Epoch & Population $A$ & Population $B$ & Ancestor \\
        \midrule
        1 & 15,000 & 22,500 & 6,000 \\
        2 & 12,000 & 15,000 & 7,500 \\
        3 & 10,000 & 10,000 & 9,000 \\
        4 & 8,500 & 6,000 & 9,000 \\
        5 & 7,500 & 3,500 & --- \\
        6 & 7,000 & 2,000 & --- \\
        7 & 6,500 & 1,250 & --- \\
        8 & 6,000 & 1,500 & --- \\
        9 & 5,500 & 2,500 & --- \\
        10 & 5,500 & 4,000 & --- \\
        \bottomrule
    \end{tabular}
    \caption{Lower bounds for effective population sizes. Each upper bound is four times its lower bound.}
    \label{tab:real-size-bounds}
\end{table}

Each observation $X$ contains four 20-kb genomic blocks for 20 diploid individuals from each population.
We simulate the blocks independently given $\theta$ using \texttt{msprime} \citep{Kelleher2016,Baumdicker2022}, with recombination rate $10^{-8}$ and mutation rate $1.25\times10^{-8}$ per base per generation.
We extract 86 features from each block: 12 marginal site-frequency-spectrum (SFS) features, 49 joint-SFS features, nine diversity and differentiation features, and 16 linkage-disequilibrium features.
The mean and standard deviation of each feature across the four blocks give 172 summaries, which we reduce to $Z(X)\in\mathbb R^{30}$ as described below.
Figure~\ref{fig:supp-real-workflow} shows the main steps.

\subsection{Statistic fitting, representation learning, and dimension selection}
\label{subsec:supp-real-learning}

We use 2,048 independent simulated parameter--observation pairs to fit the feature transformations, standardization, and principal components, retaining 30 components.
We estimate the mean $\widehat m$ in \eqref{eq:real-score} by quadratic ridge regression in standardized log-parameters.
Separate simulations are used to choose its penalty and estimate the residual corrections $\widehat b_\ell$ and scales $\widehat\tau_\ell$.
These fitted quantities are then held fixed.

We learn the representation $g$ using the characteristic-function \texttt{RMAVE} ensemble \citep{yin2011ensemble}, based on simulations at 512 parameter locations.
We choose the fitting settings by cross-validation and select the dimension from $\{1,2,3,4,6,8\}$ using coverage RMSE, corrected for Monte Carlo error, on independent simulations.
This gives $\widehat s=1$.

\begin{figure}[t]
    \centering
    \includegraphics[width=\textwidth]{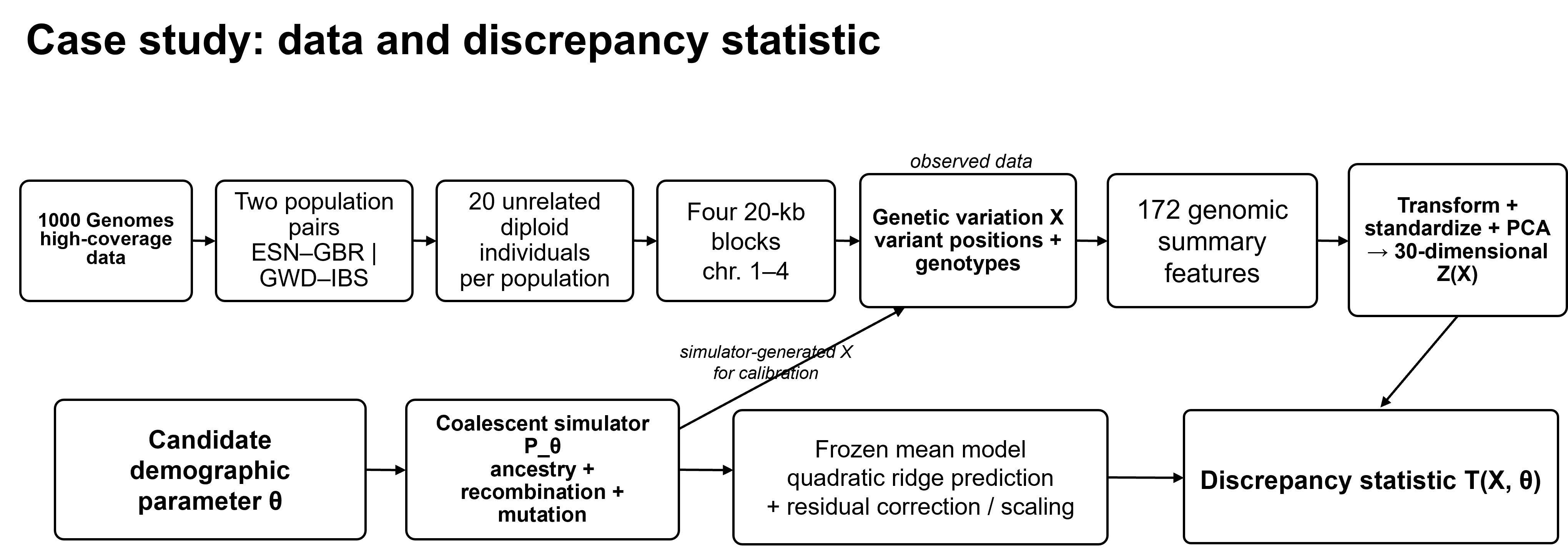}
    \caption{From genomic observations to the test statistic. The summary transformation and fitted score parameters are held fixed during calibration and inference.}
    \label{fig:supp-real-workflow}
\end{figure}

\subsection{Final calibration and simulation budgets}
\label{subsec:supp-real-calibration}

The final calibration bank contains 32 independent observations at each of 512 Latin-hypercube parameter locations, giving $16{,}384$ scores.
For each candidate $\theta$, \texttt{RECAP} pools scores from nearby values of the learned representation $g(\theta)$.
An independent parameter-only pilot sets the moving-ball radii to give about 512 pooled scores per candidate.
Ambient-space calibration uses the same score bank but pools in the original 35-dimensional parameter space.
The fixed-location theory in Supplement Section~\ref{sec:supp-fixed-design} describes this type of design under its stated assumptions; we also check coverage directly in simulations.

The pointwise oracle uses 256 null simulations at each tested parameter, and the plug-in bootstrap uses 512 simulations at a fitted parameter for each observed dataset.
\texttt{TRUST++} and \texttt{CatBoost} use the original 35-dimensional parameters, with separate fitting and tuning simulations, following the methods described in Supplement Section~\ref{subsec:supp-sim-comparisons}.
Table~\ref{tab:real-simulation-budget} gives the main simulation budgets.

\begin{table}[t]
    \centering
    \begin{tabular}{lrr}
        \toprule
        Stage or method & Design & Simulated observations \\
        \midrule
        Representation learning & $512\times128$ & 65,536 \\
        Dimension selection & Independent banks & 40,960 \\
        Comparator fitting and tuning & Independent banks & 26,624 \\
        Final shared calibration & $512\times32$ & 16,384 \\
        Final evaluation & $12\times144$ & 1,728 \\
        Pointwise calibration & $36\times256$ & 9,216 \\
        Bootstrap calibration & $12\times144\times512$ & 884,736 \\
        \bottomrule
    \end{tabular}
    \caption{Main simulation budgets for the coalescent study. Each simulated observation contains four genomic blocks.}
    \label{tab:real-simulation-budget}
\end{table}

Across the reported evaluation, bootstrap calibration requires 54 times as many simulations as the shared calibration bank.
The shared bank can also be reused for new candidates and independent observations.
Figure~\ref{fig:supp-real-cost} compares the final-calibration costs and the number of scores used for each cutoff.

\begin{figure}[t]
    \centering
    \subfigure[Final-calibration simulations]{%
        \includegraphics[width=0.48\linewidth]{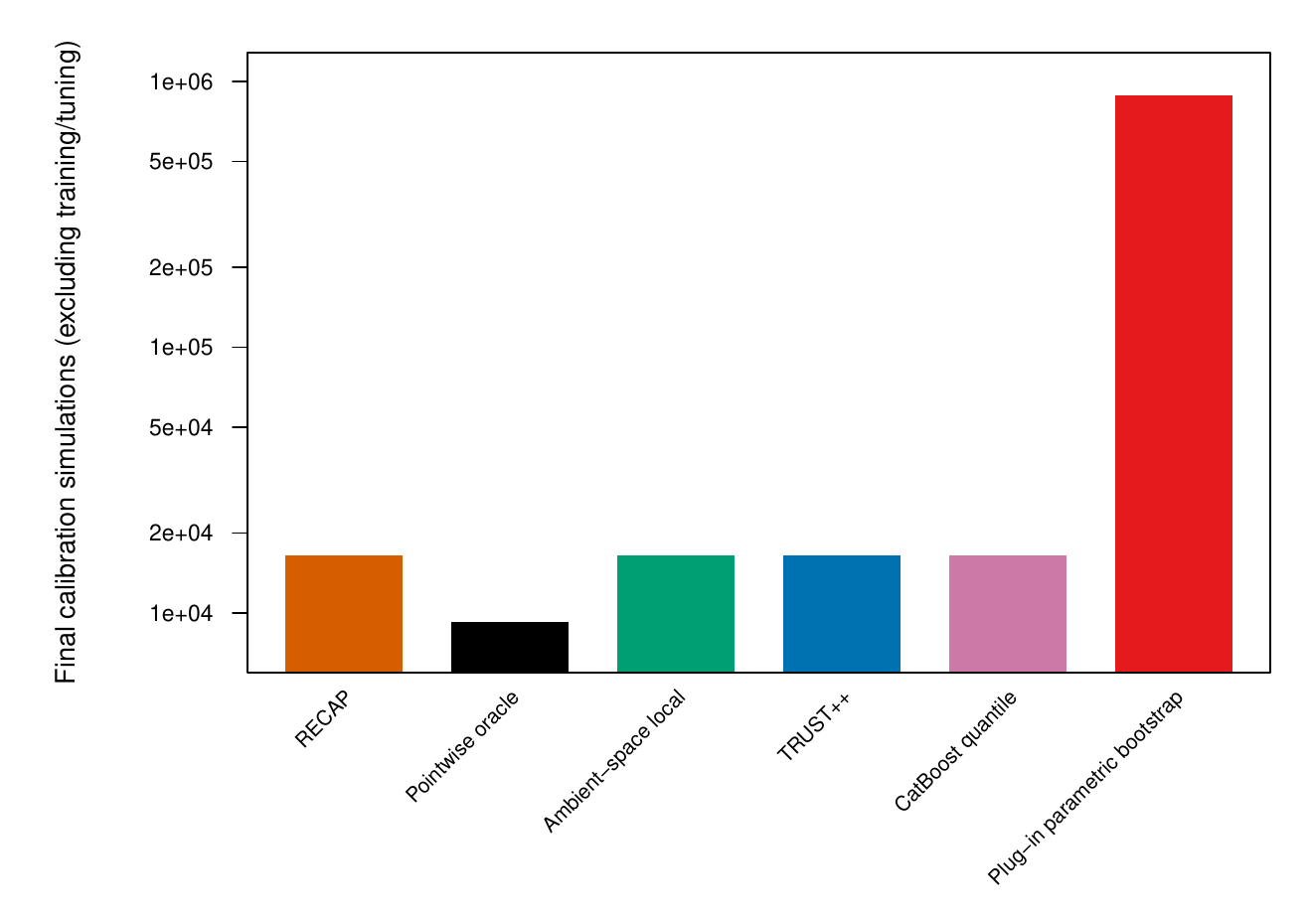}}
    \hfill
    \subfigure[Scores per cutoff]{%
        \includegraphics[width=0.48\linewidth]{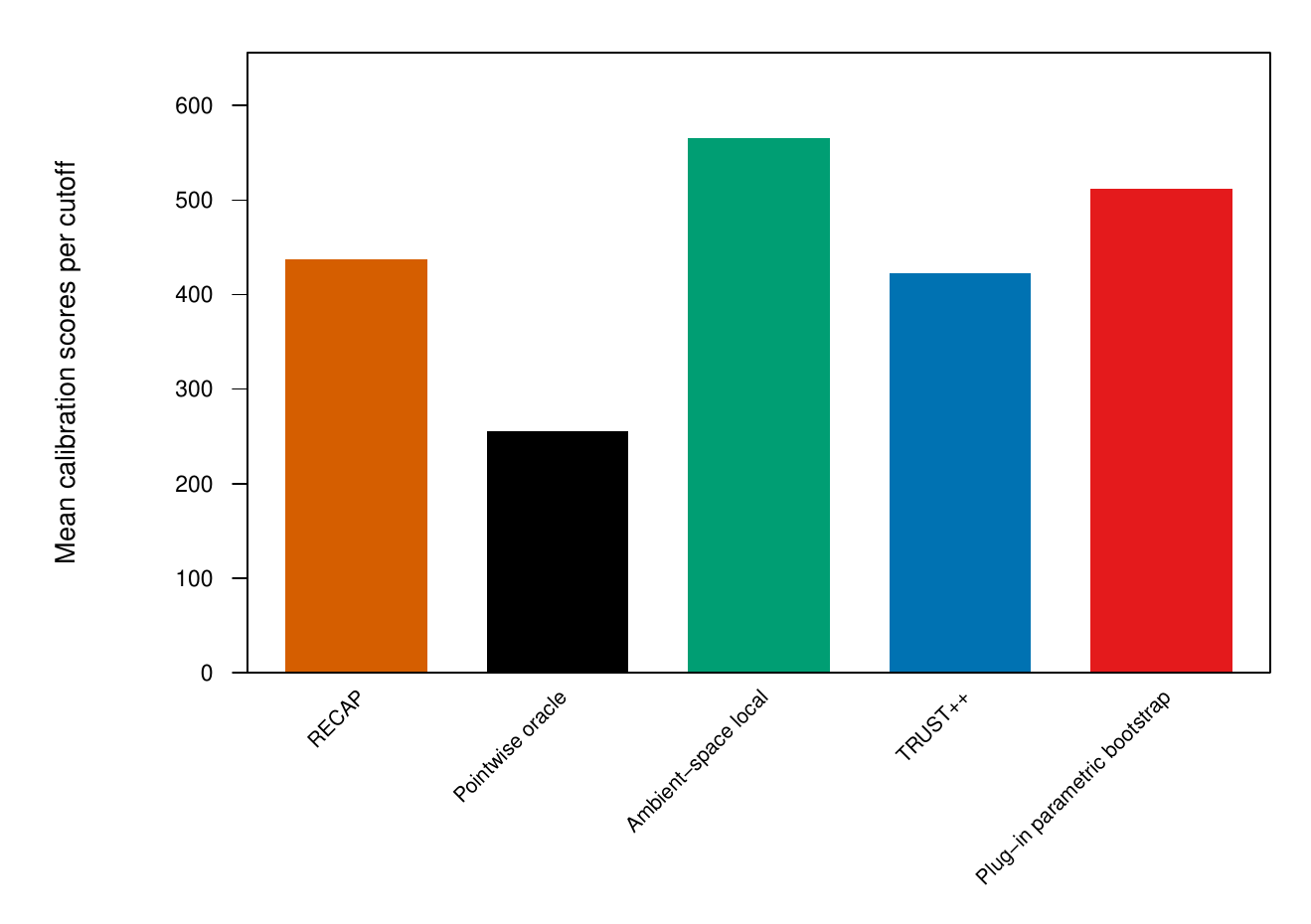}}
    \caption{Final-calibration costs and pool sizes in the coalescent comparison.}
    \label{fig:supp-real-cost}
\end{figure}

\subsection{Calibration validation: coverage and power}
\label{subsec:supp-real-validation}

We compare \texttt{RECAP} with ambient-space calibration, \texttt{TRUST++}, \texttt{CatBoost} quantile regression, the plug-in parametric bootstrap, and the pointwise oracle.
All methods use the same statistic.
Ambient-space calibration uses the same moving-ball rule as \texttt{RECAP}, but in the original 35-dimensional parameter space.
\texttt{TRUST++} fits a random forest to simulated scores, while \texttt{CatBoost} directly estimates their conditional quantile; both use the original parameters and are tuned on independent simulations.
The plug-in bootstrap calibrates at a nearest-neighbor \texttt{ABC} estimate \citep{Beaumont2002}, while the pointwise oracle simulates directly at each tested parameter.
The twelve evaluation targets are chosen to be well separated in the normalized 35-dimensional parameter space.
At each target, we estimate coverage using 144 independent observations.
For power, we test two false candidates whose expected summary statistics are well separated from those under the true parameter.

Figure~\ref{fig:real-validation} shows coverage and power, and Figure~\ref{fig:supp-real-mae} shows coverage mean absolute error (MAE) across the twelve targets.
\texttt{RECAP} stays close to the pointwise oracle and has the smallest coverage MAE among the nonoracle methods.
\texttt{TRUST++} and \texttt{CatBoost} have larger coverage errors at several targets, while the bootstrap shows substantial undercoverage.

\begin{figure}[t]
    \centering
    \includegraphics[width=0.8\textwidth]{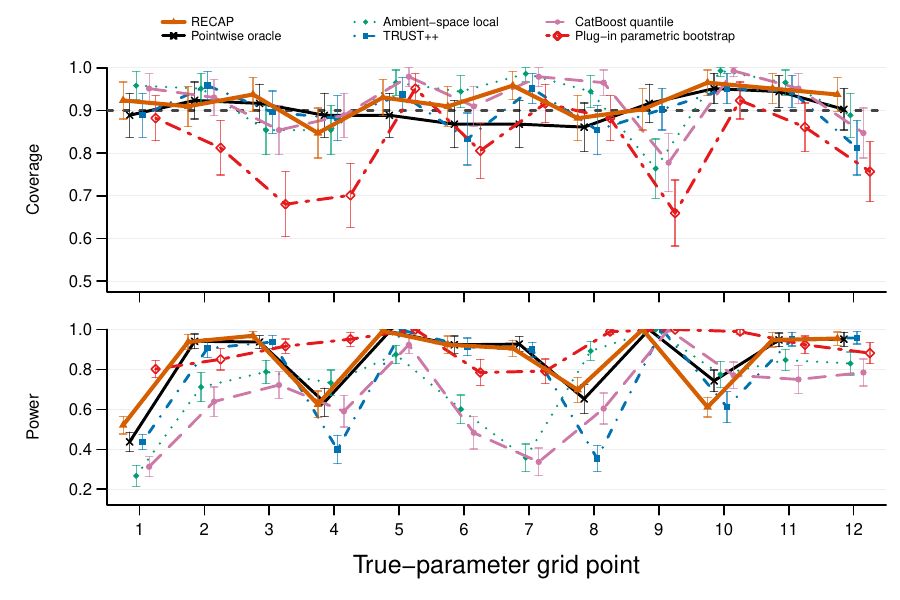}
    \caption{Coverage and power in the coalescent study at nominal coverage $0.90$. Each target uses 144 evaluation observations. Bars give pointwise 95\% Monte Carlo intervals.}
    \label{fig:real-validation}
\end{figure}

\begin{figure}[t]
    \centering
    \includegraphics[width=0.80\linewidth]{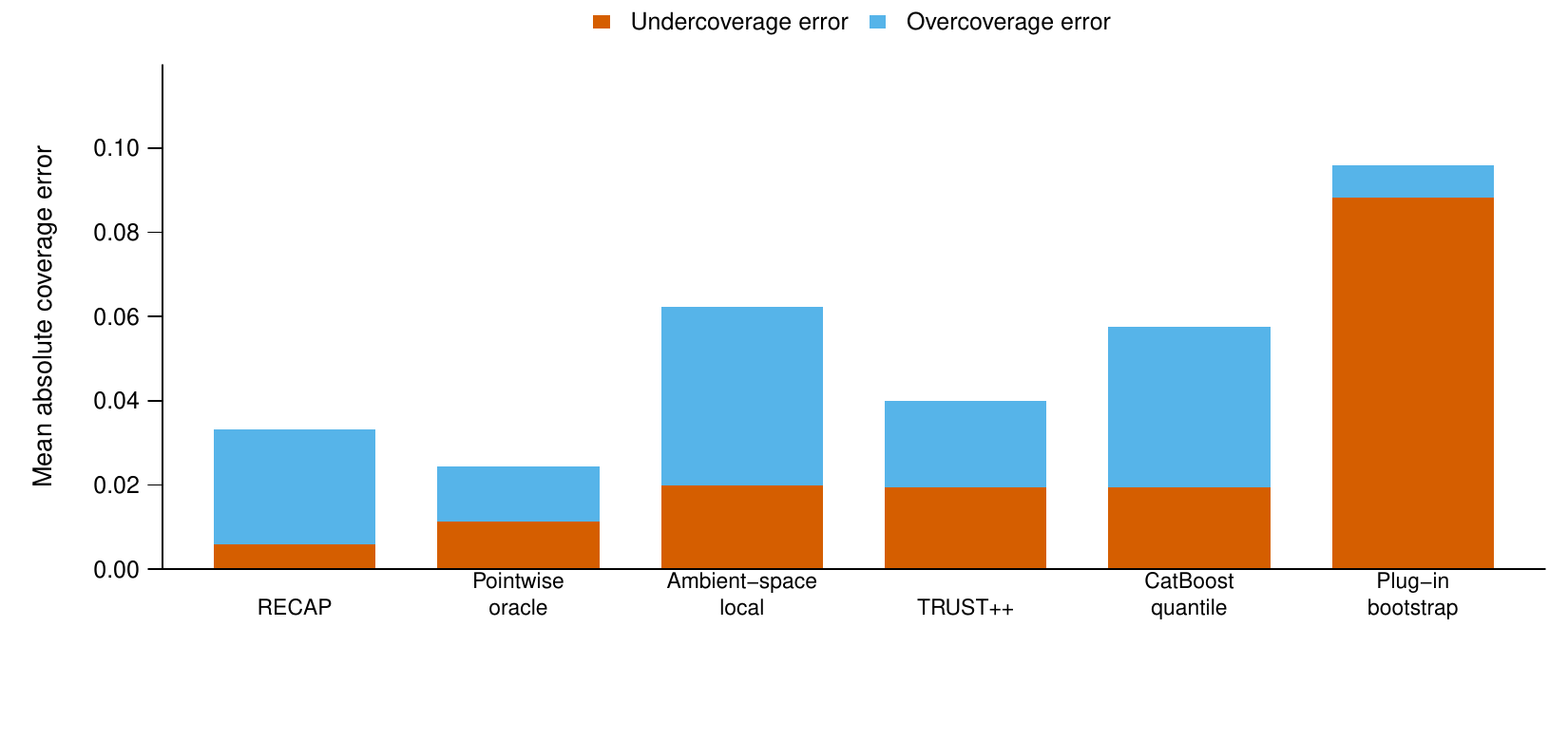}
    \caption{Coverage MAE across twelve targets, separated into undercoverage and overcoverage.}
    \label{fig:supp-real-mae}
\end{figure}

\subsection{Real-data preparation and selected model-adequacy checks}
\label{subsec:supp-real-data}

We restrict the analysis to unrelated individuals and genomic regions with high accessibility, using the same quality-control rules for both population pairs.
For inference, we select four 20-kb windows, one from each of chromosomes 1--4, based on genomic accessibility rather than genotype values.
The same windows are used for both population pairs (Table~\ref{tab:real-windows}).

\begin{table}[t]
    \centering
    \begin{tabular}{lrrr}
        \toprule
        Chromosome & Start & End & Accessible bases \\
        \midrule
        1 & 199,687,055 & 199,707,055 & 19,300 \\
        2 & 193,068,261 & 193,088,261 & 19,825 \\
        3 & 145,183,745 & 145,203,745 & 19,250 \\
        4 & 136,519,542 & 136,539,542 & 19,328 \\
        \bottomrule
    \end{tabular}
    \caption{GRCh38 inference windows with zero-based, half-open coordinates. Accessible-base counts include SNP quality control.}
    \label{tab:real-windows}
\end{table}

For model checking, we use separate fitting and held-out genomic blocks from chromosomes 1--8, with the same genomic masks applied to observed and simulated data.
For model checking, we use separate fitting and held-out genomic blocks from chromosomes 1--8, with the same genomic masks applied to observed and simulated data.

We fit the demographic parameters using summaries from 64 blocks and assess model fit on 64 held-out blocks.
The checks cover genetic differentiation and the marginal SFS.
We obtain approximate $p$-values using 256 parametric-bootstrap samples, refitting the model each time.
None of the four checks rejects at the 5\% level (Table~\ref{tab:real-model-fit}).
These checks provide some evidence of model fit, but do not test every aspect of the simulator.

\begin{table}[t]
    \centering
    \caption{Held-out model-check $p$-values from 256 refitted parametric-bootstrap samples per population pair.}
    \label{tab:real-model-fit}
    \begin{tabular}{lrr}
        \toprule
        Population pair & Differentiation & Marginal folded SFS \\
        \midrule
        ESN--GBR & $0.5564$ & $0.3191$ \\
        GWD--IBS & $0.1167$ & $0.4125$ \\
        \bottomrule
    \end{tabular}
\end{table}

\subsection{Numerical inversion and interpretation}
\label{subsec:supp-real-inversion}

We approximate the confidence sets by searching a Latin-hypercube sample of parameters and then refining around promising candidates.
The reported interval $I_g(X_0)$ in \eqref{eq:real-g-interval} uses the smallest and largest values of $g(\theta)$ among the accepted candidates found in this search.
The full parameter-box range of $g$ is $[-\sqrt3\|\widehat B\|_1,\sqrt3\|\widehat B\|_1]$, approximately $[-3.596,3.596]$.

For the two-dimensional slices in Figure~\ref{fig:real-application}, we vary split time and deep ancestral population size while fixing the other 33 normalized parameters at $0.5$.
The slices describe acceptance under this fixed background.

\end{document}